\PassOptionsToPackage{unicode}{hyperref}
\PassOptionsToPackage{hyphens}{url}
\PassOptionsToPackage{dvipsnames,svgnames,x11names}{xcolor}
\documentclass[12pt]{article}

\usepackage{amsmath,amssymb}
\usepackage{iftex}
\ifPDFTeX
  \usepackage[T1]{fontenc}
  \usepackage[utf8]{inputenc}
  \usepackage{textcomp} 
\else 
  \usepackage{unicode-math}
  \defaultfontfeatures{Scale=MatchLowercase}
  \defaultfontfeatures[\rmfamily]{Ligatures=TeX,Scale=1}
\fi
\usepackage{lmodern}
\ifPDFTeX\else  
\fi
\IfFileExists{upquote.sty}{\usepackage{upquote}}{}
\IfFileExists{microtype.sty}{
  \usepackage[]{microtype}
  \UseMicrotypeSet[protrusion]{basicmath} 
}{}
\makeatletter
\@ifundefined{KOMAClassName}{
  \IfFileExists{parskip.sty}{%
    \usepackage{parskip}
  }{
    \setlength{\parindent}{0pt}
    \setlength{\parskip}{6pt plus 2pt minus 1pt}}
}{
  \KOMAoptions{parskip=half}}
\makeatother
\usepackage{xcolor}
\makeatletter
\ifx\paragraph\undefined\else
  \let\oldparagraph\paragraph
  \renewcommand{\paragraph}{
    \@ifstar
      \xxxParagraphStar
      \xxxParagraphNoStar
  }
  \newcommand{\xxxParagraphStar}[1]{\oldparagraph*{#1}\mbox{}}
  \newcommand{\xxxParagraphNoStar}[1]{\oldparagraph{#1}\mbox{}}
\fi
\ifx\subparagraph\undefined\else
  \let\oldsubparagraph\subparagraph
  \renewcommand{\subparagraph}{
    \@ifstar
      \xxxSubParagraphStar
      \xxxSubParagraphNoStar
  }
  \newcommand{\xxxSubParagraphStar}[1]{\oldsubparagraph*{#1}\mbox{}}
  \newcommand{\xxxSubParagraphNoStar}[1]{\oldsubparagraph{#1}\mbox{}}
\fi
\makeatother

\usepackage{longtable,booktabs,array}
\usepackage{calc} 
\usepackage{etoolbox}
\makeatletter
\patchcmd\longtable{\par}{\if@noskipsec\mbox{}\fi\par}{}{}
\makeatother
\IfFileExists{footnotehyper.sty}{\usepackage{footnotehyper}}{\usepackage{footnote}}
\makesavenoteenv{longtable}
\usepackage{graphicx}
\makeatletter
\def\maxwidth{\ifdim\Gin@nat@width>\linewidth\linewidth\else\Gin@nat@width\fi}
\def\maxheight{\ifdim\Gin@nat@height>\textheight\textheight\else\Gin@nat@height\fi}
\makeatother
\setkeys{Gin}{width=\maxwidth,height=\maxheight,keepaspectratio}
\makeatletter
\def\fps@figure{htbp}
\makeatother

\makeatletter
\@ifpackageloaded{caption}{}{\usepackage{caption}}
\AtBeginDocument{%
\ifdefined\contentsname
  \renewcommand*\contentsname{Table of contents}
\else
  \newcommand\contentsname{Table of contents}
\fi
\ifdefined\listfigurename
  \renewcommand*\listfigurename{List of Figures}
\else
  \newcommand\listfigurename{List of Figures}
\fi
\ifdefined\listtablename
  \renewcommand*\listtablename{List of Tables}
\else
  \newcommand\listtablename{List of Tables}
\fi
\ifdefined\figurename
  \renewcommand*\figurename{Figure}
\else
  \newcommand\figurename{Figure}
\fi
\ifdefined\tablename
  \renewcommand*\tablename{Table}
\else
  \newcommand\tablename{Table}
\fi
}
\@ifpackageloaded{float}{}{\usepackage{float}}
\floatstyle{ruled}
\@ifundefined{c@chapter}{\newfloat{codelisting}{h}{lop}}{\newfloat{codelisting}{h}{lop}[chapter]}
\floatname{codelisting}{Listing}

\makeatother
\makeatletter
\@ifpackageloaded{caption}{}{\usepackage{caption}}
\@ifpackageloaded{subcaption}{}{\usepackage{subcaption}}
\makeatother

\ifLuaTeX
  \usepackage{selnolig}  
\fi
\usepackage[]{natbib}
\usepackage{bookmark}

\IfFileExists{xurl.sty}{\usepackage{xurl}}{} 
\hypersetup{
  pdftitle={Title},
  pdfauthor={Author 1; Author 2},
  pdfkeywords={3 to 6 keywords, that do not appear in the title},
  colorlinks=true,
  linkcolor={blue},
  filecolor={Maroon},
  citecolor={Blue},
  urlcolor={Blue},
  pdfcreator={LaTeX via pandoc}}

\newcommand{\anon}{1}

\usepackage{setspace}

\usepackage[margin=1in]{geometry}
\usepackage{graphicx}
\usepackage{amsmath, amssymb, amsfonts}
\usepackage{xr-hyper} 
\usepackage{hyperref}
\usepackage{url}
\usepackage{soul}
\usepackage{color}
\usepackage{cancel}
\usepackage{natbib}
\usepackage{float}
\usepackage[dvipsnames]{xcolor}

\usepackage[normalem]{ulem}
\newcommand{\stkout}[1]{\ifmmode\text{\sout{\ensuremath{#1}}}\else\sout{#1}\fi}

\usepackage{tikz}
\usepackage{tikz-qtree}
\usetikzlibrary{trees}
\usetikzlibrary{shapes.multipart}
\usetikzlibrary{automata,positioning,decorations.pathmorphing}

\usepackage{multirow}      
\usepackage{graphicx}      
\usepackage{array}         
\usepackage{hhline}        
\usepackage{colortbl}      
\usepackage{caption}       
\usepackage{arydshln}         
\newcommand{\thickline}{\noalign{\hrule height 0.8pt} } 

\usepackage{array}
\usepackage{caption}
\usepackage{graphicx}
\usepackage{siunitx}
\usepackage[normalem]{ulem}
\usepackage{colortbl}
\usepackage{multirow}
\usepackage{hhline}
\usepackage{calc}
\usepackage{tabularx}
\usepackage{threeparttable}
\usepackage{wrapfig}
\usepackage{adjustbox}
\usepackage{hyperref}

\allowdisplaybreaks

\DeclareMathOperator{\doo}{do}

\DeclareMathOperator{\pa}{pa}

\DeclareMathOperator{\de}{de}

\DeclareMathOperator{\ch}{ch}

\DeclareMathOperator{\nd}{nd}

\newtheorem{theorem}{Theorem}

\newtheorem{corollary}[theorem]{Corollary}

\newtheorem{example}{Example}

\newtheorem{algorithm}{Algorithm}

\newenvironment{proof}[1][Proof]{\noindent\textbf{#1.} }{\ \rule{0.5em}{0.5em}}

\usepackage{xr}
\usepackage{mathtools}
\usepackage{algorithm}
\usepackage[noend]{algpseudocode}
\usepackage{enumerate}
\usepackage{parskip}
\def\ci{\perp\!\!\!\perp}

\newcommand{\blue}{\textcolor{blue}}

\newcommand{\E}{\mathbb{E}} 
\newcommand{\G}{{\mathcal G}}
\newcommand{\I}{\mathbb{I}}
\newcommand{\T}{\mathbb{T}}
\newcommand{\F}{\mathbb{F}}
\newcommand{\R}{\mathcal{R}}
\newcommand{\C}{\mathcal{C}}
\newcommand{\N}{\mathcal{N}}
\newcommand{\calS}{\mathcal{S}}
\newcommand{\expit}{\mathrm{expit}}  

\usepackage{authblk}

\newcommand{\eqtagtext}[1]{%
  \tag*{\text{\small #1}\ (\theequation)}%
}

\usetikzlibrary{arrows.meta}
\tikzset{
    circled arrow/.style={
        {Circle[open]}-{Circle[open]},
    },
    circled arrow directed/.style={
        -{Circle[open]}-{Triangle[length=3mm, width=3mm]},
    },
     circled arrow onesided/.style={
        {Circle[open]}->,
    }
}

\begin{document}

\def\spacingset#1{\renewcommand{\baselinestretch}%
{#1}\small\normalsize} \spacingset{1}


\if1\anon
{
  \title{\bf Weighting-Based Identification and Estimation in Graphical Models of Missing Data}
  \author{Anna Guo and Razieh Nabi\hspace{.2cm}\\
    Department of Biostatistics and Bioinformatics, Emory University}
  \maketitle
} \fi

\if0\anon
{
  \bigskip
  \bigskip
  \bigskip
  \begin{center}
    {\LARGE\bf Weighting-Based Identification and Estimation in Graphical Models of Missing Data}
\end{center}
  \medskip
} \fi

\bigskip
\begin{abstract}
We propose a constructive algorithm for identifying complete data distributions in graphical models of missing data. The complete data distribution is unrestricted, while the missingness mechanism is assumed to factorize according to a conditional directed acyclic graph. Our approach follows an interventionist perspective in which missingness indicators are treated as variables that can be intervened on.
A central challenge in this setting is that sequences of interventions on missingness indicators may induce and propagate selection bias, so that identification can fail even when a propensity score is invariant to available interventions.  To address this challenge, we introduce a tree-based identification algorithm that explicitly tracks the creation and propagation of selection bias and determines whether it can be avoided through admissible intervention strategies. The resulting tree provides both a diagnostic and a constructive characterization of identifiability under a given missingness mechanism. 
Building on these results, we develop recursive inverse probability weighting procedures that mirror the intervention logic of the identification algorithm, yielding valid estimating equations for both the missingness mechanism and functionals of the complete data distribution. Simulation studies and a real-data application illustrate the practical performance of the proposed methods. 
An accompanying \texttt{R} package, \href{https://github.com/annaguo-bios/flexMissing}{\texttt{flexMissing}}, implements all proposed procedures.
\end{abstract}

\noindent%
{\it Keywords:} Missing not at random, selection bias, missing data DAGs, causal inference, inverse probability weighting, estimating equations. 
\vfill

\newpage
\spacingset{1.8} 

\section{Introduction}
\label{sec:intro}

Data analysis across scientific disciplines is frequently complicated by systematically missing observations. While missingness is often assumed to be missing-completely-at-random (MCAR) or missing-at-random (MAR), these assumptions are frequently violated in practice, as the probability of missingness may depend on partially observed or unobserved variables. Such missing-not-at-random (MNAR) mechanisms can lead to biased inference if not properly accounted for \citep{rubin76inference, little2002statistical}.

Despite their prevalence, MNAR mechanisms are difficult to handle because the underlying complete data distribution cannot, in general, be expressed as a function of the observed data distribution. A canonical example is self-censoring or self-masking, where a variable directly influences its own probability of being missing, rendering the target distribution non-identifiable without further assumptions. In non-identified settings, common approaches include imposing parametric or semiparametric structures \citep{wu1988estimation, little2002statistical, wang2014instrumental, sun2018semiparametric, sportisse2020imputation, guo2023sufficient}, conducting sensitivity analyses \citep{rotnitzky1998semiparametric, scharfstein2003generalized, nabi2024semiparametric}, or deriving partial identification bounds \citep{horowitz2000nonparametric, manski2005partial}. 

At the same time, a growing body of work has identified MNAR mechanisms by imposing restrictions on the missingness selection mechanism. Examples include the permutation model \citep{robins97non-a}, the block-conditional MAR model \citep{zhou10block}, itemwise conditionally independent nonresponse and no self-censoring models \citep{sadinle2017itemwise,shpitser16consistent}, and discrete choice models \citep{tchetgen2018discrete}. These models differ in the specific restrictions imposed on the missingness mechanism, while the corresponding identification arguments are agnostic to assumptions on the complete data distribution.

A recent line of work studies missing data models by drawing parallels with causal graphical models with hidden variables \citep{tian02general, shpitser06id, bhattacharya2022semiparametric, richardson2023nested}. In this framework, a directed acyclic graph encodes assumptions about the full law, including both the complete data law, referred to as the target law, and the missingness mechanism \citep{glymour2006using, daniel2012using, martel2013definition, mohan13missing, thoemmes2014cautious, shpitser15missing, bhattacharya19mid, nabi20completeness, mohan2021graphical, nabi2023causal}. Identification is then formulated from an interventionist perspective, in which missingness indicators are treated as intervention nodes. Target law identification then is expressed as a sequence of reweighting operations that adjust for selection bias induced by conditioning on observed cases. 

\cite{bhattacharya19mid} showed that causal identification strategies are insufficient for missing data models, because they do not adequately account for selection bias induced by interventions. They instead characterized identification via admissible sequences of indicator interventions defined by a partial order, rather than the total order common in causal settings. While this framework clarifies key identification requirements and structural barriers due to selection bias, it offers no constructive method for assessing identifiability in a given model or for developing estimation and inference procedures.

In this paper, we develop a general and tractable framework for identification, estimation, and inference in a broad class of missing data models. We impose no restrictions on the target law and instead model the missingness mechanism using a conditional directed acyclic graph. Our approach yields a constructive identification strategy that explicitly accounts for selection bias induced by conditioning on observed cases and clarifies when such bias can be avoided through admissible intervention strategies on missingness indicators.

Our central contribution is a tree-based identification algorithm that tracks the creation and propagation of selection bias across sequences of interventions on missingness indicators. Rather than returning a binary identifiability verdict, the algorithm produces explicit identification rules for propensity scores, potentially on restricted evaluation sets, and determines whether these rules jointly suffice to identify the target distribution. When target law identification is not achievable, the same tree structure can be queried to specify which subsets of propensity scores, together with their required evaluations, are sufficient to identify and estimate particular functionals of the target law.  

Building on these identification results, we develop general estimation and inference procedures that mirror the intervention logic of the identification trees. For functionals identified by the identifiable components of the missingness mechanism, we propose recursive inverse probability weighting estimators that respect the same admissibility constraints  required for identification and yield valid estimating equations for both the missingness mechanism and target functionals. We establish large-sample properties and demonstrate practical performance through a real-data application and simulation studies, including empirical comparisons with classical approaches such as the EM algorithm and multiple imputation. 

The paper is organized as follows. Section~\ref{sec:prelim} introduces notation and the graphical framework for missing data models. Section~\ref{sec:hoops} provides intuition for identification through examples. The identification algorithm is presented in Section~\ref{sec:identification}, followed by weighting based estimation and inference in Section~\ref{sec:estimation}. Simulation studies appear in Section~\ref{sec:sims}, and Section~\ref{sec:application} presents an application to survey data. Section~\ref{sec:conc} concludes, with proofs in the appendix.

\section{Problem Setup and Notation}
\label{sec:prelim}

We begin by introducing notation and  missing data assumptions. We omit variables that are always observed; all results extend directly to settings with fully observed covariates. 

Let $X = (X_1,\ldots,X_K)^T$ be a random vector with joint distribution $p(X)$, termed the \textit{target law}, belonging to a model $\mathcal{M}_X$. Let $R = (R_1,\ldots,R_K)^T$ denote binary missingness indicators, where $R_k = 1$ if $X_k$ is observed and $R_k = 0$ if $X_k$ is missing, with conditional distribution $p(R \,|\, X)$, termed the \textit{missingness mechanism}, in model $\mathcal{M}_{R \,|\, X}$. The joint distribution $p(X,R)$, termed the \textit{full law}, lies in the product model $\mathcal{M} = \mathcal{M}_X \otimes \mathcal{M}_{R \,|\, X}$. The observed data are a coarsened version of $X$: define $X^* = (X_1^*,\ldots,X_K^*)^T$ by $X_k^* = X_k$ if $R_k = 1$ and $X_k^* = \text{``?''}$ otherwise. The distribution $p(X^*,R)$ is called the \textit{observed data law}. Following \citep{nabi2023causal}, we view each $X_k$ as the counterfactual value of $X_k^*$ had it been fully observed, or equivalently under the intervention $R_k = 1$. 

We consider missing data models $\mathcal{M} = \mathcal{M}_X \otimes \mathcal{M}_{R \,|\, X}$, where the target model $\mathcal{M}_X$ is unrestricted and the missingness mechanism $\mathcal{M}_{R \,|\, X}$ is constrained by graphical assumptions. In particular, we assume $p(R \,|\, X)$ factorizes according to a conditional directed acyclic graph (DAG) $\mathcal{G}$, that is, $p(R \,|\, X) = \prod_{R_k \in R} p(R_k \,|\, \pa_{\mathcal{G}}(R_k))$, where $\pa_{\mathcal{G}}(R_k)$ denotes the parents of $R_k$ in $\mathcal{G}$. No graphical assumptions are imposed on the target law $p(X)$; $\mathcal{M}_X$ may correspond to a hidden-variable DAG, a Markov random field, or any other model that need not admit a graphical representation. Thus, we use $\G$ solely to encode restrictions on $p(R \,|\, X)$.

The DAG $\mathcal{G}$ follows standard missing data DAG (mDAG) conventions: it is acyclic and contains no edges from missingness indicators $R$ to variables in $X$ \citep{mohan13missing}. For each $R_k \in R$, let $\de_\G(R_k)$ denote its descendants in $\mathcal{G}$, including $R_k$, and define the non-descendants as $\nd_\G(R_k) = X \cup R \setminus \de_\G(R_k)$. Standard graphical d-separation rules apply \citep{pearl09causality}. All mDAGs in the remainder of the paper are assumed to satisfy these conventions unless stated otherwise.

This paper focuses on identification and inference in graphical models of missing data, without committing to a specific choice of the functional of the target law. We analyze identifiability by characterizing which components of the missingness mechanism are identifiable from the observed data law $p(X^*, R)$ under $\mathcal M$. In some settings this yields identification of the target law $p(X)$, while in others it suffices to identify particular functionals, denoted by $\theta(p(X))$. We also outline how such identification can be used to conduct inference on $\theta$.

The central role of the missingness mechanism follows from the identity $p(X) = p(X, R = 1) \, / \, p(R = 1 \,|\, X)$, which shows that recovery of the target law, when possible, depends on identification of $p(R = 1 \,|\, X)$. This representation makes explicit the connection between identification in missing data models and inverse probability weighting, in which the target law is expressed as a reweighted version of the complete-case distribution. 

Under the graphical restriction on the missingness mechanism, the conditional distribution $p(R = 1 \,|\, X)$ factorizes into a product of conditional probabilities 

\vspace{-1.35cm}
\begin{align}
   p(X) = \displaystyle p(X, R=1) \, \big/ \, \big\{ \displaystyle \prod_{R_k \in R}p(R_k \mid \pa_\G(R_k)) \Big{\rvert}_{R=1} \big\}, 
   \label{eq:Bayes}
\end{align}%
\vspace{-1.35cm}

where $p(.)\vert_{R_j=1}$ denotes the evaluation of $p(.)$ at $R_j=1$. We define the propensity score of $R_k$ as $\pi_k(\pa_\G(R_k)) = p(R_k=1 \,|\, \pa_\G(R_k))$. Our identification strategy centers on when these scores are identifiable and how they enable recovery of the target law or functionals thereof. 

We assume $\pi_k(\pa_\G(R_k)) > \sigma > 0$ a.s. for a fixed positive constant $\sigma$, and for all $R_k \in R$. This condition ensures nonparametric identification of the target law and its smooth functionals, and finite asymptotic variance of the proposed weighting estimators \citep{robins2000sensitivity}.

\section{Building Blocks for Propensity Score Identification}
\label{sec:hoops}

Identification of the propensity score $\pi_k$ requires expressing this conditional probability as a unique functional of the observed data law $p(X^*, R)$. The main difficulty arises when $\pa_\G(R_k)$ contains variables in $X$ that are subject to missingness.

When a parent $X_j \in \pa_\G(R_k)$ is subject to missingness, any representation of $\pi_k$ in terms of observed data must replace $X_j$ by its observed counterpart on rows where $R_j = 1$, thereby introducing selection through the corresponding missingness indicators. To keep track of this bookkeeping requirement, for each $R_k \in R$ we define the \emph{counterfactual-induced selection set}

\vspace{-1.5cm}
\refstepcounter{equation}\label{eq:counterfactual_selection_set}
\begin{align}
   \mathcal S_k^x = \{ R_j \in R : X_j \in \pa_\G(R_k) \}. \eqtagtext{(counterfactual-induced selection set for $R_k$)}
\end{align}%
\vspace{-1.5cm}

The set $\mathcal S_k^x$ records indicators that become relevant solely because counterfactual variables appear among the parents of $R_k$. Its role is descriptive, not prescriptive: the presence of $R_j \in \mathcal S_k^x$ indicates that substituting $X_j$ with its observed counterpart is unavoidable when forming candidate representations of $\pi_k$, but does not determine whether such substitutions are valid for identification. Whether this substitution can be justified depends on the assumptions encoded in the missingness mechanism.

For identification of the target law, it is not necessary that $\pi_k$ be identifiable as a full conditional distribution. It suffices that $\pi_k$ be identifiable at the evaluation under which all missingness indicators equal one, as required by \eqref{eq:Bayes}. We therefore define the \emph{indicator-induced selection set} $S_k^r$ as the set of indicators whose evaluation at one is required in order for $\pi_k$ to be identified at this evaluation. By construction, $\mathcal S_k^r \subseteq R \cap \pa_\G(R_k)$.

The sets $\mathcal S_k^x$ and $\mathcal S_k^r$ capture conceptually distinct, though potentially overlapping, sources of selection in identifying $\pi_k$. The former records indicators implicated by counterfactual substitution due to missing parents of $R_k$, while the latter records indicators whose evaluation at one is required for $\pi_k$ to be usable in recovering the target law. These sets play a common role in identification; both determine which missingness indicators may induce selection in candidate representations of $\pi_k$. It is therefore convenient to collect them into a single object that summarizes all selection relevant for identifying the propensity score of $R_k$. For each $R_k \in R$, we thus define the overall \emph{selection set}

\vspace{-1.75cm}
\refstepcounter{equation}\label{eq:selection_set}
\begin{align}
\mathcal S_k = \mathcal S_k^x \cup \mathcal S_k^r.  \eqtagtext{(selection set for $R_k$)}
\end{align}%
\vspace{-1.75cm}

A central obstacle in identifying $\pi_k$ arises when some indicators in $\mathcal S_k^x$ are descendants of $R_k$. Presence of such indicators obstructs identification through purely associational arguments. For each $R_k \in R$, we collect these into the \emph{problematic set}

\vspace{-1.75cm}
\refstepcounter{equation}\label{eq:R_problem}
\begin{align}
\mathcal R_k^p = \mathcal S_k^x \cap \de_\G(R_k). \eqtagtext{(problematic set for $R_k$)}
\end{align}%
\vspace{-1.75cm}

\subsection{Associational and causal irrelevancy}
\label{subsec:assoc}

Our identification arguments rely on two distinct notions of irrelevancy. The first is \emph{associational irrelevancy}, which follows from the local Markov property of mDAGs. In particular, for each $R_k \in R$, $R_k \ci \nd_\G(R_k) \setminus \pa_\G(R_k) \,|\, \pa_\G(R_k)$.

\begin{example}(Associational irrelevancy)
Consider an mDAG in which $X_2 \rightarrow R_1$ and $X_1 \rightarrow R_2$, with no edges between $R_1$ and $R_2$. Here, $\pi_1(\pa_\G(R_1)) = p(R_1 = 1 \,|\, X_2)$ and the counterfactual-induced selection set is $S_1^x = \{R_2\}$. Since $R_2$ is a non-descendant non-parent of $R_1$, the local Markov property implies $R_1 \ci R_2 \,|\, X_2$. Consequently, $\pi_1$ is fully identified via $p(R_1 = 1 \,|\, X_2, R_2 = 1)$, and $S_1^r = \emptyset$. An analogous argument applies to identification of $\pi_2$.
\end{example}

More generally, associational irrelevancy allows us to append $R_j = 1$ to the conditioning set of $\pi_k$ whenever $R_j \in \mathcal S_k^x \cap \{\nd_\G(R_k) \setminus \pa_\G(R_k)\}$, yielding

\vspace{-1.5cm}
\begin{align}
	\pi_k(\pa_\G(R_k)) \coloneqq p(R_k = 1 \mid \pa_\G(R_k))
	= p(R_k = 1 \mid \pa_\G(R_k), R_j = 1).
	\label{eq:id_association}
\end{align}%
\vspace{-1.5cm}

In contrast, when $R_j \in \mathcal S_k^x \cap \pa_\G(R_k)$, the propensity score $\pi_k$ is identifiable only when $R_j = 1$, and thus $R_j$ necessarily belongs to the indicator-induced selection set $\mathcal S_k^r$. Such parent indicators form a collider structure $X_j \rightarrow R_k \leftarrow R_j$, termed a \emph{colluder} \citep{bhattacharya19mid}. Although $\pi_k$ is not fully identifiable in this case, the target law may still be identifiable, since only $\pi_k$ evaluated at $R = 1$ is required for target law identification.  

Associational irrelevancy implies that if the problematic set $\mathcal R_k^p$ is empty, all selection relevant for identifying $\pi_k$ arises from parent indicators, so $\pi_k$ evaluated at $\mathcal S_k^r = 1$ is identifiable. If instead $\mathcal R_k^p \neq \emptyset$, associational irrelevancy fails for indicators in $\mathcal S_k^x \cap \de_\G(R_k)$, since $R_k \not\ci \de_\G(R_k) \,|\, \pa_\G(R_k)$. However, identification may still be achieved via \emph{causal irrelevancy}. 

The key observation underlying causal irrelevancy is that the propensity score $\pi_k$ is invariant to interventions on missingness indicators other than $R_k$ itself. This invariance property, also known as autonomy, modularity, and stability \citep{haavelmo1944probability, spirtes01causation, dawid2010identifying, pearl09causality}, allows us to search for post-intervention distributions in which problematic descendant relationships with $R_k$ are broken. 

We define an intervention on $R_j \in R$ as an operation that fixes $R_j=1$. At the graphical level, this removes all incoming edges into $R_j$ and replaces $X_j$ by its observed counterpart, since $R_j = 1$. At the probabilistic level, this corresponds to a truncated factorization of the full law in which $p(X, R)$ is first evaluated at $R_j = 1$ and then renormalized by the propensity score of $R_j$, yielding a post-intervention distribution denoted by $p(X, R \setminus R_j \,|\, \doo(R_j=1))$. Section~\ref{sec:identification} gives the explicit form of this operator and its use in constructing identifying functionals.

According to causal irrelevancy, for any $R^* \subseteq R \setminus \{R_k\}$,

\vspace{-1.65cm}
\begin{align}
\pi_k(\pa_\G(R_k)) \coloneqq p(R_k = 1 \mid \pa_\G(R_k))
= p(R_k = 1 \mid \pa_\G(R_k), \doo(R^* = 1)),
\label{eq:id_causal}
\end{align}%
\vspace{-1.5cm}

where $\doo(R^* = 1)$ denotes an intervention that sets all indicators in $R^*$ to one. Identification of $\pi_k$ via causal irrelevancy proceeds by intervening on indicators in the problematic set $\mathcal R_k^p$ in order to sever descendant relationships with $R_k$.

\begin{example}(Causal irrelevancy)
Suppose $X_1 \rightarrow R_2$ and $R_2 \rightarrow R_1$, so that $S_2^x = \{R_1\}$ and $R_1$ is a descendant of $R_2$. Associational irrelevancy fails, but since $\pi_2$ is invariant to interventions on $R_1$ we can write $\pi_2$ as $p(R_2 = 1 \,|\, X_1) = p(R_2 = 1 \,|\, X_1, \doo(R_1 = 1))$. The post-intervention distribution induced by $\doo(R_1=1)$ is $p(X_1, X_2, R_2 \,|\, \doo(R_1=1)) \coloneqq p(X_1, X_2, R_1=1, R_2)/p(R_1 = 1 \,|\, R_2)$. Using simple probability rules, $\pi_{2}$ is thus identified from the identified margin $p(X_1, R_2 \,|\, \doo(R_1=1)) = p(X_1, R_1=1, R_2)/p(R_1 = 1 \,|\, R_2)$.  
\end{example}

\subsection{Selection bias and propagation}
\label{subsec:selection_bias}

Identification via causal irrelevancy hinges on the ability to intervene on indicators in the problematic set $\mathcal R_k^p$ without inducing selection that obstructs identification of $\pi_k$. While such interventions break descendant relationships with $R_k$, each intervention may itself induce selection through the selection set of the intervened indicator. Selection on non-descendants of $R_k$ is typically benign: if such indicators lie in $\pa_\G(R_k)$, they can be incorporated into the indicator-induced selection set $\mathcal S_k^r$, while selection on non-parent non-descendants is independent of $R_k$ by the local Markov property. Difficulties arise when an intervention induces selection on $R_k$ or its descendants, as such selection may render $\pi_k$ unidentifiable, and in some settings this induced selection can be accommodated by subsequent interventions, while in others it is unavoidable and obstructs identification altogether.

When multiple interventions are performed sequentially, selection induced by earlier interventions may propagate to later ones. Specifically, if $R_j$ is intervened on prior to an intervention on $R_k$, the portion of the selection induced by $\doo(R_j=1)$ that propagates through $R_k$ is 

\vspace{-1.65cm}
\refstepcounter{equation}\label{eq:propa}
\begin{align}
    \mathcal S_{j \, \downarrow \, k} \coloneqq \mathcal S_j \cap \pa_\G(R_k), \eqtagtext{(selection propagation rule)}
\end{align}
\vspace{-1.75cm}

which we refer to as the \emph{selection propagation rule}. Propagation of selection through parent relationships can obstruct identification, while absence of such propagation permits sequential identification, as we illustrate via examples below. 

\begin{example}(Admissible and inadmissible descendant interventions)
In Figure~\ref{fig:seq_select}(a), identification of $\pi_3(X_1) \coloneqq p(R_3=1 \,|\, X_1)$ requires addressing the problematic set $\mathcal R_3^p=\{R_1\}$. Although $\pi_3$ is invariant to interventions on either $R_1$ or $R_2$, intervening on $R_1$ induces selection on $R_3$, since $\mathcal S_1=\{R_3\}$, yielding a post-intervention distribution available only at $R_3=1$ and hence insufficient for identification. In contrast, intervening on $R_2$ induces no selection on $R_3$, as $\mathcal S_2=\emptyset$, and the resulting post-intervention distribution admits a margin that identifies $\pi_3$. This example illustrates that although multiple descendant interventions may leave $\pi_k$ invariant, only a subset yield admissible post-intervention distributions for identification. 
If $X_3 \to R_1$ is replaced by $X_2 \to R_1$, then intervening on $R_1$ induces selection on $R_2$, which cannot be resolved by conditional independence alone. Identification of $\pi_{3}$ therefore requires intervening on $R_2$, either alone or jointly with $R_1$.
\end{example}

\begin{figure*}[t]
	\begin{center}
		\scalebox{0.7}{
            \begin{tikzpicture}[>=stealth, node distance=1.2cm, 
				decoration = {snake, pre length=3pt,post length=7pt,},
				every text node part/.style={align=center}]
				\tikzstyle{format} = [thick, circle, minimum size=1.0mm, inner sep=0pt]
                \begin{scope}[xshift=0cm]
					\path[->, thick]
					node[format] (x11) {$X_1$}
					node[format, right of=x11, xshift=0.5cm] (x21) {$X_2$}
					node[format, right of=x21, xshift=0.5cm] (x31) {$X_3$}
					node[format, below of=x11] (r1) {$R_1$}
					node[format, below of=x21] (r2) {$R_2$}
					node[format, below of=x31] (r3) {$R_3$}
					
					(x11) edge[blue, dashed, -, bend left=0] (x21)
					(x21) edge[blue, dashed, -, bend left=0] (x31)
					(x11) edge[blue, dashed, -, bend left=20] (x31)

					(x11) edge[blue] (r3)
                    (x31) edge[blue] (r1)
					
					(r2) edge[blue] (r1)
					(r3) edge[blue] (r2)

                    node[format, below of=r2, xshift=.cm, yshift=0.2cm] (a) {{\small $(a)$}}  ;
				\end{scope}
                
				\begin{scope}[xshift=4.75cm]
					\path[->, thick]
					node[format] (x11) {$X_1$}
					node[format, right of=x11, xshift=0.5cm] (x21) {$X_2$}
					node[format, right of=x21, xshift=0.5cm] (x31) {$X_3$}
					node[format, right of=x31, xshift=0.5cm] (x41) {$X_4$}
					node[format, below of=x11] (r1) {$R_1$}
					node[format, below of=x21] (r2) {$R_2$}
					node[format, below of=x31] (r3) {$R_3$}
					node[format, below of=x41] (r4) {$R_4$}
					
					(x11) edge[blue, dashed, -, bend left=0] (x21)
					(x21) edge[blue, dashed, -, bend left=0] (x31)
					(x31) edge[blue, dashed, -, bend left=0] (x41)
					(x11) edge[blue, dashed, -, bend left=20] (x31)
					(x11) edge[blue, dashed, -, bend left=30] (x41)
					(x21) edge[blue, dashed, -, bend left=20] (x41)
					
					(x11) edge[blue] (r2)
					(x11) edge[blue] (r4)
					(x21) edge[blue] (r4)
                    (x41) edge[blue] (r1)
                    (x41) edge[blue] (r3)
					
					(r2) edge[blue] (r1)
					(r3) edge[blue] (r2)
					(r4) edge[blue] (r3)
					
					
					
                    node[format, below of=r2, xshift=1.cm, yshift=0.2cm] (b) {{\small $(b)$}}  ;
				\end{scope}
				
				\begin{scope}[xshift=11cm]
					\path[->, thick]
					node[format] (x11) {$X_1$}
					node[format, right of=x11, xshift=0.5cm] (x21) {$X_2$}
					node[format, right of=x21, xshift=0.5cm] (x31) {$X_3$}
					node[format, right of=x31, xshift=0.5cm] (x41) {$X_4$}
					node[format, below of=x11] (r1) {$R_1$}
					node[format, below of=x21] (r2) {$R_2$}
					node[format, below of=x31] (r3) {$R_3$}
					node[format, below of=x41] (r4) {$R_4$}
					
					(x11) edge[blue, dashed, -, bend left=0] (x21)
					(x21) edge[blue, dashed, -, bend left=0] (x31)
					(x31) edge[blue, dashed, -, bend left=0] (x41)
					(x11) edge[blue, dashed, -, bend left=20] (x31)
					(x11) edge[blue, dashed, -, bend left=30] (x41)
					(x21) edge[blue, dashed, -, bend left=20] (x41)
					
					(x11) edge[blue] (r2)
					(x11) edge[blue] (r4)
					(x21) edge[blue] (r4)
                    (x41) edge[blue] (r1)
                    (x41) edge[blue] (r3)
					
					(r2) edge[blue] (r1)
					(r3) edge[blue] (r2)
					(r4) edge[blue] (r3)
                    (r4) edge[blue, bend left=20] (r2)
					
					
					
                    node[format, below of=r2, xshift=1.cm, yshift=0.2cm] (c) {{\small $(c)$}}  ;
				\end{scope}

                \begin{scope}[xshift=17.5cm, yshift=0cm]
					\path[->, thick]
					node[format] (x11) {$X_1$}
					node[format, right of=x11, xshift=0.5cm] (x21) {$X_2$}
					node[format, right of=x21, xshift=0.5cm] (x31) {$X_3$}
					node[format, right of=x31, xshift=0.5cm] (x41) {$X_4$}
					node[format, below of=x11] (r1) {$R_1$}
					node[format, below of=x21] (r2) {$R_2$}
					node[format, below of=x31] (r3) {$R_3$}
					node[format, below of=x41] (r4) {$R_4$}
					
					(r4) edge[blue] (r3)
					(r3) edge[blue] (r2)
					(r2) edge[blue] (r1)
					(r4) edge[blue, bend left=20] (r1)
					(x21) edge[blue] (r4)
					(x11) edge[blue] (r3)
					(x41) edge[blue] (r2)
					
					(x11) edge[blue, dashed, -] (x21)
					(x21) edge[blue, dashed, -] (x31)
					(x31) edge[blue, dashed, -] (x41)
					(x11) edge[blue, dashed, -, bend left=20] (x31)
					(x11) edge[blue, dashed, -, bend left=30] (x41)
					(x21) edge[blue, dashed, -, bend left=20] (x41)
					
					
					node[format, below of=r2, xshift=1.cm, yshift=0.2cm] (d) {{\small $(d)$}}  ;
				\end{scope}

			\end{tikzpicture}
		}
	\end{center}
	\caption{mDAGs illustrating selection behavior under interventions: (a) (in)admissible  interventions; (b) non-propagating selection; (c) propagating selection; (d) identification may require interventions on descendants outside the causal path between $R_k$ and $R_j \in \mathcal R_k^p$. }
	\label{fig:seq_select}
\end{figure*}
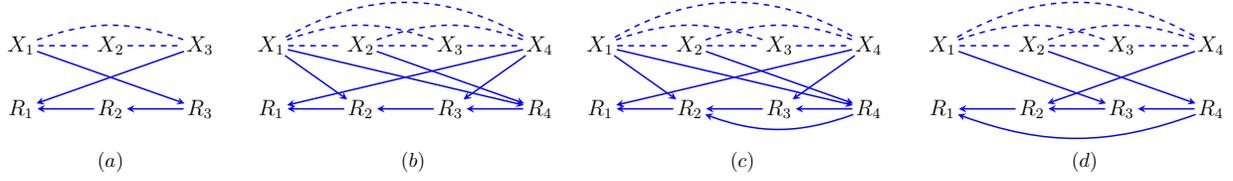

The above example shows that selection from an inadmissible intervention can sometimes be repaired by alternative or additional interventions, though not always. 

\begin{example}(Unavoidable selection)
Consider an mDAG with $X_1 \rightarrow R_2$, $X_2 \rightarrow R_1$, and $R_2 \rightarrow R_1$. Identifying $\pi_2(X_1) \coloneqq p(R_2 = 1 \,|\, X_1)$ involves the problematic set $\mathcal R_2^p = \{R_1\}$. Although $\pi_2$ is invariant to intervening on $R_1$, such an intervention induces selection on $R_2$ ($\mathcal S_1 = \{R_2\}$), leaving no post-intervention distribution that identifies $\pi_2$. This unavoidable selection bias blocks identification, as shown in \citep{nabi2022testability, guo2023sufficient}.
\end{example} 

Selection induced by one intervention may also constrain subsequent intervention choices through propagation. Whether such propagation occurs depends on the graphical structure.

\begin{example}(Sequential interventions and propagation)
In Figure~\ref{fig:seq_select}(b), $\pi_2$ is identified by intervening on $R_1$, which induces selection on $R_4$. This selection does not obstruct identification of $\pi_2$, since $R_4$ is a non-descendant of $R_2$. Crucially, this induced selection does not propagate through the subsequent intervention on $R_2$, as $\mathcal S_{1 \, \downarrow \, 2} = \emptyset$. As a result, intervening on $R_2$ yields an admissible post-intervention distribution from which $\pi_4$ is identified. In contrast, in Figure~\ref{fig:seq_select}(c), an intervention on $R_1$ again induces selection on $R_4$, but now $R_4 \in \pa_\G(R_2)$. Consequently, selection propagates according to $\mathcal S_{1 \, \downarrow \, 2} = \{R_4\}$. The subsequent intervention on $R_2$ then induces selection on the target indicator itself, and identification of $\pi_4$ fails.
\end{example} 

Identification may require intervening on descendants of $R_k$ that lie outside causal paths from $R_k$ to $\mathcal R_k^p$. Moreover, interventions used to identify one propensity score may constrain others due to induced and propagating selection.

\begin{example}(Interventions beyond causal pathways)
\label{ex:beyond_paths}
In Figure~\ref{fig:seq_select}(d), $\pi_3$ admits multiple admissible strategies. Intervening on $R_1$ alone separates $R_3$ from its problematic set and identifies $\pi_3$ without inducing selection on $R_4$. Intervening on $R_2$ alone or together with $R_2$ also identifies $\pi_3$, but only at $\mathcal S_3^r=\{R_4\}$, which remains
sufficient for target law identification but introduces additional selection. 
For $\pi_4$, $\mathcal R_4^p=\{R_2\}$ and $\pi_4$ is invariant to interventions on $R_3$. However, intervening on $R_3$ alone induces selection on $R_1$, which remains a descendant of $R_4$ in $p(\cdot \,|\, \doo(R_3=1))$. Thus identifying $\pi_4$ requires intervention on both $R_1$ and $R_3$. Consequently, $\pi_3$ must be identified via intervention only on $R_1$; intervening on $R_2$ instead induces selection on $R_4$ that propagates through $R_3$ and obstructs identification of $\pi_4$.
\end{example}

Selection bias is central to identifiability under causal irrelevancy. Identification requires a post-intervention distribution with no selection on $R_k$ or its non-intervened descendants and with $R_k \ci R_j$ for all $R_j \in \mathcal R_k^p$. This calls for a constructive strategy that jointly accounts for induced and propagating selection and intervention order.

\section{Identification Algorithm}
\label{sec:identification}

In this section, we formalize the identification logic developed in Section~\ref{sec:hoops} into a constructive procedure, summarized in Algorithm~\ref{alg:ID}. The input to the algorithm is an mDAG $\mathcal G$ encoding restrictions on the missingness mechanism $p(R\,|\,X)$. The algorithm outputs a forest $\mathbb F$ collecting trees $\mathbb T_k$ associated with each missingness indicator $R_k\in R$, together with a set $\mathcal D$ of indicators whose propensity scores cannot be identified.

For each $R_k\in R$, the algorithm attempts to identify the propensity score $\pi_k(\pa_\G(R_k))$ evaluated at an indicator-induced selection set $\mathcal S_k^r=1$. If identification succeeds, the tree $\mathbb T_k$ encodes the sequence of interventions used to obtain an identified representation of $\pi_k$ evaluated at $\mathcal S_k^r=1$. If identification fails, $R_k$ is added to $\mathcal D$, and the corresponding tree records where and why identification breaks down. The target law is identified if and only if $\mathcal D=\emptyset$, since \eqref{eq:Bayes} expresses $p(X)$ in terms of the collection $\{\pi_k\vert_{\mathcal S_k^r=1}\}_{k=1}^K$.

The algorithm processes missingness indicators sequentially according to a \textbf{valid reversed topological order} $\tau$ on the mDAG $\mathcal G$. For each indicator $R_k$, the algorithm attempts to identify  $\pi_k$ by exploiting either associational or causal irrelevancy. Trees constructed in identifying propensity scores of indicators earlier in the order are reused to guide intervention choices for identifying propensity scores of indicators later in the order. 

For a fixed $R_k$, identification is immediate when the problematic set $\mathcal R_k^p$, defined in \eqref{eq:R_problem}, is empty. In this case, associational irrelevancy arguments from Section~\ref{subsec:assoc} suffice, and the tree $\mathbb T_k$ consists only of the root node $R_k$. We therefore focus on the nontrivial case $\mathcal R_k^p\neq\emptyset$.

When $\mathcal R_k^p\neq\emptyset$, identification relies on causal irrelevancy. Conceptually, Algorithm~\ref{alg:ID} searches for a maximal admissible intervention strategy on descendants of $R_k$, that is, a collection of interventions that separates $R_k$ from its problematic set $\mathcal R_k^p$ while avoiding selection that propagates to $R_k$ or obstructs subsequent identification steps. This is achieved by first considering a maximal set of admissible interventions and then pruning those that induce selection obstructing identification.

The remainder of this section describes the algorithm in detail. We first introduce the tree construction procedure used to encode candidate intervention strategies, then formalize the identification status check, and finally describe the pruning operation that removes unnecessary interventions. We conclude by characterizing the identified functionals for propensity scores whenever identification succeeds.

\vspace{-0.25cm}
\paragraph*{Trees and candidate interventions [\blue{tree-construction}].}\hspace{-0.3cm} 
We now describe the tree construction for a fixed indicator $R_k$. The first step identifies which descendants of $R_k$ can be intervened on without immediately obstructing identification of $\pi_k$. 

Recall that indicators whose propensity scores are not identifiable, collected in $\mathcal D$, cannot be intervened on in any admissible strategy. In addition, some descendants of $R_k$ cannot be intervened on because doing so would induce selection on $R_k$ itself. Specifically, these are descendants of $R_k$ that have $X_k$ as a parent, referred to as \emph{colluder descendants} of $R_k$ 

\vspace{-1.5cm}
\begin{align}
\mathcal C^{\text{dir}}_{k,k}
\coloneqq
\{R_j \in \de_\G(R_k) \, : \, X_k \in \pa_\G(R_j)\}.
\eqtagtext{(colluder descendants of $R_k$)}
\end{align}%
\vspace{-1.5cm}

Thus, the initial set of indicators eligible for intervention when identifying $\pi_k$ is 

\vspace{-1.5cm}
\refstepcounter{equation}\label{eq:candidate_intervention}
\begin{align}
R^* \coloneqq \de_\G(R_k)\setminus\{\mathcal C^{\text{dir}}_{k,k},\mathcal D\}.
\eqtagtext{(candidate intervention set for $R_k$)}
\end{align}%
\vspace{-1.65cm}
 
Given a candidate intervention set $R^*$ for $R_k$ defined in \eqref{eq:candidate_intervention}, the algorithm constructs a provisional intervention tree $\mathbb T_k$ by attaching each $R_i\in R^*$ as a child of $R_k$, and augmenting this child with its previously constructed tree $\mathbb T_i$ from $\mathbb F$. The interventions encoded by $\mathbb T_k$ induce selection through two mechanisms. First, counterfactual substitution arising from missing parents of $R_k$, i.e.,  $\mathcal S_k^x$ defined in \eqref{eq:counterfactual_selection_set}. Second, intervening on children of $R_k$ induces additional selection captured by the selection sets  associated with each $R_j\in \ch_{\mathbb T_k}(R_k)$, i.e., $\mathcal S_j$ defined in \eqref{eq:selection_set}. Let $\mathcal T_k \coloneqq \ch_{\mathbb T_k}(R_k)$. We collect all such selection into the \emph{pre-selection set} 

\vspace{-1.5cm}
\refstepcounter{equation}\label{eq:pre-select}
\begin{align}
\widetilde{\mathcal S}_k
\coloneqq
\mathcal S_k^x
\cup
\bigcup_{R_j\in \mathcal T_k} \mathcal S_j. 
\eqtagtext{(pre-selection set for $R_k$)} 
\end{align}%
\vspace{-1.5cm} 

The pre-selection set $\widetilde{\mathcal S}_k$ aggregates all selection imposed while attempting to identify $\pi_k$ and is used exclusively to assess identifiability at the current stage. In contrast, the selection set $\mathcal S_k$ for $R_k$, defined in \eqref{eq:selection_set}, characterizes only the portion of this selection that propagates through $R_k$ when $R_k$ itself is intervened on to identify subsequent propensity scores, according to the propagation rule in \eqref{eq:propa}. By construction, $\mathcal S_k$ is always a subset of $\widetilde{\mathcal S}_k$, reflecting the fact that not all selection induced during identification propagates forward in the algorithm.

\vspace{-0.25cm}
\paragraph*{Identification check and failure modes [\blue{id-status}].}\hspace{-0.3cm} 
 Given $\mathbb T_k$ and $\widetilde{\mathcal S}_k$, the propensity score $\pi_k$ is identifiable under evaluation $\mathcal S_k^r=1$ if 

\vspace{-1.5cm}
\refstepcounter{equation}\label{eq:id_criteria}
\begin{align}
R_k \ci \widetilde{\mathcal S}_k\setminus \pa_\G(R_k)\mid \pa_\G(R_k) \ \text{ in } \ p(\cdot \,|\, \doo(\mathcal T_k=1)),
\eqtagtext{(identification criterion)}
\end{align}%
\vspace{-1.5cm}

where  $p(\cdot \,|\, \doo(\mathcal T_k=1))$ corresponds to the post-intervention distribution induced by intervening on the current children of $R_k$ in $\mathbb T_k$. The set $\widetilde{\mathcal S}_k\setminus \pa_\G(R_k)$ may include $R_k$ itself, and by convention $R_k$ is not independent of itself.

If \eqref{eq:id_criteria} fails, collect indicators in $\widetilde{\mathcal S}_k\setminus \pa_\G(R_k)$ for which \eqref{eq:id_criteria} fails to hold into the set 

\vspace{-1.6cm}
\begin{align}
\mathcal R_k^d
\coloneqq
\big\{
R_j\in \widetilde{\mathcal S}_k\setminus \pa_\G(R_k)
\, : \,
R_k\not\ci R_j\mid \pa_\G(R_k)
\text{ in } p(\cdot\mid \doo(\mathcal T_k=1))
\big\}.
\label{eq:R_nonsep}
\end{align}%
\vspace{-1.5cm}

If $\mathcal R_k^d\cap \mathcal R_k^p\neq\emptyset$, then no admissible intervention can separate $R_k$ from its problematic set $R_k^p$, and $\pi_k$ cannot be identifiable (even at evaluation $R=1$). In such cases, $R_k$ is added to $\mathcal D$, and the algorithm returns non-identifiability of both $\pi_k$ and the target law.  Intuitively, identification fails because all possible interventions have already been applied in an attempt to establish independence between $R_k$ and the problematic indicators in $\R^p_k$ given $\pa_\G(R_k)$.  

In contrast if $\mathcal R_k^d\cap \mathcal R_k^p = \emptyset$, the dependence between $R_k$ and $\mathcal R_k^p$ has already been resolved, but identification fails due to selection induced by unnecessary interventions. In this case, identification may still be achievable by pruning parts of the intervention trees, as described next. An example of pruning an unnecessary intervention was described in Example~\ref{ex:beyond_paths}, where identifying $\pi_4$ required identifying $\pi_3$ without intervening on $R_2$. 

\vspace{-0.25cm}
\paragraph*{Pruning unnecessary interventions [\blue{tree-prune}].}\hspace{-0.3cm}
If $\mathcal R_k^d\cap \mathcal R_k^p=\emptyset$, identification may still be possible by pruning interventions that introduce selection which propagates to $R_k$. 

Let $\mathcal C_k$ denote the collection of descendants of $R_k$ such that intervening on them induces selection on at least one indicator in $\mathcal R_k^d$. Specifically, for each $R_j \in \mathcal R_k^d$, define
%
$
\mathcal C^{\text{dir}}_{k,j}
\coloneqq
\{R_i\in \de_\G(R_k)\mid X_j\in \pa_\G(R_i)\}, 
$
%
the set of descendants of $R_k$ whose intervention induces selection on $R_j$. We then define

\vspace{-2.2cm}
\begin{align}
\mathcal C_k
\coloneqq
\bigcup_{R_j\in \mathcal R_k^d} \mathcal C^{\text{dir}}_{k,j}. 
\end{align}%
\vspace{-1.5cm}

Any indicator in $\mathcal C_k$ cannot be intervened on without obstructing identification of $\pi_k$. 

The pruning procedure proceeds as follows. For each candidate intervention $R_i\in R^*$:
If $R_i\in \mathcal C_k$, then intervening on $R_i$ necessarily induces selection on an indicator in $\mathcal R_k^d$, and $R_i$ is removed from the candidate set $R^*$. 
Otherwise if $R_i \not\in \mathcal C_k$, consider the subtree $\mathbb T_i$ retrieved from $\mathbb F$. If $\mathbb T_i$ contains children whose interventions induce selection on some $R_j\in \mathcal R_k^d$ that propagates through $R_i$ according to the propagation rule in \eqref{eq:propa}, then the corresponding branches are pruned from $\mathbb T_i$. After this pruning, the algorithm checks whether the propensity score $\pi_i$ remains identifiable under its updated evaluation set. If so, the pruned tree $\mathbb T_i$ is appended to $\mathbb T_k$; otherwise, $R_i$ is removed from $R^*$. Note that if any child of $R_i$ in $\mathbb{T}_i$, say $R_m$, is also a child of $R_k$ in $\mathbb{T}_k$, and the subtree $\mathbb{T}_m$ has already been pruned, then the corresponding subtree of $R_m$ in $\mathbb{T}_i$ must be updated to match the pruned version. This operation enables us to restrict attention to pruning the children of $R_k$, or their children, without needing to consider further descendants. See Appendix Figure~\ref{fig:supp_figs}(c) for an example, with detailed identification discussion given in Appendix Subsection~\ref{subsubsec:update_prune_tree}. All pruned branches are recorded locally in a set $\mathcal B$, and the final $\mathbb T_k$ is added to $\mathbb F$. 

Once all candidates in $R^*$ have been examined, the identification criterion \eqref{eq:id_criteria} is evaluated again using the updated tree $\mathbb T_k$. If it is satisfied, identification of $\pi_k$ succeeds. Otherwise, $\mathcal R_k^d$ is updated and the pruning procedure is repeated. Since the set of indicators is finite, this iterative process terminates.

When identification of $\pi_k$ succeeds, the algorithm constructs the indicator-induced selection set for $R_k$ as follows and proceed to identify the next propensity score according to $\tau$:

\vspace{-1.5cm}
\refstepcounter{equation}\label{eq:R-select}
\begin{align}
    \calS^r_k &\coloneqq \widetilde{\calS}_k \cap \pa_\G(R_k) 
    = \{\calS^x_k\cap \pa_\G(R_k)\} \cup \{ \cup_{R_j \in \ch_{\mathbb{T}_{k}}(R_k)  } \calS_{j \, \downarrow \, k}\}. 
    \eqtagtext{(indicator-induced selection set for $R_k$)}
\end{align}
\vspace{-1.cm}

This set has two components. First, any indicator in $\mathcal S_k^x \cap \pa_\G(R_k)$ must be set to one to render the corresponding counterfactual parents observed. Second, interventions used to identify $\pi_k$ may induce selection that propagates through $R_k$ via \eqref{eq:propa}. Under associational irrelevancy, discussed in Section~\ref{subsec:assoc}, this reduces to $\mathcal S_k^x \cap \pa_\G(R_k)$, since $\mathcal T_k = \emptyset$.

The selection set for $R_k$ in \eqref{eq:selection_set} can therefore be explicitly defined as 

\vspace{-1.5cm}
\begin{align}
\mathcal S_k
&\coloneqq
\mathcal S_k^x\cup \mathcal S_k^r
\ =
\mathcal S_k^x
\cup
\bigcup_{R_j\in \mathcal T_k} \mathcal S_{j\,\downarrow\, k}.
\label{eq:full-select}
\end{align}%
\vspace{-1.5cm}

\vspace{-0.3cm}
\paragraph*{Identification functional induced by the intervention tree $\mathbb T_k$.}\hspace{-0.3cm}
The preceding construction shows that whenever Algorithm~\ref{alg:ID} succeeds, the intervention tree $\mathbb T_k$ encodes a concrete strategy for intervening on missingness indicators so as to obtain a post-intervention distribution in which $R_k$ is independent of its problematic set, without inducing selection on $R_k$ or on its non-intervened descendants. 

To construct an identification functional for $\pi_k$ given $\mathbb T_k$, we define graphical and probabilistic fixing operators. For $R_i \in R$, let $\phi^{\mathcal G}_{R_i}$ denote the graphical fixing operation on an mDAG $\mathcal G$ that removes all incoming edges into $R_i$ and replaces the variable $X_i$ by its observed counterpart $X_i^*$, corresponding to the intervention $\doo(R_i=1)$. Further, we define the probabilistic fixing operator $\phi^p_{R_i}\{p\}$ that maps $p(X,R)$ to a law on $(X,R\setminus\{R_i\})$ defined by

\vspace{-1.5cm}
\begin{align}
\phi^p_{R_i}\{p\}(X,R\setminus R_i)
\;\coloneqq\;
\frac{p(X,R\setminus R_i,R_i=1)}{p(R_i=1 \mid \pa_\G(R_i))},
\label{eq:fixing_single}
\end{align}%
\vspace{-1.5cm} 

For a sequence of indicators $\sigma=(s_1,\ldots,s_m)$, we define the composed fixing operator $\phi^p_{\sigma} \;\coloneqq\; \phi^p_{s_m}\circ \cdots \circ \phi^p_{s_1}$, 
and interpret $\phi^p_{\sigma}$ as the post-intervention distribution induced by intervening sequentially on the indicators in $\sigma$.

Given $R_k$ and $\mathbb T_k$, let $\sigma_k=(s_1,\ldots,s_m)$ be any ordering of $\mathcal T_k$ consistent with the reverse topological order used by the algorithm. The post-intervention distribution induced by $\mathbb T_k$ is $\phi^p_{\sigma_k}\{p\}$. The following theorem shows that whenever the identification criterion holds in $\phi^p_{\sigma_k}\{p\}$ (i.e., condition \eqref{eq:id_criteria}), the tree $\mathbb T_k$ yields an explicit identifying functional for $\pi_k$.

\begin{theorem}[Identification functional induced by $\mathbb T_k$]
\label{thm:pi_id_functional}
Assume identification criterion \eqref{eq:id_criteria} holds for $R_k$ using the tree $\mathbb T_k$ in the post-intervention law $p_{\mathbb T_k}=\phi^p_{\sigma_k}\{p\}$. Then $\pi_k(\pa_\G(R_k))$, evaluated at $\mathcal S_k^r=1$, is identified from the observed data law via 

\vspace{-1.5cm}
\begin{align}
\pi_k(\pa_\G(R_k))\big\vert_{\mathcal S_k^r=1}
\;=\;
p_{\mathbb T_k}(R_k=1 \mid \pa_\G(R_k))
\big\vert_{\mathcal S_k^r=1}.
\label{eq:pi_id_functional}
\end{align}
\end{theorem}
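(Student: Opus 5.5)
The plan is to prove \eqref{eq:pi_id_functional} in two parts. First, the sequential fixing operator $\phi^p_{\sigma_k}$ produces a law $p_{\mathbb T_k}$ that is (i) a functional of the observed data law, and (ii) equal to the truncated factorization $p(X,R\setminus\mathcal T_k \mid \doo(\mathcal T_k=1))$ of the full law. Second, the identification criterion \eqref{eq:id_criteria}, together with causal irrelevancy \eqref{eq:id_causal}, lets us read off $\pi_k$ evaluated at $\mathcal S_k^r=1$ from that law.

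\textbf{Step 1: the post-intervention law has truncated-factorization form.} I will argue by induction along $\sigma_k=(s_1,\ldots,s_m)$. Since $p(X,R)=p(X)\prod_{R_j}p(R_j\mid \pa_\G(R_j))$, evaluating at $s_1=1$ and dividing by $\pi_{s_1}$ removes that factor. The result is $p(X)\prod_{R_j\neq s_1}p(R_j\mid\pa_\G(R_j))\vert_{s_1=1}$, which factorizes according to $\phi^{\G}_{s_1}(\G)$.

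The inductive step needs the propensity score of $s_{i}$ in the current law to equal the original $\pi_{s_i}$, at least on the evaluation actually used. This holds because each factor $p(R_j\mid \pa_\G(R_j))$ is unchanged by fixing other indicators, which is the modularity behind \eqref{eq:id_causal}.

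The order matters here. Since $\sigma_k$ is consistent with the reverse topological order, every $s_i$ was processed by the algorithm before $R_k$. By the construction of $\mathbb F$, each $\pi_{s_i}$ is therefore identified at $\mathcal S_{s_i}^r=1$ via its own tree $\mathbb T_{s_i}$. Inductively, this identification takes place in a law that is already a functional of $p(X^*,R)$.

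\textbf{Step 2: the division is only needed at evaluations that are actually available.} In $\phi^p_{s_i}$, the denominator is needed only on rows where the numerator is evaluated. The selection it carries is $\mathcal S_{s_i}$, and what propagates to later fixings is $\mathcal S_{s_i\downarrow s_{i'}}$ by \eqref{eq:propa}. All of this selection is contained in $\widetilde{\mathcal S}_k$ by \eqref{eq:pre-select}.

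I expect this to be the main obstacle. One must check carefully that the identified version of $\pi_{s_i}$, available only at $\mathcal S^r_{s_i}=1$, is used only at evaluations where those indicators are already set to one. Equivalently, the pruning step must guarantee that no fixing requires $\pi_{s_i}$ off its evaluation set. I would handle this by tracking the accumulated evaluation set $\bigcup_i \mathcal S_{s_i}$ and showing it is exactly the set of indicators set to one in $p_{\mathbb T_k}$. Consequently, $p_{\mathbb T_k}$ is identified as a law in which the indicators of $\widetilde{\mathcal S}_k$ appear only at value one.

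\textbf{Step 3: reading off $\pi_k$.} In $p_{\mathbb T_k}$, the factor for $R_k$ is still $p(R_k\mid\pa_\G(R_k))$, because $R_k\notin\mathcal T_k$ and fixing does not alter non-fixed factors. Hence by \eqref{eq:id_causal}, $p_{\mathbb T_k}(R_k=1\mid \pa_\G(R_k))=\pi_k(\pa_\G(R_k))$.

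The identified object, however, is $p_{\mathbb T_k}$ restricted to $\widetilde{\mathcal S}_k=1$. Split $\widetilde{\mathcal S}_k$ into two parts:
\begin{itemize}
\item the part in $\pa_\G(R_k)$, which is $\mathcal S_k^r$ by \eqref{eq:R-select} and stays in the conditioning set, yielding the evaluation $\mathcal S_k^r=1$;
\item the remainder $\widetilde{\mathcal S}_k\setminus\pa_\G(R_k)$, which by \eqref{eq:id_criteria} is independent of $R_k$ given $\pa_\G(R_k)$ in $p_{\mathbb T_k}$, so conditioning on it being one leaves the conditional of $R_k$ unchanged.
\end{itemize}
Counterfactual parents $X_j\in\pa_\G(R_k)$ are replaced by $X_j^*$ because $R_j\in\mathcal S_k^x\subseteq\widetilde{\mathcal S}_k$ is set to one. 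Combining these facts gives
\[
\pi_k\vert_{\mathcal S_k^r=1}=p_{\mathbb T_k}(R_k=1\mid\pa_\G(R_k),\widetilde{\mathcal S}_k\setminus\pa_\G(R_k)=1)\vert_{\mathcal S_k^r=1},
\]
which is a functional of observed data by Step 2. Positivity ($\pi>\sigma$) ensures all the ratios are well defined.
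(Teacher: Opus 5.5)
Your proposal is correct and follows the paper's route. Modularity gives $p_{\mathbb T_k}(R_k=1\mid\pa_\G(R_k))=\pi_k(\pa_\G(R_k))$, and the identification criterion licenses the evaluation at $\mathcal S_k^r=1$. Where the paper simply asserts that the criterion makes this kernel identifiable, you actually argue it, by restricting to $\widetilde{\mathcal S}_k=1$ and dropping $\widetilde{\mathcal S}_k\setminus\pa_\G(R_k)$ via the independence. That is the same reasoning the paper only uses implicitly in its proof of the Corollary.

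The bookkeeping you flag as the main obstacle is handled by construction rather than by pruning. $\widetilde{\mathcal S}_k$ is defined to contain every $\mathcal S_{s_i}$, and the criterion is exactly what makes restricting to it harmless.

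One overstatement: $p_{\mathbb T_k}$ is not identified as a full law, not even on $\widetilde{\mathcal S}_k=1$. Any $X_j$ with $R_j\notin\widetilde{\mathcal S}_k\cup\mathcal T_k$ remains unobserved when $R_j=0$, so only the margin over $R_k$, $\pa_\G(R_k)$ and the indicators in $\widetilde{\mathcal S}_k$ is identified. That margin is all your Step~3 needs, so nothing breaks.
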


\vspace{-0.5cm}
Theorem~\ref{thm:pi_id_functional} makes explicit the role of the intervention tree in identification. The tree $\mathbb T_k$ determines a post-intervention distribution in which the conditional distribution of $R_k$ given its parents is well-defined and identifiable, while the evaluation set $\mathcal S_k^r$ records the minimal indicator restrictions under which this functional can be recovered from observed data. Together, $(\mathbb T_k,\mathcal S_k^r)$ characterize how $\pi_k$ is identified. The identifying functional in Theorem~\ref{thm:pi_id_functional} admits a representation in terms of observed data and inverse-probability weights.

\begin{corollary}[Observed-data representation]
\label{cor:pi_id_ratio}
Under the conditions of Theorem~\ref{thm:pi_id_functional}, define $W_k \;\coloneqq\; \prod_{R_i \in \mathcal T_k} \pi_i(\pa_\G(R_i))^{-1}$, where each $\pi_i$ is evaluated at its own identified evaluation set as returned by Algorithm~\ref{alg:ID}. Then

\vspace{-1.2cm}
\begin{align}
\pi_k(\pa_\G(R_k))\Big\vert_{\mathcal S_k^r=1}
=
\frac{
\mathbb E\!\left[
\mathbb I(R_k=1)\, W_k
\;\middle|\;
\pa_\G(R_k),\;
\mathcal T_k=1,\;
\widetilde{\mathcal S}_k=1
\right]
}{
\mathbb E\!\left[
W_k
\;\middle|\;
\pa_\G(R_k),\;
\mathcal T_k=1,\;
\widetilde{\mathcal S}_k=1
\right]
}\Big\vert_{\mathcal S^r_k=1}.
\label{eq:pi_id_ratio}
\end{align}%
\vspace{-1.2cm}

If $\mathcal T_k=\emptyset$, then $W_k\equiv 1$ and the expression reduces to the associational identification formula.
\end{corollary}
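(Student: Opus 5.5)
The plan is to start from Theorem~\ref{thm:pi_id_functional} and turn the conditional probability in the post-intervention law $p_{\mathbb T_k}=\phi^p_{\sigma_k}\{p\}$ into a ratio of reweighted observed-data expectations. First, I will unroll the composed fixing operator using \eqref{eq:fixing_single} and induction on $\sigma_k=(s_1,\ldots,s_m)$. This gives
\[
p_{\mathbb T_k}(X,R\setminus\mathcal T_k)=\frac{p(X,R\setminus\mathcal T_k,\mathcal T_k=1)}{\prod_{R_i\in\mathcal T_k}\pi_i(\pa_\G(R_i))}\;=\;p(X,R\setminus\mathcal T_k,\mathcal T_k=1)\,W_k .
\]
In each step, the denominator $p(R_i=1\mid \pa_\G(R_i))$ computed in the current intervened law equals the original propensity score $\pi_i$. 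This is the causal irrelevancy (modularity) in \eqref{eq:id_causal}: fixing earlier indicators only removes their own factors from the factorization $\prod_j p(R_j\mid\pa_\G(R_j))$, so the factor for $R_i$ is unchanged. Any indicators in $\mathcal T_k$ appearing among $\pa_\G(R_i)$ are set to one.

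Second, I will handle the evaluation sets. Each $\pi_i$ in $W_k$ is only identified on its evaluation set $\mathcal S_i^r=1$, via the trees $\mathbb T_i$ built earlier in the reverse topological order. By the propagation rule \eqref{eq:propa} and the definition \eqref{eq:pre-select}, the pre-selection set $\widetilde{\mathcal S}_k$ contains all the selection $\mathcal S_i$ induced by children $R_i\in\mathcal T_k$. So on the event $\{\mathcal T_k=1,\widetilde{\mathcal S}_k=1\}$, every factor of $W_k$ is evaluated where it is identified, and every counterfactual parent $X_j$ with $R_j\in\mathcal S_k^x$ equals its observed value. This makes the event-restricted joint law an observed-data quantity.

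Third, I will write the post-intervention conditional as a ratio of joint and marginal:
\[
p_{\mathbb T_k}(R_k=1\mid\pa_\G(R_k))=\frac{p_{\mathbb T_k}(R_k=1,\pa_\G(R_k))}{p_{\mathbb T_k}(\pa_\G(R_k))}.
\]
The criterion \eqref{eq:id_criteria} says $R_k\ci \widetilde{\mathcal S}_k\setminus\pa_\G(R_k)\mid\pa_\G(R_k)$ in $p_{\mathbb T_k}$. Hence this ratio is unchanged if I additionally condition on $\widetilde{\mathcal S}_k\setminus\pa_\G(R_k)=1$. The remaining part $\widetilde{\mathcal S}_k\cap\pa_\G(R_k)=\mathcal S_k^r$ is handled by the final evaluation at $\mathcal S_k^r=1$, using \eqref{eq:R-select}. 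Substituting the unrolled density from the first step into numerator and denominator, each becomes $p(\mathcal T_k=1,\widetilde{\mathcal S}_k=1,\pa_\G(R_k))$ times a conditional expectation of $W_k$ or $\mathbb I(R_k=1)W_k$. These common factors cancel, giving \eqref{eq:pi_id_ratio}. When $\mathcal T_k=\emptyset$, $W_k\equiv1$ and the display collapses to $p(R_k=1\mid\pa_\G(R_k),\mathcal S_k^x=1)$, the associational formula \eqref{eq:id_association}.

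The main obstacle is the second step. I need every $\pi_i$ in $W_k$ to be evaluated only on sets where Algorithm~\ref{alg:ID} identified it, and conditioning on $\widetilde{\mathcal S}_k=1$ must not clash with the pruning. My first attempt will be an induction along $\tau$ on the claim that $\widetilde{\mathcal S}_k\supseteq\bigcup_{R_i\in\mathcal T_k}\mathcal S_i^r$. I will then check that the pruned subtrees are exactly what the recursive weights use. The positivity assumption $\pi_k>\sigma$ ensures all the conditional expectations are well defined.
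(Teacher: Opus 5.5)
Your proposal is correct and follows essentially the same route as the paper: unroll the composed fixing operator into an inverse-probability-weighted version of $p$ on $\{\mathcal T_k=1\}$, then write the post-intervention conditional of $R_k$ as a ratio of weighted conditional expectations on $\{\mathcal T_k=1,\widetilde{\mathcal S}_k=1\}$, using Theorem~\ref{thm:pi_id_functional}. You are more explicit than the paper in invoking the identification criterion to justify adding the conditioning on $\widetilde{\mathcal S}_k=1$. The step you flag as the main obstacle follows directly from the definitions, since $\mathcal S_i^r\subseteq\mathcal S_i\subseteq\widetilde{\mathcal S}_k$ for every $R_i\in\mathcal T_k$ with the post-pruning $\mathcal S_i$, so no induction along $\tau$ is needed.
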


We illustrate the tree-construction of Algorithm~\ref{alg:ID} through the following examples. 

\begin{example}(Identification without pruning) 
The mDAG $\mathcal G_1$ in Figure~\ref{fig:ex_tree}(a) illustrates a setting in which no pruning is required. The algorithm begins with $R_1$, whose propensity score $\pi_1=p(R_1=1 \,|\, R_2,R_3)$ is directly observed. The algorithm then proceeds to $R_2$ and $R_3$ in either order, since neither is a descendant of the other. For concreteness, consider $R_2$, whose propensity score is
$\pi_2=p(R_2=1 \,|\, X_1,X_3,R_4)$. Here, $\mathcal R_2^p=\{R_1\}$, so an intervention on $R_1$ is required to separate $R_2$ from its problematic set. Since $\mathcal S_1=\emptyset$, this intervention induces no additional selection, and $\pi_2$ is fully identified. An analogous argument applies to $R_3$. Finally, for $R_4$, we have $\pi_4=p(R_4=1 \,|\, X_1,X_2,X_3)$ with $\mathcal R_4^p=\{R_1,R_2,R_3\}$. Intervening on all three indicators yields separation without inducing selection, so $\pi_4$ is identified. As all propensity scores are identified without evaluation restrictions, the target law is identified. Key quantities for this example are summarized in Appendix Table~\ref{apptab:ex-tree1}, with $\mathbb{F}_1$ shown in Figure~\ref{fig:ex_tree}(d). 
\end{example}

\begin{figure}[!t]
	\begin{center}
		\scalebox{0.72}{
			\begin{tikzpicture}[>=stealth, node distance=1.3cm, 
				decoration = {snake, pre length=3pt,post length=7pt,},
				every text node part/.style={align=center}]
				\tikzstyle{format} = [thick, circle, minimum size=1.0mm, inner sep=0pt]
				\begin{scope}[xshift=0cm]
					\path[->, thick]
					node[format] (x11) {$X_1$}
					node[format, right of=x11, xshift=0.5cm] (x21) {$X_2$}
					node[format, right of=x21, xshift=0.5cm] (x31) {$X_3$}
					node[format, right of=x31, xshift=0.5cm] (x41) {$X_4$}
					node[format, below of=x11] (r1) {$R_1$}
					node[format, below of=x21] (r2) {$R_2$}
					node[format, below of=x31] (r3) {$R_3$}
					node[format, below of=x41] (r4) {$R_4$}
					
					(x11) edge[blue, dashed, -,, bend left=0] (x21)
					(x21) edge[blue, dashed, -,, bend left=0] (x31)
					(x31) edge[blue, dashed, -,, bend left=0] (x41)
					(x11) edge[blue, dashed, -,, bend left=20] (x31)
					(x11) edge[blue, dashed, -,, bend left=30] (x41)
					(x21) edge[blue, dashed, -,, bend left=20] (x41)
					
					(x11) edge[blue] (r2)
					(x11) edge[blue] (r3)
					(x11) edge[blue] (r4)
					(x21) edge[blue] (r3)
					(x21) edge[blue] (r4)
					(x31) edge[blue] (r4)
					(x31) edge[blue] (r2)
					
					(r2) edge[blue] (r1)
					(r3) edge[blue, bend left=30] (r1)
					(r4) edge[blue, bend left=30] (r2)
					(r4) edge[blue] (r3)
					
					
					node[format, below of=r2, yshift=0.cm, xshift=1.2cm] (a) {(a) $\G_1$} ;
				\end{scope}

				\begin{scope}[xshift=7.5cm, yshift=0.cm]
					\path[->, thick]
					node[format] (x11) {$X_1$}
					node[format, right of=x11] (x21) {$X_2$}
					node[format, right of=x21] (x31) {$X_3$}
					node[format, right of=x31] (x41) {$X_4$}
					node[format, right of=x41] (x51) {$X_5$}
					node[format, right of=x51] (x61) {$X_6$}
					node[format, below of=x11] (r1) {$R_1$}
					node[format, below of=x21] (r2) {$R_2$}
					node[format, below of=x31] (r3) {$R_3$}
					node[format, below of=x41] (r4) {$R_4$}
					node[format, below of=x51] (r5) {$R_5$}
					node[format, below of=x61] (r6) {$R_6$}
					
					(x11) edge[blue, dashed, -,] (x21)
					(x21) edge[blue, dashed, -,] (x31)
					(x31) edge[blue, dashed, -,] (x41)
					(x41) edge[blue, dashed, -,] (x51)
					(x51) edge[blue, dashed, -,] (x61)
					(x11) edge[blue, dashed, bend left, -] (x31)
					(x11) edge[blue, dashed, bend left, -] (x41)
					(x11) edge[blue, dashed, bend left, -] (x51)
					(x11) edge[blue, dashed, bend left, -] (x61)
					(x21) edge[blue, dashed, bend left, -] (x41)
					(x21) edge[blue, dashed, bend left, -] (x51)
					(x21) edge[blue, dashed, bend left, -] (x61)
					(x31) edge[blue, dashed, bend left, -] (x51)
					(x31) edge[blue, dashed, bend left, -] (x61)
					(x41) edge[blue, dashed, bend left, -] (x61)
					
					(x11) edge[blue] (r3)
					(x31) edge[blue] (r4)
					(x31) edge[blue] (r5)
					(x31) edge[blue] (r6)
					(x41) edge[blue] (r1)
					(x41) edge[blue] (r6)
					(x51) edge[blue] (r2)
					(x61) edge[blue] (r5)
					(r6) edge[blue] (r5)
					(r6) edge[blue] (r5)
					(r6) edge[blue, bend left] (r3)
					(r5) edge[blue, bend left=20] (r3)
					(r4) edge[blue] (r3)
					(r3) edge[blue] (r2)
					(r2) edge[blue] (r1)
					
					
					
					node[format, below of=r3, yshift=0.cm, xshift=1cm] (b) {(b) $\G_2$} ;
					;
				\end{scope}

                \begin{scope}[xshift=16cm, yshift=0cm]
					\path[->, thick]
					node[format] (x11) {$X_1$}
					node[format, right of=x11] (x21) {$X_2$}
					node[format, right of=x21] (x31) {$X_3$}
					node[format, right of=x31] (x41) {$X_4$}
					node[format, right of=x41] (x51) {$X_5$}
					
					node[format, below of=x11] (r1) {$R_1$}
					node[format, below of=x21] (r2) {$R_2$}
					node[format, below of=x31] (r3) {$R_3$}
					node[format, below of=x41] (r4) {$R_4$}
					node[format, below of=x51] (r5) {$R_5$}
					
					
					(x11) edge[blue, dashed, -,] (x21)
					(x21) edge[blue, dashed, -,] (x31)
					(x31) edge[blue, dashed, -,] (x41)
					(x41) edge[blue, dashed, -,] (x51)
					(x11) edge[blue, dashed, bend left, -] (x31)
					(x11) edge[blue, dashed, bend left, -] (x41)
					(x11) edge[blue, dashed, bend left, -] (x51)
					(x21) edge[blue, dashed, bend left, -] (x41)
					(x21) edge[blue, dashed, bend left, -] (x51)
					
					(x21) edge[blue] (r5)
					(x11) edge[blue] (r4)
					(x51) edge[blue] (r2)
					(x41) edge[blue] (r1)
					
					(r5) edge[blue] (r4)
					(r5) edge[blue, bend left] (r3)
					(r4) edge[blue, bend left] (r2)
					(r3) edge[blue] (r2)
					(r2) edge[blue] (r1)
					
					
					
					node[format, below of=r3, yshift=0.cm, xshift=0.0cm] (c) {(c) $\G_3$} ;
					;
				\end{scope}


                \begin{scope}[xshift=3cm, yshift=-4cm]
					\path[->, thick]
					node[format] (id) {$R$}
					node[format, below of=id, yshift=0.5cm] (r3) {$R_3$ }
					node[format, left of=r3, xshift=0.cm] (r2) {$R_2$}
					node[format, left of=r2, xshift=0.cm] (r1) {$R_1$}
					node[format, right of=r3, xshift=0.cm] (r4) {$R_4$}
					
					node[format, below of=r2, xshift=0.cm, yshift=0.cm] (r2-r1) {$R_1$}
					
					node[format, below of=r3, xshift=-0.cm, yshift=0.cm] (r3-r1) {$R_1$}
					
					node[format, below of=r4, yshift=0.cm] (r4-r2) {$R_2$}
					node[format, left of=r4-r2, xshift=0.75cm, yshift=0.cm] (r4-r1) {$R_1$}
					node[format, right of=r4-r2, xshift=-0.75cm, yshift=0.cm] (r4-r3) {$R_3$}
					
					node[format, below of=r4-r2, yshift=0.cm] (r4-r2-r1) {$R_1$}
					
					node[format, below of=r4-r3, yshift=0.cm] (r4-r3-r1) {$R_1$}
					
					(id) edge[black, -] (r1)
					(id) edge[black, -] (r2)
					(id) edge[black, -] (r3)
					(id) edge[black, -] (r4)
					
					(r2) edge[blue] (r2-r1)
					(r3) edge[blue] (r3-r1)
					
					(r4) edge[blue] (r4-r1)
					(r4) edge[blue] (r4-r2)
					(r4) edge[blue] (r4-r3)
					(r4-r2) edge[blue] (r4-r2-r1)
					(r4-r3) edge[blue] (r4-r3-r1)
					
					node[format, below of=r3-r1, yshift=-1cm, xshift=-0.2cm] (d) {(d) ${\cal F}_1$ corresponding to $\G_1$} ;
					
					;
				\end{scope}
                
				\begin{scope}[xshift=9cm, yshift=-4.cm]
					\path[->, thick]
					node[format] (id) {$R$}
					node[format, below of=id, yshift=0.5cm] (r3) {$R_3$ }
					node[format, left of=r3, xshift=0.2cm] (r2) {$R_2$}
					node[format, left of=r2, xshift=0.5cm] (r1) {$R_1$}
					node[format, right of=r3, xshift=0.25cm] (r4) {$R_4$}
					node[format, right of=r4, xshift=0.35cm] (r5) {$R_5$}
					node[format, right of=r5, xshift=0.6cm] (r6) {$R_6$} 
					
					node[format, below of=r3, xshift=-0.35cm] (r3-r1) {$R_1$}
					node[format, below of=r3, xshift=0.35cm] (r3-r2) {$R_2$}
					
					node[format, below of=r4, xshift=-0.35cm] (r4-r2) {$R_2$}
					node[format, below of=r4, xshift=0.35cm] (r4-r3) {$R_3$}
					node[format, below of=r4-r3, xshift=-0.35cm] (r4-r3-r1) {$R_1$}
					node[format, below of=r4-r3, xshift=0.35cm] (r4-r3-r2) {$R_2$}
					
					node[format, below of=r5, xshift=-0.35cm] (r5-r1) {$R_1$}
					node[format, below of=r5, xshift=0.35cm] (r5-r3) {$R_3$}
					node[format, below of=r5-r3, xshift=-0.35cm] (r5-r3-r1) {$R_1$}
					node[format, below of=r5-r3, xshift=0.35cm] (r5-r3-r2) {$R_2$}
					
                    node[format, below of=r6, yshift=0.cm] (r6-r2) {$R_2$}
					node[format, below of=r6, xshift=-0.6cm] (r6-r1) {$R_1$}
					node[format, below of=r6, xshift=0.6cm] (r6-r3) {$R_3$}
					node[format, below of=r6-r3, xshift=-0.35cm] (r6-r3-r1) {$R_1$}
					node[format, below of=r6-r3, xshift=0.35cm] (r6-r3-r2) {$R_2$} 
					
					
					(id) edge[black, -] (r1)
					(id) edge[black, -] (r2)
					(id) edge[black, -] (r3)
					(id) edge[black, -] (r4)
					(id) edge[black, -] (r5)
					(id) edge[black, -] (r6)
					
					(r3) edge[blue] (r3-r1)
					(r3) edge[blue] (r3-r2)
					
					(r4) edge[blue] (r4-r3)
					(r4) edge[blue] (r4-r2)
					(r4-r3) edge[red, decorate] (r4-r3-r1)
					(r4-r3) edge[blue] (r4-r3-r2)
					
					(r5) edge[blue] (r5-r3)
					(r5) edge[blue] (r5-r1)
					(r5-r3) edge[blue] (r5-r3-r1)
					(r5-r3) edge[red, decorate] (r5-r3-r2)
					
					(r6) edge[blue] (r6-r3)
                    (r6) edge[red, decorate] (r6-r2)
					(r6) edge[blue] (r6-r1)
					(r6-r3) edge[blue] (r6-r3-r1)
					(r6-r3) edge[red, decorate] (r6-r3-r2)
					
					
					node[format, below of=r4-r2, yshift=-1cm, xshift=.5cm] (e) {(e)  ${\cal F}_2$ corresponding to $\G_2$} ;
					
					;
				\end{scope}
                
				\begin{scope}[xshift=18cm, yshift=-4cm]
					\path[->, thick]
					node[format] (id) {$R$}
					node[format, below of=id, yshift=0.5cm] (r3) {$R_3$ }
					node[format, left of=r3, xshift=0.5cm] (r2) {$R_2$}
					node[format, left of=r2, xshift=0.5cm] (r1) {$R_1$}
					node[format, right of=r3, xshift=-0.5cm] (r4) {$R_4$}
					node[format, right of=r4, xshift=0.5cm] (r5) {$R_5$}
					
					node[format, below of=r4, xshift=0cm] (r4-r2) {$R_2$}
					
					node[format, below of=r5, xshift=-0.6cm] (r5-r1) {$R_1$}
					node[format, below of=r5, xshift=0cm] (r5-r3) {$R_3$}
					node[format, below of=r5, xshift=0.6cm] (r5-r4) {$R_4$}
					
					node[format, below of=r5-r4, xshift=0cm] (r5-r4-r2) {$R_2$}

					(id) edge[black, -] (r1)
					(id) edge[black, -] (r2)
					(id) edge[black, -] (r3)
					(id) edge[black, -] (r4)
					(id) edge[black, -] (r5)
					
					(r4) edge[blue] (r4-r2)
					
					(r5) edge[red, decorate] (r5-r4)
					(r5) edge[blue] (r5-r3)
					(r5) edge[red, decorate] (r5-r1)
					
					(r5-r4) edge[red, decorate] (r5-r4-r2)
					
					node[format, below of=r5-r4, yshift=-1cm, xshift=-2.25cm] (f) {(f)  ${\cal F}_3$ corresponding to $\G_3$} ; 
					
					;
				\end{scope}
			\end{tikzpicture}
		}
	\end{center}
	\vspace{-1.25cm}
	\caption{(a, b, c) Examples used to illustrate the identification Algorithm~\ref{alg:ID}; (d, e, f) The corresponding constructed trees. }
	\label{fig:ex_tree}
\end{figure}
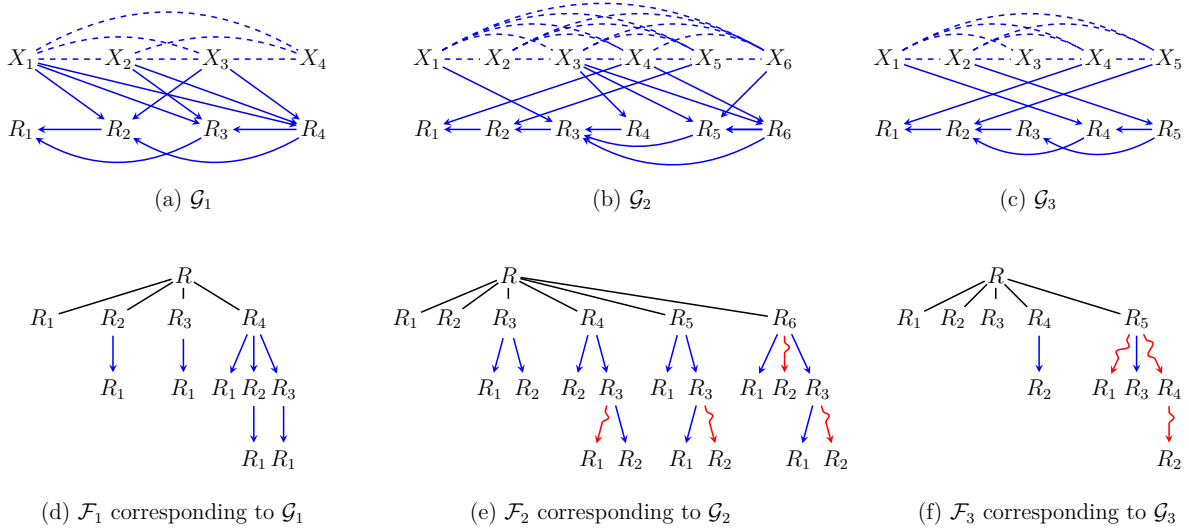

\begin{example}(Identification with pruning) 
The mDAG $\mathcal G_2$ in Figure~\ref{fig:ex_tree}(b) demonstrates how pruning removes unnecessary interventions that would otherwise obstruct identification. The propensity scores for $R_1$ and $R_2$ are identified via associational irrelevancy. Consider $R_3$, whose propensity score $\pi_3=p(R_3=1 \,|\, X_1,R_4,R_5,R_6)$ has $\mathcal R_3^p=\{R_1\}$. Intervening on both $R_1$ and $R_2$ yields identification of $\pi_3$ evaluated at $\mathcal S_3^r=\{R_4,R_5\}$. When identifying $\pi_4=p(R_4=1\,|\, X_3)$, however, intervening on $R_3$ using its full tree induces selection on $R_4$ itself. The algorithm detects this via $\mathcal R_4^d$ and prunes the branch corresponding to the unnecessary intervention on $R_2$ within $\mathbb T_3$. After pruning, $\pi_3$ remains identifiable under a reduced evaluation set, and $\pi_4$ becomes identifiable. A similar pruning step is required when identifying $\pi_6$, where selection induced indirectly through $\mathbb T_3$ must again be removed. In this graph, all problematic selections can be eliminated through pruning, and the target law is identified. Details are summarized in Appendix Table~\ref{apptab:ex-tree2}, with $\mathbb{F}_2$ shown in Figure~\ref{fig:ex_tree}(e). 
\end{example}

\begin{example}(Non-identification despite pruning) 
The mDAG $\mathcal G_3$ in Figure~\ref{fig:ex_tree}(c) illustrates a setting in which pruning cannot eliminate all selection bias. We focus on $R_5$, whose propensity score is
$\pi_5=p(R_5=1\,|\, X_2)$ with $\mathcal R_5^p=\{R_2\}$ and $\mathcal C^{\text{dir}}_{5,5}=\{R_2\}$. Interventions on $R_1$, $R_3$, and $R_4$ are initially admissible. Pruning removes the subtree for $R_4$, but this renders $\pi_4$ itself non-identifiable. Subsequent pruning eliminates $R_1$, leaving only $R_3$. An intervention on $R_3$ alone, however, cannot separate $R_5$ from $R_2$, yielding $\mathcal R_5^d=\mathcal R_5^p$ and triggering non-identification. Thus, even though pruning removes unnecessary interventions, all admissible intervention strategies induce unavoidable selection on the problematic set. The algorithm concludes that $\pi_5$, and hence the target law, is not identifiable. Appendix Table~\ref{apptab:ex-tree3} provides a summary of this example, with $\mathbb{F}_3$ shown in Figure~\ref{fig:ex_tree}(f). 
\end{example} 


\begin{algorithm}[tbp]
\setstretch{0.9}
	\caption{{\small  \textproc{Tree-based Identification Algorithm}$(\G(X, R, X^*))$}}
	\label{alg:ID}
	\begin{algorithmic}[1]		
		\vspace{0.25cm}		
		\State {\bf  Identification Procedure:} 
		\begin{itemize}
			\item Index $R$ according to a valid reversed topological order on $R$, denoted by $\tau$  
			\item Let $\mathbb{F}$ be the list of all constructed trees; initialize it to an empty list 
                \item Let $\mathcal{D}=\{R_j\in R\mid {\pi_j}_{\mid R=1}\text{ is not ID}\}$; initialize it to an empty list
			\item \textbf{For} each $R_k \in R$, let: 
			\begin{itemize}
				\item Let $\mathbb{T}_{k}$ be a tree data structure with $R_k$ as the root node

				\item $\calS^x_k \coloneqq \{R_j \in R \mid X_j \in \pa_\G(R_k)\}$ and ${\cal R}^p_k\coloneqq \calS^x_k \cap \de_\G(R_k)$ 
				
				
				
				\item \textbf{If} ${\cal R}^p_k= \emptyset$: \hspace{1em} $*$ Add $\mathbb{T}_{k}$ to $\mathbb{F}$  \hspace{2em}$*$ $\tilde{\calS}_k=\calS_k=\calS^x_k$ \hspace{2em}$*$ $\calS^r_k=\tilde{\calS}_k\cap \pa_\G(R_k)$
				\item[] \textbf{Else}: \hspace{1em} $*$ $(\mathbb{T}_{k}, \tilde{\calS}_k, \calS_k, \calS^r_k,  \mathbb{F},\mathcal{D}) \leftarrow$  \blue{\bf tree-construction}($R_k, \mathbb{F}, \mathcal{D}$) 
					
			\end{itemize}
		
			\item \textbf{If} ${\cal D}=\emptyset$: Print \textit{target law is identified}
            \item RETURN $\mathcal{D},\mathbb{F}$
		\end{itemize}
		
		\vspace{0.25cm}
		\State \blue{\bf tree-construction}($R_k, \mathbb{F}, \mathcal{D}$)
        {\setstretch{1.0}
		\begin{itemize}		
        \item Let $\C^{\text{dir}}_{k,k}\coloneqq\{R_i \in \de_\G(R_k) \mid X_k\in\pa_\G(R_i)\}$
		\item \textbf{For} ${R}_i \! \in \! \de_\G(R_k) \! \setminus \! \{\C^{\text{dir}}_{k,k},\mathcal{D}\}$: -- Add $R_k \rightarrow R_i$ in $\mathbb{T}_{k}$, augment $R_i$ in $\mathbb{T}_{k}$ with $\mathbb{T}_{i}$ in $\mathbb{F}$
        \item Let $\tilde{\mathcal{S}}_k = 
        \calS^x_k
        \cup_{R_i \in \ch_{\mathbb{T}_{k}}(R_k)  } {\cal S}_{i}  $ 
        \item (id\_status, $\tilde{\calS}_k,\T_k$) $\leftarrow$ \blue{\bf id-status}($R_k, \tilde{\calS}_k, \mathbb{T}_{k}$)
        \item $\calS^r_k = \tilde{\calS}_k \cap \pa_\G(R_k)$ and  
        $\calS_k=\calS^x_k \cup \calS^r_k$
        \item \textbf{If} id\_status=T: \hspace{1em} $*$ Add $\mathbb{T}_{k}$ to $\mathbb{F}$
        \hspace{0.5cm}
        \textbf{Else:} \hspace{1em} $*$ Add $R_k$ to $\mathcal{D}$
		\item RETURN $\mathbb{T}_{k}, \tilde{\calS}_k, \calS_k, \calS^r_k, \mathbb{F},\mathcal{D}$ 
	\end{itemize}
    }
		
	\vspace{0.25cm}
	\State \blue{\bf tree-prune}($R_i, \mathbb{T}_{i}, \mathbb{T}_{k}, {\cal C}_k, {\cal B}$)
	\begin{itemize}
		\item \textbf{For} all $R_m \in \ch_{\mathbb{T}_{i}}(R_i)$:
		\begin{itemize}
			\item \textbf{If} $R_m \in \mathcal{B}$: \hspace{1em} $*$ Update $\mathbb{T}_{m}$ with the one in $\mathcal{B}$
			\item \textbf{If} $R_m \in {\cal C}_k$: Prune the node $R_m$ and its corresponding tree from the subtree $\mathbb{T}_{i}$ 
		\end{itemize}
		\item Update: 
        {$*$ $\tilde{\mathcal{S}}_i = 
		\calS^x_i \ \cup_{R_m \in \ch_{\mathbb{T}_{i}}(R_i)  } \ {\cal S}_m$ \hspace{1em} $*$ $\calS^r_i=\tilde{\calS}_i\cap\pa_\G(R_i)$ \hspace{1em} $*$ $\calS_i=\calS^x_i\cup\calS^r_i$}
        \item (id\_status, $\tilde{\calS}_i,\T_i$) $\leftarrow$ \blue{\textbf{id-status}}($R_i,\tilde{\calS}_i,\T_i$)
        \item \textbf{If} id\_status=T: -- Update $\mathbb{T}_i$ in $\cal B$. Replace $\T_i$ augmented to $R_i$ in $\T_k$ by $\T_i$ from $\mathcal{B}$
        \item[] \textbf{Else}: Prune $R_i$ and augmented $\T_i$ from $\T_k$
		\item RETURN $(\mathbb{T}_{i}, \tilde{\calS}_i, \mathbb{T}_{k}, {\cal B})$
	\end{itemize}

	\vspace{0.25cm}
	\State \blue{\bf id-status}($R_k, \tilde{\mathcal{S}}_k, \mathbb{T}_{k}$)
	\begin{itemize}
        \item Let $\R^d_k\coloneqq \big\{R_j\in \tilde{\calS}_k \backslash \pa_\G(R_k) \big| R_k \not\ci R_j \mid \pa_\G(R_k) \text{ in } p(. \ | \ \doo\{\ch_{\mathbb{T}_{k}}(R_k)\})\big\}$
		\item \textbf{If} $\R^d_k=\emptyset$:  -- id\_status = T \quad -- RETURN id\_status, $\tilde{\calS}_k,\T_k$
		\item \textbf{Else If}: $\R^p_k\cap \R^d_k\neq \emptyset$: -- id\_status = F \quad -- RETURN id\_status, $\tilde{\calS}_k,\T_k$
        \item \textbf{Else}: 
        \begin{itemize}

            \item Let $\mathcal{B}$ collect updated branches for $\T_k$ and the corresponding indicators; initialized to an empty list. 
            \item Let $\C_k\coloneqq \cup_{R_j\in \R^d_k} \ \C^{\text{dir}}_{k,j}$ where $\C^{\text{dir}}_{k,j}\coloneqq\{R_i \in \de_\G(R_k) \mid X_j\in\pa_\G(R_i)\}$

            \item \textbf{For} $R_i\in \ch_{\T_k}(R_k)$: \hspace{0.3cm} {\scriptsize (consistent with the order $\tau$)} 
                    \begin{itemize}
                        \item \textbf{If} $R_i\in \C_k$: Prune the node $R_i$ and its corresponding tree from $\T_k$
                        \item \textbf{If}  $R_i \not\in \C_k$, and either (i) $\ch_{\mathbb{T}_{i}}(R_i)\cap \mathcal{B}\neq \emptyset$, or (ii) $\exists R_j\in\R^d_k \text{ s.t. }\ch_{\mathbb{T}_{i}}(R_i)\cap \C^{\text{dir}}_{k,j}\neq\emptyset$ and $R_j\in\pa_\G(R_k)$: 
                        
                        \hspace{0.25cm} -- $(\mathbb{T}_{i}, {\calS}_i, \mathbb{T}_{k}, {\cal B}) \leftarrow $ \blue{\bf tree-prune}($R_i, \mathbb{T}_{i}, \mathbb{T}_{k}, {\cal C}_k, {\cal B}$) 
                    \end{itemize}
                \item Update: 
                    { $*$ $\tilde{\mathcal{S}}_k = 
		              \calS^x_k \ \cup_{R_i \in \ch_{\mathbb{T}_{k}}(R_k)  } \ {\cal S}_i$}
                \item RETURN \blue{\bf id-status}($R_k, \tilde{\calS}_k, \mathbb{T}_{k}$)
            \end{itemize}
        \end{itemize}
	\end{algorithmic}
\end{algorithm} 


\section{Estimation Procedure}
\label{sec:estimation}

The identification algorithm in Section~\ref{sec:identification} returns, for each $R_k\in R$, an intervention tree $\mathbb T_k$ and an evaluation set $\mathcal S_k^r$ such that $\pi_k(\pa_\G(R_k))\vert_{\mathcal S_k^r=1}$ is identified as a functional of the observed data law. Given $n$ i.i.d. copies of $(X^*,R)$ drawn from the observed data law induced by the full law $p(X,R) \in \mathcal M$, this section develops estimation and inference procedures that mirror that construction. We first describe \textit{recursive inverse probability weighted}  estimating euqations for the identified propensity scores $\{\pi_k\vert_{\mathcal S_k^r=1}\}_{k=1}^K$. We then describe estimation of generic parameters defined through moment conditions under the target law. These procedures are summarized in Appendices Algorithms~\ref{alg:propensity} and \ref{alg:est}.

\subsection{Estimation of propensity scores}
\label{subsec:est-ps}

Estimation proceeds sequentially in the reversed topological order $\tau$ used by the identification algorithm. Let $\mathcal P$ and $\mathcal E$ collect the fitted propensity score models and the estimating equations, respectively, initialized as $\emptyset$. For each $R_k$, fix a parametric model for $\pi_k(\pa_\G(R_k);\theta_k)\vert_{\mathcal S_k^r=1}$ indexed by a finite dimensional parameter $\theta_k$.

Let 

\vspace{-2cm}
\begin{align}
W_k(\theta_{\mathcal T_k})
\;\coloneqq\;
\prod_{R_i\in \mathcal T_k}
\frac{\I(R_i=1)}{\pi_i(\pa_\G(R_i);\theta_i)\vert_{\mathcal S_i^r=1}},
\label{eq:Wk}
\end{align}%
\vspace{-1.25cm} 

be the inverse propensity weight with the convention $W_k(\theta_{\mathcal T_k})\equiv 1$ if $\mathcal T_k=\emptyset$, where $\mathcal T_k$ denotes the children of $R_k$ in $\mathbb T_k$. The product in \eqref{eq:Wk} implements the same intervention logic used for identification, namely, intervening on the indicators in $\mathcal T_k$, by setting them to one and reweighting by the product of their propensity scores.

Recall the pre-selection set $\widetilde{\mathcal S}_k$ in \eqref{eq:pre-select}, and let $f_k(\pa_\G(R_k))$ be a vector whose dimension matches that of $\theta_k$. We estimate $\theta_k$ by solving the empirical estimating equation

\vspace{-1.75cm}
\begin{align}
P_n \Psi_k(X^*,R; \theta_k,\hat{\theta}_{\mathcal T_k}) \;=\; 0,
\label{eq:ps-est-eq}
\end{align}%
\vspace{-2cm}
 
where 

\vspace{-1.8cm}
\begin{align}
\Psi_k(X^*,R;\theta_k,\hat{\theta}_{\mathcal T_k})
\;\coloneqq\;
\I(\widetilde{\mathcal S}_k=1)\,
W_k(\hat{\theta}_{\mathcal T_k})\,
f_k(\pa_\G(R_k))\,
\big\{R_k-\pi_k(\pa_\G(R_k);\theta_k)\big\}.
\label{eq:Psi_k_def}
\end{align}
\vspace{-1.65cm}

The restriction $\I(\widetilde{\mathcal S}_k=1)$ ensures that the covariates in $\pa_\G(R_k)$ and $\pa_\G(R_i)$ for all $R_i\in\mathcal T_k$ are observed and that each propensity score appearing in the weight $W_k(\theta_{\mathcal T_k})$ is evaluated under its required evaluation $\mathcal S_i^r=1$. When $\mathcal T_k=\emptyset$, \eqref{eq:Psi_k_def} reduces to a standard estimating equation computed on the subset $\widetilde{\mathcal S}_k=1$. When $\mathcal T_k\neq\emptyset$, the weighting term implements the post-intervention distribution used to justify identification of $\pi_k$.

The pruning step in Algorithm~\ref{alg:ID} may modify a previously stored subtree $\mathbb T_i$ when appending it to $\mathbb T_k$. This changes the child set $\mathcal T_i=\ch_{\mathbb T_i}(R_i)$ and therefore changes the estimating equation \eqref{eq:Psi_k_def} used to estimate $\theta_i$. In this case, $\theta_i$ must be re-estimated using the pruned version of $\mathbb T_i$. We use the same notation $\Psi_i$ for the modified estimating functions, but to distinguish them, we collect them in $\mathcal{E}_k$, with $\mathcal P_k$  collecting the propensity score fits after all re-estimation steps needed for identifying $\pi_k$; the subscript $k$ indicates they are specifically tailored for estimating $\pi_k$. Re-estimation is also performed in the order $\tau$, ensuring that any parameters required to form weights in descendant equations are available when needed.

The estimator $\hat{\theta}_k$ is consistent for $\theta_k$ if the models for $\pi_k$ and all propensity scores for $R_i \in \mathcal{T}_k$ are correctly specified and standard regularity conditions hold \citep{liang1986longitudinal,robins1995analysis}. Its asymptotic variance has a sandwich form, with contributions from the estimating equation for $\theta_k$ and the associated propensity score equations, detailed below.

Let $\boldsymbol{\theta}_k$ denote the stacked parameter vector consisting of $\theta_k$ and the parameters associated with all indicators appearing in the (possibly pruned) intervention tree $\mathbb T_k$, ordered so that $\theta_k$ is last. Let $\boldsymbol{\Psi}_k(\boldsymbol{\theta}_k)$ be the corresponding stacked estimating function obtained by collecting the estimating equations \eqref{eq:Psi_k_def} for $R_k$, for its children in $\mathbb T_k$, and recursively for all descendants of those children in $\mathbb T_k$. By construction, $\boldsymbol{\Psi}_k$ includes exactly the estimating equations whose solutions are required to form the inverse propensity weights used in estimating $\pi_k$.

\begin{theorem}[Asymptotic normality of recursive propensity score estimators]
\label{thm:ps-asymp}
Assume that each propensity score model appearing in $\boldsymbol{\Psi}_k(\boldsymbol{\theta}_k)$ is correctly specified, and that standard regularity conditions for M-estimation hold, including differentiability of $\boldsymbol{\Psi}_k$, finiteness of second moments, and nonsingularity of
$A_k\coloneqq \E\{\partial \boldsymbol{\Psi}_k(\boldsymbol{\theta}_k)/\partial \boldsymbol{\theta}_k\}$.
Then $\hat{\boldsymbol{\theta}}_k$, the solution to $P_n\boldsymbol{\Psi}_k(\boldsymbol{\theta}_k)=0$, satisfies

\vspace{-1.5cm}
\begin{align}
\sqrt{n}\,(\hat{\boldsymbol{\theta}}_k-\boldsymbol{\theta}_k)
\;\rightsquigarrow\;
N(0,\; V_k),
\qquad
V_k \;=\; A_k^{-1} B_k (A_k^{-1})',
\end{align}%
\vspace{-1.5cm}

where $B_k\coloneqq \E\{\boldsymbol{\Psi}_k(\boldsymbol{\theta}_k)\boldsymbol{\Psi}_k(\boldsymbol{\theta}_k)'\}$.
A consistent estimator of $V_k$ is $\hat V_k=\hat A_k^{-1}\hat B_k(\hat A_k^{-1})'$ with
$\hat A_k=P_n\{\partial \boldsymbol{\Psi}_k(\boldsymbol{\theta}_k)/\partial \boldsymbol{\theta}_k\}\vert_{\hat{\boldsymbol{\theta}}_k}$
and
$\hat B_k=P_n\{\boldsymbol{\Psi}_k(\boldsymbol{\theta}_k)\boldsymbol{\Psi}_k(\boldsymbol{\theta}_k)'\}\vert_{\hat{\boldsymbol{\theta}}_k}$.
\end{theorem}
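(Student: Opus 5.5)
The argument is standard stacked M-estimation. The only model-specific ingredient is showing that the stacked estimating function is unbiased at the truth. The plan has three steps: (i) show $\E\{\boldsymbol{\Psi}_k(\boldsymbol{\theta}_k)\}=0$ at the true parameter; (ii) obtain consistency of $\hat{\boldsymbol{\theta}}_k$ from the regularity conditions; (iii) Taylor-expand $P_n\boldsymbol{\Psi}_k$ around $\boldsymbol{\theta}_k$ and apply the CLT to obtain the sandwich form.

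\textbf{Unbiasedness.} I will induct along the order $\tau$, so that for every $R_i$ whose equation appears in $\boldsymbol{\Psi}_k$, the equations of its tree-children in $\mathbb T_i$ are already known to be unbiased. Fix such an $R_i$ and evaluate $\Psi_i$ at the true parameters.

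The weight $\I(\mathcal T_i=1)\prod_{R_m\in\mathcal T_i}\pi_m^{-1}$ acts on the full law exactly as the composed fixing operator $\phi^p_{\sigma_i}$ of \eqref{eq:fixing_single}. To see this, peel off one fixed indicator at a time in the order $\sigma_i$, using the tower property. Each $\pi_m$ is evaluated at $\mathcal S_m^r=1$, and this evaluation is guaranteed by $\I(\widetilde{\mathcal S}_i=1)$ because $\mathcal S_m^r\subseteq\mathcal S_m\subseteq\widetilde{\mathcal S}_i$. Hence
\[
\E\{\Psi_i\}=\E_{p_{\mathbb T_i}}\big[\I(\widetilde{\mathcal S}_i\setminus\mathcal T_i=1)\,f_i(\pa_\G(R_i))\{R_i-\pi_i(\pa_\G(R_i))\}\big],
\]
up to the positivity constant. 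Now condition on $\pa_\G(R_i)$. By the identification criterion \eqref{eq:id_criteria}, which Theorem~\ref{thm:pi_id_functional} assumes, $R_i$ is independent of $\widetilde{\mathcal S}_i\setminus\pa_\G(R_i)$ given $\pa_\G(R_i)$ in $p_{\mathbb T_i}$. The remaining indicators in $\pa_\G(R_i)\cap\widetilde{\mathcal S}_i=\mathcal S_i^r$ are set to one, which is exactly the evaluation at which the model $\pi_i(\cdot;\theta_i)\vert_{\mathcal S_i^r=1}$ is correctly specified. Theorem~\ref{thm:pi_id_functional} then gives $\E_{p_{\mathbb T_i}}[R_i\mid\pa_\G(R_i),\widetilde{\mathcal S}_i=1]=\pi_i(\pa_\G(R_i))\vert_{\mathcal S_i^r=1}$, so the inner conditional expectation vanishes.

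Two points need care, and I expect this step to be the main obstacle. First, after pruning the relevant trees are the pruned versions $\mathbb T_i$ stored in $\mathcal E_k$, and the criterion must hold for those pruned trees. Second, the counterfactual parents must be replaced by their observed values on $\widetilde{\mathcal S}_i=1$, which requires $\mathcal S_i^x\subseteq\widetilde{\mathcal S}_i$; this holds by the definition \eqref{eq:pre-select}. I will use Corollary~\ref{cor:pi_id_ratio} to justify that the weighted conditional mean matches the ratio form.

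\textbf{Consistency and asymptotic normality.} Given unbiasedness and identifiability, which here means a unique zero guaranteed by the nonsingularity of $A_k$ together with the standard regularity conditions, consistency follows from the usual Z-estimator argument. The stacked system is block-triangular: each equation depends only on its own parameter and on parameters of earlier indicators in $\tau$. So one may equivalently solve the equations sequentially, and the joint solution coincides with the sequential estimates.

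For the distributional limit, a mean-value expansion gives
\[
0=P_n\boldsymbol{\Psi}_k(\hat{\boldsymbol{\theta}}_k)=P_n\boldsymbol{\Psi}_k(\boldsymbol{\theta}_k)+\{A_k+o_p(1)\}(\hat{\boldsymbol{\theta}}_k-\boldsymbol{\theta}_k).
\]
Here the derivative's uniform convergence comes from differentiability plus a dominating envelope; the weights are bounded by $\sigma^{-|\mathcal T_k|}$ thanks to the positivity assumption. Therefore
\[
\sqrt n(\hat{\boldsymbol{\theta}}_k-\boldsymbol{\theta}_k)=-A_k^{-1}\sqrt n P_n\boldsymbol{\Psi}_k(\boldsymbol{\theta}_k)+o_p(1),
\]
and the multivariate CLT, using finite second moments, yields $N(0,A_k^{-1}B_kA_k^{-1\prime})$.

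Because the system is stacked, the off-diagonal blocks of $A_k$ automatically carry the effect of estimating the nuisance propensity parameters in the weights. Consistency of $\hat V_k$ follows from consistency of $\hat{\boldsymbol{\theta}}_k$, a uniform law of large numbers for $\partial\boldsymbol{\Psi}_k$ and $\boldsymbol{\Psi}_k\boldsymbol{\Psi}_k'$ over a neighborhood of the truth, and the continuous mapping theorem applied to matrix inversion.
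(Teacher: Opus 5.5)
Your proof is correct, but on the one substantive step, unbiasedness of each $\Psi_i$ at the true parameters, it takes a different and more careful route than the paper.

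The paper peels the weights off one indicator at a time. It replaces $\I(s_1=1)$ by $\E\{\I(s_1=1)\mid X,R\setminus s_1\}$, equates this with $\pi_{s_1}(\pa_\G(s_1))$ by the local Markov property, cancels, and repeats. It ends at $\E[f_k(\pa_\G(R_k))\{R_k-\pi_k(\pa_\G(R_k))\}]=0$ and leaves the expansion and sandwich to standard M-estimation theory. That computation has three weaknesses:
\begin{itemize}
\item it never uses the identification criterion;
\item the factor $\I(\widetilde{\mathcal S}_k=1)$ has disappeared by the last line;
\item the conditioning step is not implied by local Markov when $s_1$ has children.
\end{itemize}

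You instead change measure once, to the truncated factorization $p_{\mathbb T_i}=\phi^p_{\sigma_i}\{p\}$. You keep the leftover factor $\I(\widetilde{\mathcal S}_i\setminus\mathcal T_i=1)$ and remove it using \eqref{eq:id_criteria}, invariance of $p(R_i\mid\pa_\G(R_i))$, and $\widetilde{\mathcal S}_i\cap\pa_\G(R_i)=\mathcal S_i^r$. This is exactly where the criterion is needed. For example, for $\pi_4$ in $\mathcal G_2$ of Figure~\ref{fig:ex_tree}(b), $\widetilde{\mathcal S}_4$ contains $R_5\notin\pa_\G(R_4)$, and the intervened $R_2$ has the non-intervened child $R_1$.

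Justify your change of measure directly by the density identity $p(X,R)/\prod_{R_m\in\mathcal T_i}\pi_m=p(X)\prod_{R_j\notin\mathcal T_i}p(R_j\mid\pa_\G(R_j))$ on $\{\mathcal T_i=1\}$. Do not justify it by tower-property peeling, which would reintroduce the same conditioning problem.

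As for what each route buys: the paper's is shorter and adequate when all leftover selection lies in $\pa_\G(R_k)$. Yours covers the general case and also spells out the expansion, CLT, and variance-consistency steps.

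Several of your precautions are unnecessary:
\begin{itemize}
\item Unbiasedness is checked at the joint truth, so no induction along $\tau$ is needed.
\item The truncated factorization is already normalized, so there is no positivity constant.
\item The pruned-tree concern is handled by construction: \textbf{tree-prune} reruns \textbf{id-status} on each modified subtree and drops $R_i$ if it fails.
\end{itemize}
Finally, nonsingularity of $A_k$ gives only local uniqueness of the root, so global uniqueness has to come from the assumed regularity conditions.
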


The asymptotic variance of $\hat\theta_k$ is the bottom-right block of $V_k$ (and similarly for $\hat V_k$).

In settings where parametric models for $\pi_k$ are difficult to specify, one may replace $\pi_k(\cdot;\theta_k)$ by a flexible regression estimator of $p(R_k=1\,|\, \pa_\G(R_k))$ fit on the subset $\widetilde{\mathcal S}_k=1$, using observation weights $W_k$ when $\mathcal T_k\neq\emptyset$. Many flexible machine learning and statistical models support weights, including tree-based methods and boosting, and can flexibly capture nonlinearities and high-order interactions among covariates. Inference in this setting typically relies on the bootstrap, since closed form variance expressions are generally unavailable; formal asymptotic normality requires additional conditions (e.g., sample splitting or complexity restrictions) and is not pursued here.

\subsection{Statistical analyses for functionals of the target law}
\label{subsec:est-target}

Let $\theta$ denote a target parameter defined as a functional of the target law $p(X)$. We assume $\theta$ is characterized by a moment condition

\vspace{-1.5cm}
\begin{align}
\E\{\, M(\tilde{X}; \, \theta) \, \}=0,
\label{eq:target-moment}
\end{align}%
\vspace{-1.5cm} 

where $\tilde{X}\subseteq X$ is the subset of variables required to evaluate $M$ and the dimension of $M$ matches that of $\theta$. Let $\tilde{R}\subseteq R$ denote the missingness indicators for $\tilde{X}$.

Because $M(\tilde{X};\theta)$ is only observed when $\tilde{R}=1$, estimation must adjust for selection. Moreover, each propensity score appearing in the required inverse probability weights may itself only be evaluable on a further restricted subset determined by its selection set. This motivates the following closure construction.
Define the operator $\operatorname{cl}(\cdot)$ on sets of indicators by 

\vspace{-1.5cm}
\begin{align}
\operatorname{cl}(A)\;\coloneqq\; A \cup \bigcup_{R_i\in A} \mathcal S_i,
\label{eq:closure-op}
\end{align}%
\vspace{-1.5cm} 

where $\mathcal S_i$ is the selection set for $R_i$ produced by the identification algorithm. Starting from $A_0\coloneqq \tilde{R}$, define the recursion $A_{\ell+1}\coloneqq \operatorname{cl}(A_\ell)$. Since $R$ is finite, there exists $\ell^\ast$ such that $A_{\ell^\ast+1}=A_{\ell^\ast}$. We define the smallest set of indicators containing $\tilde{R}$ that is closed under inclusion of the selection sets $\{\mathcal S_i\}$ 

\vspace{-2cm}
\begin{align}
\mathcal R \;\coloneqq\; A_{\ell^\ast},
\label{eq:R_t_def}
\end{align}%
\vspace{-1.7cm} 

Given estimates of $\pi_i(\pa_\G(R_i))\vert_{\mathcal S_i^r=1}$ for $R_i \in \mathcal R$ from Section~\ref{subsec:est-ps}, define

\vspace{-1.3cm}
\begin{align}
\Psi(X^*,R;\theta, \theta_{\mathcal R})
\;\coloneqq\;
\I(\mathcal R=1)\,
\Big\{
\prod_{R_i\in \mathcal R}
\pi_i(\pa_\G(R_i);\theta_i)\vert_{\mathcal S_i^r=1}
\Big\}^{-1}
M(\tilde{X};\theta),
\label{eq:Psi_t_def}
\end{align}%
\vspace{-1.3cm} 

where $\theta_{\mathcal R}\coloneqq (\theta_i)_{R_i\in \mathcal R}$. The estimator $\hat\theta$ is obtained by solving

\vspace{-1.65cm}
\begin{align}
P_n\Psi(X^*, R; \, \theta,\hat\theta_{\mathcal R})=0, 
\label{eq:theta_t_est}
\end{align}
\vspace{-1.5cm}

and is consistent for $\theta$ if all the propensity scores for indicators in $\R$ are consistently estimated. 

\begin{theorem}[Asymptotic normality for target parameters]
\label{thm:target-asymp}
Assume that all propensity score models associated with indicators in $\mathcal R$ are correctly specified, and that standard regularity conditions for the stacked estimating equations hold for the system consisting of the propensity score estimating equations used to fit $\{\theta_i:R_i\in\mathcal R\}$ together with \eqref{eq:theta_t_est}. Then $\hat\theta$ is consistent and $\sqrt{n}\,(\hat\theta-\theta)\;\rightsquigarrow\;N(0,\;V)$, where $V$ is obtained as the appropriate block of the sandwich covariance matrix for the stacked system $\boldsymbol{\Psi}(\boldsymbol{\theta})$, i.e., the estimating equations required to evaluate the inverse probability weights appearing in $\Psi$
\end{theorem}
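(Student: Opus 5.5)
The plan is to treat the problem as a standard stacked M-estimation (Z-estimation) problem and then read off the block for $\theta$. Let $\boldsymbol{\theta}=(\theta_{\mathcal R}^{\mathrm{all}},\theta)$. Here $\theta_{\mathcal R}^{\mathrm{all}}$ collects every propensity parameter needed to form the weights in \eqref{eq:Psi_t_def}: the $\theta_i$ for $R_i\in\mathcal R$, together with, recursively, the parameters of all indicators in the (possibly pruned and re-estimated) trees $\mathbb T_i$ used in their own estimating equations \eqref{eq:Psi_k_def}. Let $\boldsymbol{\Psi}(\boldsymbol{\theta})$ stack the corresponding equations $\Psi_i$ (ordered along $\tau$) with $\Psi$ placed last.

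The first step is to show that $\E\,\boldsymbol{\Psi}(\boldsymbol{\theta}_0)=0$ at the true parameters. For the propensity blocks this is the content of Theorem~\ref{thm:ps-asymp}, which rests on Theorem~\ref{thm:pi_id_functional} and Corollary~\ref{cor:pi_id_ratio}: the weighted restricted score has conditional mean zero given $\pa_\G(R_i)$. For the target block, the closure property \eqref{eq:R_t_def} ensures three things on $\{\mathcal R=1\}$: $\tilde X$ is observed, every $\pa_\G(R_i)$ is observed, and each $\pi_i$ is evaluated on its admissible set $\mathcal S_i^r=1$, because $\mathcal S_i\subseteq\mathcal R$. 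I would then argue that
\[
\E\Big[\I(\mathcal R=1)\prod_{R_i\in\mathcal R}\pi_i^{-1}M(\tilde X;\theta_0)\Big]=\E[M(\tilde X;\theta_0)]=0 .
\]
The argument proceeds by sequential fixing along a reverse topological order restricted to $\mathcal R$. Applying the operator \eqref{eq:fixing_single} to each $R_i\in\mathcal R$ replaces $X_i$ by $X_i^*$, so the weighted law equals the margin of $p(X,R\setminus\mathcal R)$ after intervening on $\mathcal R$. Since the target law is invariant to interventions on $R$ (there are no $R\to X$ edges), the moment reduces to $\E M(\tilde X;\theta_0)$. The subtle point is that $p(R_i=1\mid\pa_\G(R_i))$ at $\mathcal S_i^r=1$ coincides with the kernel used by the fixing operator, since all parent indicators in $\mathcal R$ are already set to one. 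This is where the closure and the identification results are used, and I expect it to be the main obstacle: one must check that the restricted evaluations compose consistently, in the same way as \eqref{eq:Bayes} with $\mathcal R$ in place of $R$ and the non-$\mathcal R$ factors marginalized out.

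The second step is consistency. The stacked system is block lower-triangular in $\tau$, since each $\Psi_i$ depends only on its own parameter and those of its tree descendants, and $\Psi$ depends on all of them. Consistency therefore follows recursively. Each propensity estimator is consistent by Theorem~\ref{thm:ps-asymp} under correct specification. Given these, $\hat\theta$ solves an equation whose population version has the unique root $\theta_0$ by the regularity (identifiability) assumptions, and a uniform law of large numbers finishes the argument. Positivity, $\pi_i>\sigma$, bounds the weights and guarantees the finite second moments.

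The third step is the Taylor expansion of $P_n\boldsymbol{\Psi}(\hat{\boldsymbol{\theta}})=0$ around $\boldsymbol{\theta}_0$, which gives
\[
\sqrt n(\hat{\boldsymbol{\theta}}-\boldsymbol{\theta}_0)=-A^{-1}\sqrt n P_n\boldsymbol{\Psi}(\boldsymbol{\theta}_0)+o_p(1),
\]
where $A=\E\,\partial\boldsymbol{\Psi}/\partial\boldsymbol{\theta}$ is nonsingular. The central limit theorem then yields $N(0,A^{-1}B(A^{-1})')$ with $B=\E\,\boldsymbol{\Psi}\boldsymbol{\Psi}'$. Taking the bottom-right block gives $V$, which automatically accounts for the estimation of the weights through the off-diagonal blocks $\E\,\partial\Psi/\partial\theta_i$. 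A plug-in sandwich estimator is consistent by the same arguments as in Theorem~\ref{thm:ps-asymp}.
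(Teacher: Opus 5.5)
Your proposal is correct and is essentially the paper's argument. The paper's proof shows $\E\{\Psi(X^*,R;\theta,\theta_{\mathcal R})\}=0$ by removing the indicators of $\mathcal R$ one at a time along $\tau$, using the tower rule and the local Markov property, with $\mathcal S_i^r\subseteq\mathcal R$ guaranteeing that each weight cancels against the true propensity; this is the same computation as your fixing/truncated-factorization step, and the paper then leaves consistency and the sandwich form to standard stacked M-estimation theory, which you spell out. Your one-shot truncated factorization is, if anything, slightly cleaner: it marginalizes the indicators outside $\mathcal R$ rather than conditioning on all of $R\setminus\{s_1\}$, which can contain descendants of $s_1$ (e.g., $R_1,R_2$ when $\mathcal R=\{R_3\}$ in Figure~\ref{fig:ex_tree}(c)), and for those the local Markov step does not literally apply.
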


We briefly illustrate the estimation procedure for a simple functional of the target law, namely the mean of a partially observed variable. 

\begin{example}
Consider estimation of $\theta=\E(X_3)$ in the missing data model of Figure~\ref{fig:ex_tree}(c). Evaluating the estimating function $M(\tilde{X};\theta)=X_3-\theta$ requires observing $X_3$, so the initial indicator set is $\tilde{R}=\{R_3\}$. In this graph, the selection set associated with $R_3$ is empty, implying that the closure construction stabilizes immediately and yields $\mathcal R=\{R_3\}$. As a result, $\theta$ is identified even though the full target law is not, and can be estimated using a single inverse probability weighted estimating equation based on the propensity score for $R_3$. By contrast, attempting to estimate the mean of other variables in the same graph leads to a closure $\mathcal R$ that includes indicators whose propensity scores are not identified, revealing non-identifiability of the corresponding means. 
\end{example}

These examples highlight how the proposed framework distinguishes identifiable functionals from non-identifiable ones, even when full recovery of the target law is impossible. Detailed derivations for these mean estimators, together with additional examples illustrating recursive closure, parametric regression, and causal effect estimation, are provided in Appendix~\ref{app:est_examples}.

\section{Simulations}
\label{sec:sims}

We evaluated the finite-sample performance of the proposed estimation procedures across three statistical tasks of increasing complexity: (1) mean estimation, (2) parametric regression, and (3) causal effect estimation. We compared our approach with complete-case analysis and two widely used imputation-based methods implemented in \textsf{R}: Amelia, which relies on a multivariate normal model estimated via EM with bootstrapping, and multiple imputation by chained equations (MICE). The accompanying \textsf{R} package \href{https://github.com/annaguo-bios/flexMissing}{\texttt{flexMissing}} implements the proposed methods. The simulation code is provided separately at \texttt{annaguo-bios/missing-tree-paper}.

Across all tasks, data were generated from four mDAGs of increasing complexity. These include one three-variable MAR model (Appendix Figure~\ref{fig:sims_tree}(a)) and three MNAR models: a three-variable mDAG (Appendix Figure~\ref{fig:sims_tree}(c)), a five-variable mDAG (Appendix Figure~\ref{fig:sims_tree}(e)), and a ten-variable mDAG (Appendix Figure~\ref{fig:sims_tree}(f)). We denote these by $\mathcal G_1$ through $\mathcal G_4$. For each mDAG, we considered sample sizes of 500, 1000, 2000, 4000, and 8000, with 500 Monte Carlo replicates per setting. Data-generating processes are detailed in Appendix~\ref{appsub:dgp}. Default settings were used for Amelia and MICE, including five imputations.

\textbf{Task~1: mean estimation.} 
For all mDAGs, we generated continuous data and targeted the mean of $X_3$. Estimation bias across simulation replicates is summarized using boxplots in Figure~\ref{fig:sim1-mean}. Under the MAR model $\mathcal G_1$, all methods except complete-case analysis exhibited negligible bias as sample size increased. In contrast, complete-case analysis remained biased even at the largest sample size. Under the MNAR models $\mathcal G_2$ through $\mathcal G_4$, only the proposed tree-based method consistently recovered the true mean, while all competing methods exhibited substantial bias that did not attenuate with increasing sample size. See Appendix~\ref{appsubsub:dgp-sim1} for the DGP and Appendices~\ref{appsub:sim-id} and \ref{appsub:sim-estimation-mean} for estimation details. 

\begin{figure}[t]
    \centering
    \includegraphics[width=1\linewidth,clip, trim=0 5 0 5]{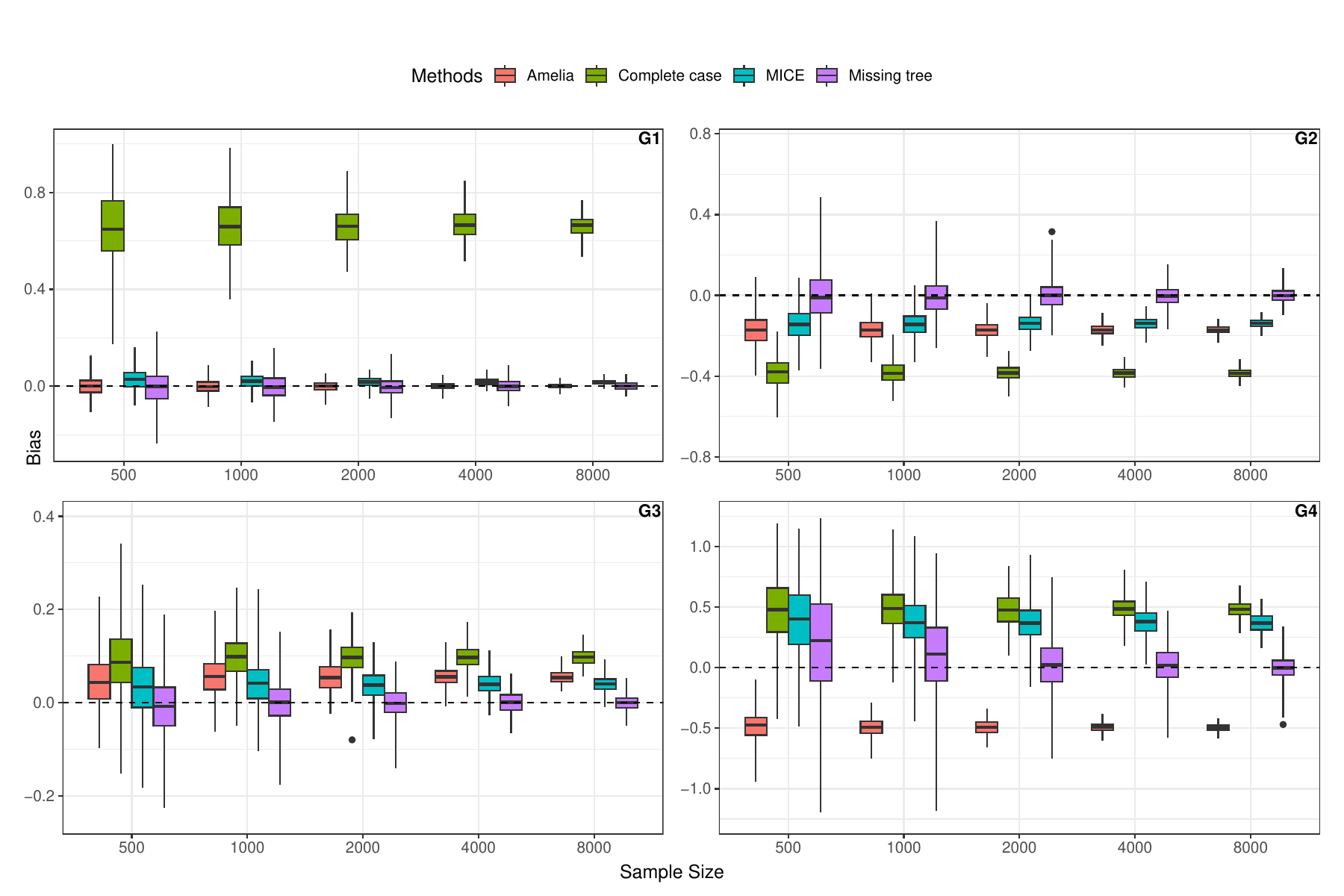}
    \caption{Simulation results for estimation of a mean using four missing data methods: Amelia, complete-case analysis, MICE, and the proposed tree-based method. Panels correspond to data generated under mDAGs $\mathcal{G}_1$ through $\mathcal{G}_4$.}
    \label{fig:sim1-mean}
\end{figure}

\textbf{Task~2: parametric regression.} 
We next assessed inference for regression parameters under missing data. The data-generating process from Task 1 was modified by removing the direct effect of $X_1$ on $X_3$, inducing the conditional independence $X_1 \ci X_3 \,|\, X_2$ in all mDAGs. We tested this null hypothesis using Wald tests in a correctly specified linear model for $X_3$ given $X_1$ and $X_2$. Performance was evaluated using bias, root mean squared error, type I error, and 95\% confidence interval coverage, with results summarized in Table~\ref{table:sim2}. For detailed DGP and estimation derivations see Appendices~\ref{appsubsub:dgp-sim2}, \ref{appsub:sim-id} and \ref{appsub:sim-estimation-regression}, respectively.

\newcommand{\huxb}[2]{}
\providecommand{\huxvb}[2]{\color[RGB]{#1}\vrule width #2pt}
  \providecommand{\huxtpad}[1]{\rule{0pt}{#1}}
  \providecommand{\huxbpad}[1]{\rule[-#1]{0pt}{#1}}

\begin{table}[t]
\centering
\captionsetup{justification=centering,singlelinecheck=off}
\caption{Results for conditional independence tests across methods and graphs.}
 \setlength{\tabcolsep}{0pt}
\resizebox{1\textwidth}{!}{
\renewcommand{\arraystretch}{1.1}
 \setlength{\extrarowheight}{0pt}%
 \setlength{\lineskip}{0pt}\setlength{\lineskiplimit}{0pt}%
 \setlength{\tabcolsep}{0pt}
 \setlength{\arrayrulewidth}{0.5pt}
}\label{table:sim2}

\end{table}

Under MAR $\mathcal G_1$, all methods achieved nominal performance. Under MNAR $\mathcal G_2$, only the proposed method maintained negligible bias, near-nominal type I error, and correct coverage. Competing methods exhibited inflated type I error and poor coverage due to spurious associations induced by conditioning on observed data. For the more complex MNAR models $\mathcal G_3$ and $\mathcal G_4$, complete-case analysis and the proposed method performed well, while imputation-based methods failed in a model-dependent manner. The unbiasedness of complete-case analysis follows from the imposed graphical structure; see Subsection~\ref{appsub:sim-estimation-regression} for a detailed discussion. Overall, while the validity of competing approaches depended on the specific missingness structure, the proposed method consistently achieved correct inference across all settings considered.

\textbf{Task~3: causal effect estimation.} 
Finally, we evaluated estimation of the causal effect of a binary treatment $X_2$ on a continuous outcome $X_3$ using a g-formula (back-door adjustment) controlling for $X_1$. In this task, $X_2$ was generated as a binary variable, violating the multivariate normal assumption underlying Amelia. Results are summarized in Figure~\ref{fig:sim3-causal}. For detailed DGP and estimation derivations see Appendices~\ref{appsubsub:dgp-sim3}, \ref{appsub:sim-id} and \ref{appsub:sim-estimation-causal}, respectively.

\begin{figure}[t]
    \centering
    \includegraphics[width=1\linewidth,clip, trim=0 5 0 5]{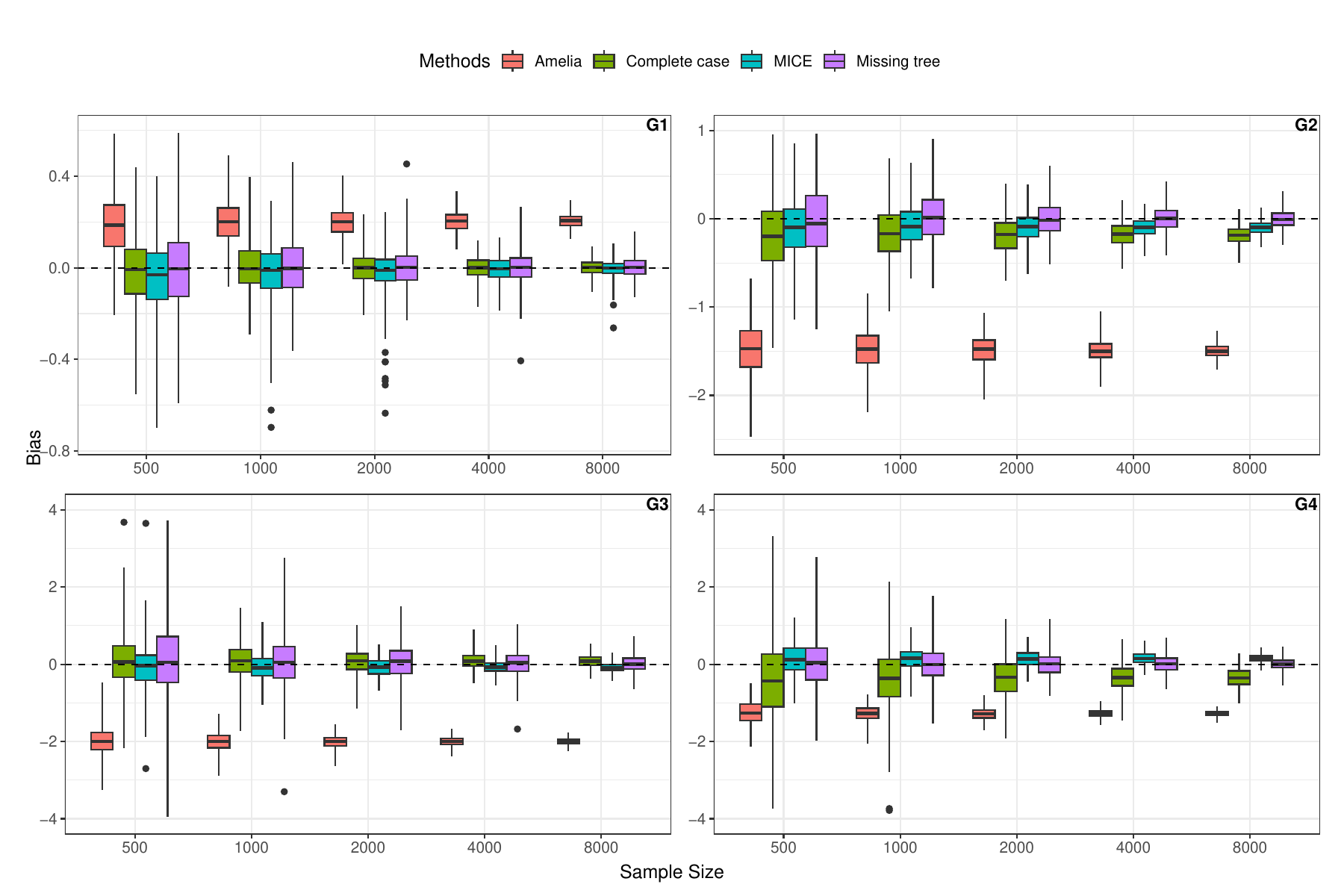}
    \caption{Simulation results for estimating an average causal effect. Missing data are handled using four methods: Amelia, complete-case analysis, MICE, and the proposed tree-based method. Panels corresponds to data generated under mDAGs $\mathcal{G}_1$ through $\mathcal{G}_4$.}
    \label{fig:sim3-causal}
\end{figure}

Across all mDAGs, the proposed method yielded estimates with negligible bias. Amelia produced biased estimates under all models, with bias persisting as sample size increased. Complete-case analysis and MICE were unbiased under the MAR model $\mathcal G_1$, but yielded biased estimates under the MNAR models $\mathcal G_2$ through $\mathcal G_4$.

\section{Data application}
\label{sec:application}

We illustrate the proposed framework using data from the \href{https://services.fsd.tuni.fi/catalogue/FSD3185}{Student Feedback Survey for Bachelor Graduates 2016}, a national survey of 11{,}708 Finnish university students who completed a Bachelor’s degree or studied for three years in programs without one. The survey collects information on students' academic experiences, study financing, and well-being. 

Our analysis focuses on two questions concerning study financing. Students were asked whether they funded their studies through student loans and through personal income from work, with response options indicating complete, partial, or no funding. These variables, denoted by $X_1$ and $X_2$, exhibit missingness rates of $14.7\%$ and $8.2\%$, respectively. The relationship between $(X_1, X_2)$ and their missingness indicators is represented by the mDAG shown in Appendix Figure~\ref{fig:supp_figs}(b), which was used and justified by \cite{tikka2024full}. Under this graphical model, the target law is identified. Our objective is to estimate the parameters of the target law, as well as the missingness mechanism. Since both variables are discrete, all relevant components are correctly specified using saturated regressions.

\begin{table}[t]
\centering
\caption{Parameter estimates with 95\% confidence intervals for the real data application. The estimate of $p(R_1 = 1)$ is $0.853\,(0.847, 0.860)$ and is omitted from the table for brevity.}
\label{table:realdata}
\setlength{\tabcolsep}{8pt}
\resizebox{0.95\linewidth}{!}{%
\begin{tabular}{p{4cm}ccc}
\toprule
 & $x_1 = 1$ & $x_1 = 2$ & $x_1 = 3$ \\
\midrule
$p(X_1=x_1)$ & 0.157 (0.149, 0.164) & 0.443 (0.433, 0.453) & 0.401 (0.391, 0.410) \\
\midrule
$p(X_2=1 \,|\, x_1)$ & 0.370 (0.344, 0.395) & 0.432 (0.417, 0.446) & 0.473 (0.458, 0.489) \\
$p(X_2=2 \,|\, x_1)$ & 0.538 (0.512, 0.564) & 0.515 (0.515, 0.545) & 0.439 (0.423, 0.454) \\
$p(X_2=3 \,|\, x_1)$& 0.093 (0.078, 0.108) & 0.039 (0.033, 0.044) & 0.088 (0.079, 0.097) \\
\midrule
$p(R_2 = 1 \,|\, R_1 = 1, x_1)$ & 0.902 (0.887, 0.916) & 0.969 (0.963, 0.974) & 0.995 (0.993, 0.997) \\
\bottomrule
\end{tabular}
}
\end{table}

Table~\ref{table:realdata} reports point estimates and 95\% confidence intervals for the target law and missingness parameters, closely matching the maximum likelihood results of \cite{tikka2024full}.

\section{Discussion}
\label{sec:conc}

Commonly used complete-case, imputation-based, and EM-based methods can perform poorly under missing-not-at-random mechanisms because they are typically agnostic to the structure of the missingness process. In contrast, the approach developed in this paper explicitly tailors identification and estimation to a specified missingness mechanism. By encoding assumptions about missingness through a graphical model and using this structure to guide both identification and weighting-based estimation, our framework makes transparent when nonparametric identification and valid inference is possible and why it may fail. This perspective highlights that robustness to MNAR mechanisms does not come from ignoring missingness structure, but rather from leveraging it directly in a principled way.

From an identification standpoint, several important directions for future work remain. First, while this paper focuses on identification of the target law, full law identification (via our intervention/weighted based arguments) remains an open problem in general graphical missing data models. Second, our framework treats the target law as unrestricted; when additional assumptions such as independence constraints on the target law are available, they can facilitate identification and should be incorporated directly into the algorithm. Finally, the proposed identification algorithm is sound but not complete: there exist settings in which the algorithm concludes non-identification even though identification is possible via alternative arguments. For example, in the mDAG shown in Appendix Figure~\ref{fig:supp_figs}(a), our algorithm fails to identify the target law due to non-identification of a propensity score, yet it can be identified using an odds ratio parameterization, as shown by \cite{nabi20completeness}. At the same time, odds ratio parameterizations are known to fail in the presence of colluder structures, which are naturally accommodated by the present framework. An important open direction is therefore to study whether different parameterizations and identification strategies can be systematically combined to obtain procedures that are both sound and complete.

A second set of open directions concerns estimation and model selection. While this paper develops recursive inverse probability weighted estimators based on parametric models for the missingness mechanism, future work should investigate more flexible semiparametric and machine learning–based estimators, together with influence function–based inference. In addition, the current framework assumes that the missingness mechanism is specified a priori that is Markov relative to a conditional mDAG. Developing methods that combine identification and estimation with model selection or discovery of the missingness structure from data is an important and challenging direction. Progress along these lines would further broaden the practical applicability of graphical approaches to missing data, especially in complex observational studies where the missingness mechanism is only partially understood.


\vspace{1cm}
\begingroup
\renewcommand{\baselinestretch}{0.97}
\selectfont  
\setlength{\bibsep}{10pt}    
\bibliography{references}
\endgroup

\newpage

\appendix

\setcounter{figure}{0}
\renewcommand{\thefigure}{E.\arabic{figure}}

\setcounter{algorithm}{0}
\renewcommand{\thealgorithm}{E\arabic{algorithm}}
\renewcommand{\theHalgorithm}{E.\arabic{algorithm}}

\section*{\Large Appendix}

The appendix is organized as follows. 
Appendix~\ref{app:notation} introduces a glossary of notation and terminology used throughout the paper.
Appendix~\ref{app:examples_details} presents additional examples and technical details that supplement the identification results in Sections~\ref{sec:hoops} and~\ref{sec:identification}, and the estimation results in Section~\ref{sec:estimation}. 
Appendix~\ref{app:id_examples} contains worked examples of the tree-based identification algorithm. 
Appendix~\ref{app:est_examples} contains worked examples with step-by-step derivations for example estimation of means, regression coefficients, and causal effects. 
Appendix~\ref{app:est_alg} provides pseudocode for the proposed estimation algorithms for propensity score estimations outlined in Section~\ref{subsec:est-ps} and estimation of functionals of the target law outlined in Section~\ref{subsec:est-target}. 
Appendix~\ref{app:proofs} contains proofs of the main theoretical results.
Appendix~\ref{app:sims} provides full details of the simulation study, including data-generating processes for all graphs and tasks, additional figures, and supplementary tables.

\section{Glossary of terms and notations} 
\label{app:notation} 

\begin{table}[H]
\begin{center}
\caption{\centering Glossary of terms and notations}
\label{tab:notations}
\addtolength{\tabcolsep}{8pt}
\resizebox{\textwidth}{!}{
{\small
\begin{tabular}{ ll} 
    \hline  
    \textbf{Symbol}     & \textbf{Definition}  
    \\ \hline 
    $X,X_k$   & Vector of variables, $k$-th element of $X$ 
    \\
    $p(X),\mathcal{M}_X$ & Target law, model of $p(X)$
    \\
    $R,R_k$  & Missingness indicators of $X,X_k$   
    \\
    $R_k=1/R_k=0$  &  $X_k$ missing/observed
    \\
    $p(R\mid X),\mathcal{M}_{R\mid X}$ & Missingness mechanism, model of $p(R\mid X)$ 
    \\
    $p(X,R)$ & Full law
    \\
    $\mathcal{M}=\mathcal{M}_X \otimes \mathcal{M}_{R\mid X}$ & Model of $p(X,R)$ 
    \\
    $X^*,X^*_k$ & Coarsened version of $X,X_k$
    \\
    $p(X^*,R)$ & Observed data law 
    \\
    $\pa_\G(R_k)$ & Parents of $R_k$ in $\G$
    \\
    $\de_\G(R_k)$ & Descendants of $R_k$ in $\G$, including $R_k$ 
    \\
    $\nd_\G(R_k)$ & Non-descendants of $R_k$, defined as $X\cup R\backslash \de_\G(R_k)$
    \\
    $\theta(p(X))$ & Particular functional of $p(X)$ 
    \\
    $p(\cdot)|_{R_j=1}$ & Evaluation of $p(\cdot)$ at $R_j=1$
    \\
    $\pi_k(\pa_\G(R_k))$ & Propensity score of $R_k$, defined as $p(R_k=1\mid \pa_\G(R_k))$ 
    \\
    $\calS^x_k$ & Counterfactual-induced selection set, defined as $\{R_j\in R:X_j\in\pa_\G(R_k)\}$
    \\
    $\calS^r_k$ & Indicator-induced selection set 
    \\
    $\calS_k=\calS^x_k\cup \calS^r_k$ & Selection set
    \\
    $\R^p_k=\calS^x_k\cap \de_\G(R_k)$ & Problematic set 
    \\
    $\doo(R_j=1)$ & An intervention on $R_j$ setting it to 1
    \\
     $\mathcal S_{j \, \downarrow \, k} \coloneqq \mathcal S_j \cap \pa_\G(R_k)$ & Selection propagated from $\doo(R_j)$ to $R_k$ 
    \\
    $\T_k$ & Identification tree associated with $R_k$
    \\
    $\F$ & Forest that collects all trees 
    \\
    $\mathcal{D}$ & Collection of indicators whose propensity score is not identified 
    \\
    $\tau$ & A valid reversed topological order on the mDAG $\G$ 
    \\
    $\mathcal C^{\text{dir}}_{k,k}$ & Colluder descendants of $R_k$, defined as $\{R_j \in \de_\G(R_k) \, : \, X_k \in \pa_\G(R_j)\}$ 
    \\
    $R^*$ & Candidate intervention set for $R_k$ 
    \\
    $\mathcal T_k \coloneqq \ch_{\mathbb T_k}(R_k)$ & Children of $R_k$ in $\T_k$
    \\
    $\widetilde{\mathcal S}_k
    \coloneqq
    \mathcal S_k^x
    \cup
    \bigcup_{R_j\in \mathcal T_k} \mathcal S_j$ & Pre-selection set for $R_k$ 
    \\
    $p(\cdot \,|\, \doo(\mathcal T_k=1))$ & Post-intervention distribution where indicators in $\mathcal T_k$ are intervened on 
    \\
   $\mathcal R_k^d$ & Indicator in $\widetilde{\mathcal S}_k \backslash \pa_\G(R_k)$ which is dependent on $R_k$ given $\pa_\G(R_k)$ in $p(\cdot \,|\, \doo(\mathcal T_k=1))$  
    \\
    $\mathcal C^{\text{dir}}_{k,j}$ &  Descendants of $R_k$ that selection on $R_j$, defined as $\{R_i\in \de_\G(R_k)\mid X_j\in \pa_\G(R_i)\}$
    \\
    $\mathcal C_k
    \coloneqq
    \bigcup_{R_j\in \mathcal R_k^d} \mathcal C^{\text{dir}}_{k,j}$ & Descendants of $R_k$ that select on $\mathcal R^d_k$ 
    \\
    $\mathcal B$ & Collection of pruned branches
    \\
    $\phi^{\mathcal G}_{R_i}$ & Graphical fixing operation applied to $R_i$ on an mDAG $\mathcal G$ 
    \\
    $\phi^p_{R_i}\{p\}$ & Probabilistic fixing operator applied to $R_i$ in $p(X,R)$
    \\
    $\sigma_k=(s_1,\ldots,s_m)$ & Any ordering of $\mathcal T_k$ consistent with $\tau$ 
    \\
    $p_{\mathbb T_k}=\phi^p_{\sigma_k}\{p\}$ & The post-intervention distribution induced by $\mathbb T_k$
    \\
    $\mathcal P$ & Collection of fitted propensity score models 
    \\
    $\mathcal E$ & Collection of estimating equations
    \\
    $W_k(\theta_{\mathcal T_k})$ & Inverse propensity weight for estimating $\pi_k$, defined in \eqref{eq:Wk} 
    \\
    $\mathcal P_k$  &  Collection of re-fitted propensity score models tailored for estimating $\pi_k$
    \\
    $\mathcal E_k$ & Collection of re-constructed estimating equations tailored for estimating $\pi_k$  
    \\
    $\boldsymbol{\theta}_k$ &  Stacked parameter vector relevant for estimating $\theta_k$
    \\
    $\boldsymbol{\Psi}_k(\boldsymbol{\theta}_k)$ & Stacked estimating function corresponding to $\boldsymbol{\theta}_k$ 
    \\
    $M(\tilde{X}; \, \theta)$ & Estimating function for $\theta$ 
    \\
    $\tilde{X}$ &  Collection of variables required to evaluate $M$
    \\
    $\tilde{R}$ & Missingness indicators for $\tilde{X}$ 
    \\
    $\operatorname{cl}(\cdot)$ & Closure operation defined as $\operatorname{cl}(A)\;\coloneqq\; A \cup \bigcup_{R_i\in A} \mathcal S_i$
    \\
    $\mathcal R$ & Smallest set containing $\tilde{R}$ and is closed under inclusion of $\calS_i$ 
    \\
    $\theta_{\mathcal R}\coloneqq (\theta_i)_{R_i\in \mathcal R}$ &  Collection of parameters indexing propensity score $R_i\in\mathcal R$
    \\
    \hline
\end{tabular}
}
}
\end{center}
\end{table}

\section{Additional examples and details} 
\label{app:examples_details} 

\subsection{Examples illustrating key identification concepts} 
\label{app:id_examples} 

\subsubsection{Identification example via Figures~\ref{fig:ex_tree}(a, d)}

The mDAG $\G_1$ in Figure~\ref{fig:ex_tree}(a) is an example where the tree-prune procedure never gets executed. The algorithm begins with $R_1$, for which $\pi_1\coloneqq p(R_1 = 1 \,|\, R_2, R_3)$ is directly observed and therefore identified.
It then proceeds to either $R_2$, followed by $R_3$, or vice versa. Since neither variable is a descendant of the other, both orders are valid. Their propensity scores are identified analogously, so we focus on $R_2$ for illustration. $\pi_2\coloneqq p(R_2 = 1 \,|\, X_1, X_3, R_4)$, with $\calS^x_2 = \{R_1, R_3\}$ and $\R^p_2 = \{R_1\}$. An intervention on $R_1$ is applied to invoke $R_2 \perp R_1 \,|\, \pa_\G(R_2)$ in the resulting post-intervention distribution. This intervention does not impose any additional selection ($\calS_1 = \emptyset$), thus $\widetilde{\calS}_2=\calS^x_2$, and consequently $\calS^r_2 = \widetilde{\calS}_2\cap\pa_\G(R_2)= \emptyset$. Thus, $\pi_2$ is identified as a full conditional distribution. 
Finally, the algorithm considers $R_4$, for which $\pi_4\coloneqq p(R_4\,|\, X_1,X_2,X_3)$, with $\calS^x_4=\R^p_4=\{R_1,R_2,R_3\}$. $\pi_4$ is identified by intervening on all indicators in $\R^p_4$. Since $\R^d_k = \emptyset$, no pruning is needed when appending constructed subtrees to the tree for $R_4$. As a result, all propensity scores, each evaluated at its corresponding indicator evaluation set which in this case is empty, are identified, and hence the target law is identified. Key definitions of the identification procedure, including $\calS^x_k,\, \R^p_k,\, \C^{\text{dir}}_{k,k},\, \ch_{\T_k}(R_k),\, \widetilde{\calS}_k,$ and $\calS_k$ for each indicator $R_k$, are summarized in Appendix Table~\ref{apptab:ex-tree1}.

\begin{table}[t]
\centering
\renewcommand{\arraystretch}{1.3}  
\caption{Key definitions used in the tree-based identification algorithm, illustrated using the mDAG in Figure~\ref{fig:ex_tree}(a).}
\label{apptab:ex-tree1}
\resizebox{\textwidth}{!}{\begin{tabular}{lcccccccc}
\toprule
$R_k$ & \textbf{Propensity scores} & $\mathcal{S}^x_k$ & $\R^p_k$ & $C^{\text{dir}}_{k,k}$ & $\ch_{\T_k}(R_k)$ & $\tilde{\mathcal{S}}_k$ & $\mathcal{S}^r_k$ & $\mathcal{S}_k$ \\
\midrule
$R_1$ & $p(R_1 | R_2, R_3)$ & $\emptyset$ & $\emptyset$ & $\emptyset$ & $\emptyset$ & $\emptyset$ & $\emptyset$ & $\emptyset$ 
\\
$R_2$ & $p(R_2 | X_1, X_3, R_4)$ & $\{R_1, R_3\}$ & $\{R_1\}$ & $\emptyset$ & $\{R_1\}$ & $\{R_1, R_3\}$ & $\emptyset$ & $\{R_1, R_3\}$ 
\\
$R_3$ & $p(R_3 | X_1, X_2, R_4)$ & $\{R_1, R_2\}$ & $\{R_1\}$ & $\emptyset$ & $\{R_1\}$ & $\{R_1, R_2\}$ & $\emptyset$ & $\{R_1, R_2\}$ 
\\
$R_4$ & $p(R_4 | X_1, X_2, X_3)$ & $\{R_1, R_2, R_3\}$ &  $\{R_1, R_2, R_3\}$ &$\emptyset$ & $\{R_1, R_2, R_3\}$ & $\{R_1, R_2, R_3\}$ & $\emptyset$ & $\{R_1, R_2, R_3\}$ 
\\
\bottomrule
\end{tabular}
}
\end{table}

\subsubsection{Identification example via Figures~\ref{fig:ex_tree}(b, e)}

The mDAG $\G_2$ in Figure~\ref{fig:ex_tree}(b) is an example where tree-prune gets executed. With the pruning, all selection biases can be avoided and all the propensity scores under proper evaluation are identified. 
The identification algorithm begins with $R_1$, followed by $R_2$, whose propensity scores are easily identified via the associational irrelevancy criterion. We then focus on $R_3$, for which $\pi_3 \coloneqq p(R_3=1 \,|\, X_1, R_4, R_5, R_6)$, with $\R^p_3 = \{R_1\}$. Intervening on both $R_1$ and $R_2$ makes $\pi_3$ evaluated at $\calS^r_3 = \{R_4, R_5\}$ identified. In this evaluation, $R_4$ results from the intervention on $R_1$, whereas $R_5$ results from the intervention on $R_2$. The algorithm can then proceed to either $R_4$ or $R_5$, and then to the other. For illustration, we first consider $R_4$, for which $\pi_4 \coloneqq p(R_4=1 \,|\, X_3)$ with $\R^{p}_4=\{R_3\}$ and $\C^{\text{dir}}_{4,4}=\{R_1\}$. Thus, only $R_2$ and $R_3$ are qualified candidates for intervention. The subtree for $R_2$ attaches directly to the tree of $R_4$, whereas the subtree for $R_3$ requires pruning. This is because, in $\T_3$, the branch $R_1 \in \C^{\text{dir}}_{4,4}$ induces selection on $R_4$, which propagates through $R_3$ since
$R_4\in\pa_\G(R_3)$. Consequently, $R_1$ must be pruned from $\T_3$ before $\T_3$ is appended to $\T_4$.
After pruning, the selection set of $R_3$, $\calS_3$, is updated from $\{R_1, R_4, R_5\}$ to $\{R_1, R_5\}$, and $\pi_4$ becomes identified.
The propensity score of $R_5$, evaluated at $\calS^r_5=\{R_6\}$, is identified analogously as that of $R_6$, and is therefore omitted.
Finally, the algorithm turns to $R_6$, for which $\pi_6 \coloneqq p(R_6=1 \,|\, X_3, X_4)$, with $\R^p_6=\{R_3\}$ and $\C^{\text{dir}}_{6,6}=\{R_5\}$. The qualified candidates for intervention are therefore $\{R_1, R_2, R_3\}$ (note that $R_4$ is not a descendant of $R_6$ and thus excluded). In the initial assessment, identification is hindered by selection on $\R^d_6 = \{R_5\}$, which arises from branches $R_2$ and $R_3$. Intervening on $R_2$ directly selects on $R_5$, and thus $R_2$ must be pruned from the tree $\T_6$. In contrast, the selection of $R_5$ induced by intervening on $R_3$ arises because identifying $\pi_3$ involves an unnecessary intervention on $R_2$. As a result, $R_2$ is pruned in the subtree $\T_3$. After pruning, $\pi_6$ becomes identified. Appendix Table~\ref{apptab:ex-tree2} summarizes key definitions for the identification procedure.

\begin{table}[t]
\centering
\renewcommand{\arraystretch}{1.3}  
\caption{Key definitions used in the tree-based identification algorithm, illustrated using the mDAG in Figure~\ref{fig:ex_tree}(b).}
\label{apptab:ex-tree2}
\resizebox{\textwidth}{!}{\begin{tabular}{lcccccccc}
\toprule
$R_k$ & \textbf{Propensity scores} & $\mathcal{S}^x_k$ & $\R^p_k$ & $C^{\text{dir}}_{k,k}$ & $\ch_{\T_k}(R_k)$ & $\tilde{\mathcal{S}}_k$ & $\mathcal{S}^r_k$ & $\mathcal{S}_k$ \\
\midrule
$R_1$ & $p( R_1| X_4,R_2)$ & $\{R_4 \}$ & $\emptyset$ & $\emptyset$ & $\emptyset$ & $\{R_4\}$ & $\emptyset$ & $\{ R_4\}$ 
\\
$R_2$ & $p( R_2|X_5,R_3 )$ & $\{ R_5\}$ & $\emptyset$ & $\emptyset$ & $\emptyset$ & $\{ R_5\}$ & $\emptyset$ & $\{ R_5\}$ 
\\
$R_3$ & $p(R_3 | X_1,R_4,R_5,R_6)$ & $\{ R_1\}$ & $\{ R_1\}$ & $\emptyset$ & $\{ R_1,R_2\}$ & $\{ R_1,R_4,R_5\}$ & $\{R_4,R_5 \}$ & $\{ R_1,R_4,R_5\}$ 
\\
\multicolumn{2}{l}{$R_3$ ($R_3\rightarrow R_1$ pruned)} & $\{ R_1\}$ & $\{ R_1\}$ & $\emptyset$ & $\{ R_2\}$ & $\{ R_1,R_5\}$ & $\{R_5 \}$ & $\{ R_1,R_5\}$ 
\\
\multicolumn{2}{l}{$R_3$ ($R_3\rightarrow R_2$ pruned)} & $\{ R_1\}$ & $\{ R_1\}$ & $\emptyset$ & $\{ R_1\}$ & $\{ R_1,R_4\}$ & $\{R_4\}$ & $\{ R_1,R_4\}$ 
\\
$R_4$ & $p(R_4| X_3)$ & $\{ R_3\}$ & $\{ R_3\}$ & $\{ R_1\}$ & $\{ R_2,R_3\}$ & $\{R_3, R_5 \}$ & $\emptyset$ & $\{R_3 \}$ 
\\
$R_5$ & $p(R_5 |X_3,X_6,R_6 )$ & $\{ R_3,R_6\}$ & $\{ R_3\}$ & $\{ R_2\}$ & $\{ R_1,R_3\}$ & $\{R_1,R_3,R_4,R_6 \}$ & $\{ R_6\}$ & $\{ R_3,R_6\}$ 
\\
$R_6$ & $p( R_6|X_3,X_4 )$ & $\{ R_3,R_4\}$ & $\{R_3\}$ & $\{ R_5\}$ & $\{R_1,R_3 \}$ & $\{ R_1,R_3,R_4\}$ & $\emptyset$ & $\{ R_3,R_4\}$ 
\\
\bottomrule
\end{tabular}
}
\end{table}

\subsubsection{Identification example via Figures~\ref{fig:ex_tree}(c, f)}

The mDAG $\G_3$ in Figure~\ref{fig:ex_tree}(c) is an example where tree-prune gets executed. However, some selections cannot be avoided. Thus, the algorithm concludes the target law is not identified. We omit the discussion of $R_1$ through $R_4$, as the technique for identifying their propensity scores is the same as described above. We now focus on $R_5$, whose propensity score is $\pi_5 \coloneqq p(R_5=1 \,|\, X_2)$, with $\R^p_5 = \{R_2\}$ and $\C^{\text{dir}}_{5,5} = \{R_2\}$. The candidates for intervention are therefore $\{R_1, R_3, R_4\}$. The subtrees for $R_1$ and $R_3$ can be attached directly, whereas the subtree for $R_4$ is pruned at $R_2$ during the initial assessment of identifiability, where $\R^d_5 = \{R_5\}$. This pruning is necessary as intervenetion on $R_2$ selects on $R_5$, and this selection progagates through $R_4$ given that $R_5\in\pa_\G(R_4)$. After pruning, the propensity score for $R_4$ is no longer identified, thus $R_4$ is pruned from $\T_5$. The second assessment returns $\R^d_5 = \{R_4\}$, which arises from the intervention on $R_1$. This leads to pruning $R_1$, leaving $R_3$ as the only indicator on which an intervention is applied. However, an intervention on $R_3$ alone is insufficient to identify $\pi_5$, because in the third assessment we obtain $\R^d_5 = \{R_2\} = \R^p_5$, which implies that identification fails. Details are provided in Appendix Table~\ref{apptab:ex-tree3}.

\begin{table}[t]
\centering
\renewcommand{\arraystretch}{1.3}  
\caption{Key definitions used in the tree-based identification algorithm, illustrated using the mDAG in Figure~\ref{fig:ex_tree}(c).}
\label{apptab:ex-tree3}
\begin{tabular}{lcccccccc}
\toprule
$R_k$ & \textbf{Propensity scores} & $\mathcal{S}^x_k$ & $\R^p_k$ & $\C^{\text{dir}}_{k,k}$ & $\ch_{\T_k}(R_k)$ & $\tilde{\mathcal{S}}_k$ & $\mathcal{S}^r_k$ & $\mathcal{S}_k$ \\
\midrule
$R_1$ & $p(R_1 |X_4,R_2 )$ & $\{R_4\}$ & $\emptyset$ & $\emptyset$ & $\emptyset$ & $\{R_4\}$ & $\emptyset$ & $\{R_4\}$
\\
$R_2$ & $p( R_2|X_5,R_3,R_4 )$ & $\{R_5\}$ & $\emptyset$ & $\emptyset$ & $\emptyset$ & $\{R_5\}$ & $\emptyset$ & $\{R_5\}$
\\
$R_3$ & $p(R_3 |R_5 )$ & $\emptyset$ & $\emptyset$ & $\emptyset$ & $\emptyset$ & $\emptyset$ & $\emptyset$ & $\emptyset$
\\
$R_4$ & $p(R_4 |X_1,R_5 )$ & $\{R_1\}$ & $\{R_1\}$ & $\{R_1\}$ & $\{R_2\}$ & $\{R_1,R_5\}$ & $\{R_5\}$ & $\{R_1,R_5\}$
\\
$R_5$ & $p( R_5|X_2)$ not ID & $\{R_2\}$ & $\{R_2\}$ & $\{R_2\}$ & - & - & - & - 
\\
\bottomrule
\end{tabular}
\end{table}

\subsubsection{Identification example via Figures~\ref{fig:supp_figs}(c, d)}\label{subsubsec:update_prune_tree}
The mDAG in Appendix Figure~\ref{fig:supp_figs}(c) provides an example in which pruning requires updating a grown subtree to match its pruned version. We omit the discussion of $R_1$ through $R_4$, as constructing their trees is straightforward and does not involve pruning. We now focus on $R_5$, whose propensity score is $\pi_5\coloneqq p(R_5=1\mid X_2)$, with $\mathcal R^p_5=\{R_3\}$ and $\mathcal C^{\text{dir}}_{5,5}=\{R_2\}$. The candidates for intervention are therefore $\{R_1,R_3,R_4\}$. The initial assessment of identifiability yields $\mathcal R^d_5={R_5}$, indicating that intervening on these candidates induce selection on $R_5$ itself. This selection arises from intervening on $R_2$. To remove this selection, the edge $R_3\rightarrow R_2$ in subtree $\T_3$ is pruned, and $\pi_3$ remains identifiable. The resulting pruned $\T_3$ is added to the pruned-tree collection $\mathcal B$. Similarly, the edge $R_4\rightarrow R_2$ in subtree $\T_4$ is pruned. In addition, the subtree $\T_3$ attached to $R_3$ in $\T_4$ is replaced to match its pruned version in $\mathcal B$. Otherwise, selection induced by $R_2$ would propagate to $R_3$ and further through $R_4$, thereby obstructing identification of $\pi_5$. Collecting pruned subtrees and reusing them when needed allows the algorithm to restrict attention to pruning children of $R_k$, or their children, without requiring consideration of deeper descendants. Selection arising from further descendants is handled automatically by replacing the relevant subtrees with their pruned versions when available in $\mathcal B$. This design substantially improves the feasibility of the identification algorithm, as exhaustively tracing selection through all descendants is computationally expensive, especially in large graphs. Details are provided in Appendix Table~\ref{apptab:ex-tree4}.
\begin{table}[t]
\centering
\renewcommand{\arraystretch}{1.3}  
\caption{Key definitions used in the tree-based identification algorithm, illustrated using the mDAG in Figure~\ref{fig:supp_figs}(c).}
\label{apptab:ex-tree4}
\resizebox{\textwidth}{!}{\begin{tabular}{lcccccccc}
\toprule
$R_k$ & \textbf{Propensity scores} & $\mathcal{S}^x_k$ & $\R^p_k$ & $\C_{k,k}$ & $\ch_{\T_k}(R_k)$ & $\tilde{\mathcal{S}}_k$ & $\mathcal{S}^r_k$ & $\mathcal{S}_k$ \\
\midrule
$R_1$ & $p(R_1 |R_2 )$ & $\emptyset$ & $\emptyset$ & $\emptyset$ & $\emptyset$ & $\emptyset$ & $\emptyset$ & $\emptyset$
\\
$R_2$ & $p( R_2|X_5,R_3)$ & $\{R_5\}$ & $\emptyset$ & $\emptyset$ & $\emptyset$ & $\{R_5\}$ & $\emptyset$ & $\{R_5\}$
\\
$R_3$ & $p(R_3 |X_1, R_4, R_5 )$ & $\{R_1\}$ & $\{R_1\}$ & $\emptyset$ & $\{R_1,R_2\}$ & $\{R_1,R_5\}$ & $\{R_5\}$ & $\{R_1,R_5\}$
\\
$R_3$ & ($R_3\rightarrow R_2$ pruned) & $\{R_1\}$ & $\{R_1\}$ & $\emptyset$ & $\{R_1\}$ & $\{R_1\}$ & $\emptyset$ & $\{R_1\}$
\\
$R_4$ & $p(R_4 |X_3,R_5 )$ & $\{R_3\}$ & $\{R_3\}$ & $\emptyset$ & $\{R_1,R_2,R_3\}$ & $\{R_1,R_3,R_5\}$ & $\{R_5\}$ & $\{R_1,R_3,R_5\}$
\\
$R_4$ & ($R_4\rightarrow R_2$ \& $\T_3$ pruned) & $\{R_3\}$ & $\{R_3\}$ & $\emptyset$ & $\{R_1,R_3\}$ & $\{R_1,R_3\}$ & $\{R_5\}$ & $\{R_1,R_3\}$
\\
$R_5$ & $p( R_5|X_3)$ & $\{R_3\}$ & $\{R_3\}$ & $\{R_2\}$ & $\{R_1,R_3,R_4\}$ & $\{R_1,R_3\}$ & $\emptyset$ & $\{R_3\}$ 
\\
\bottomrule
\end{tabular}
}
\end{table}

\subsection{Examples illustrating key estimation concepts} 
\label{app:est_examples} 

We provide detailed illustrations of the estimation procedures developed in Section~\ref{subsec:est-target}. These examples illustrate the closure construction for $\mathcal R$, the associated inverse probability weighted estimating equations, and the stacked variance calculations within concrete graphical models. All examples rely on the intervention trees and selection sets produced by Algorithm~\ref{alg:ID}; no additional identification arguments are introduced.

We focus on three inferential tasks: (i) \textbf{estimation of means}, (ii) \textbf{parametric regression coefficients}, and (iii) \textbf{causal effects}. The first two examples illustrate estimation of means in settings where the full target law is not identifiable, highlighting both cases in which a mean remains identifiable and cases in which identification fails due to unavoidable selection. The remaining examples demonstrate how the same framework extends to regression and causal parameters once the appropriate closure of missingness indicators is taken into account.

Throughout, $\tilde{R}$ denotes the set of indicators corresponding to variables required to evaluate a target estimating function, and $\mathcal R$ denotes its closure under the selection sets $\{\mathcal S_i\}$ as defined in \eqref{eq:R_t_def}. The estimating function $\Psi$ is constructed according to \eqref{eq:Psi_t_def}, and asymptotic variances are obtained from the stacked estimating equations described in Section~\ref{subsec:est-target}.

\subsubsection{Estimation task: Mean of a partially observed variable}
\label{app:est_ex_mean_v1}

We begin with estimation of the mean $\theta=\E(X_3)$ in the mDAG shown in Figure~\ref{fig:ex_tree}(c). The parameter $\theta$ is defined as the unique solution to the moment condition
\[
\E\{M(\tilde{X};\theta)\}=0,
\qquad
M(\tilde{X};\theta)=X_3-\theta.
\]

Although the target law $p(X)$ is not identified in this mDAG because the propensity score $\pi_5$ is not identifiable, the mean $\E(X_3)$ remains identifiable. Evaluating $M(\tilde{X};\theta)$ requires observing $X_3$, thus we initialize $\tilde{R}=\{R_3\}$. The selection set associated with $R_3$ is empty, implying that the closure construction in \eqref{eq:R_t_def} stabilizes immediately and yields $\mathcal R=\{R_3\}$.

The estimating function for $\theta$ therefore takes the inverse probability weighted form
\begin{align}
\Psi(X,R;\theta,\theta_3)
=
\frac{\I(R_3=1)}{\pi_3(\pa_\G(R_3);\theta_3)\vert_{\mathcal S_3^r=1}}
\,(X_3-\theta),
\label{eq:mean-X3-extree3}
\end{align}
where $\pi_3$ denotes the propensity score for $R_3$. The corresponding propensity score estimating equation is
\begin{align}
\Psi_3(X,R;\theta_3)
=
R_3-\pi_3(\pa_\G(R_3);\theta_3).
\label{eq:propensity3-extree3}
\end{align}

An estimator $\hat\theta$ is obtained by solving $P_n\Psi(X,R,\theta,\hat\theta_3)=0$, where $\hat\theta_3$ solves $P_n\Psi_3(X,R;\theta_3)=0$. The asymptotic variance of $\hat\theta$ follows from the stacked estimating equations $\boldsymbol{\Psi}=(\boldsymbol{\Psi}_3,\Psi)'$ as described in Section~\ref{subsec:est-target}, where $\boldsymbol{\Psi}_3=\{\Psi_3\}$.

By contrast, identification of the mean of other variables fails in this mDAG. For example, consider $\theta=\E(X_1)$. The initial indicator set is $A_0=\tilde{R}=\{R_1\}$. According to Appendix Table~\ref{apptab:ex-tree3}, $\pi_1$ can be evaluated only on $\mathcal S_1=\{R_4\}$, yielding $A_1=\operatorname{cl}(A_0)=\{R_1,R_4\}$. Next, $\pi_4$ can be evaluated only on $\mathcal S_4=\{R_1,R_5\}$, yielding $A_2=\operatorname{cl}(A_1)=\{R_1,R_4,R_5\}$. Since $\pi_5$ is not identified, the closure includes an indicator whose propensity score cannot be estimated, and no unbiased estimating equation for $\theta$ can be constructed. Thus $\E(X_1)$ is not identifiable in this graph.

\subsubsection{Estimation task: Mean requiring recursive closure}
\label{app:est_ex_mean_v2}

Consider the modification of the mDAG in Figure~\ref{fig:ex_tree}(c), shown in Appendix Figure~\ref{fig:sims_tree}(e), in which the edge $R_4\rightarrow R_2$ is replaced by $R_4\rightarrow R_3$. The corresponding intervention trees are shown in Appendix Figure~\ref{fig:sims_tree}(f).

Under this modification, additional propensity scores become identifiable. In particular, $\pi_4$, evaluated at $R_5=1$, is identified by intervening on $R_2$ and $R_3$, with the resulting selection set remaining $\mathcal S_4=\{R_1,R_5\}$. Moreover, $\pi_5$ becomes identifiable by intervening on $R_1$, $R_3$, and $R_4$, with pruning of $R_2$ from the subtree of $R_4$ to avoid inducing selection on $R_5$. The resulting selection set for $R_5$ is $\mathcal S_5=\{R_2\}$.

We again consider estimation of $\theta=\E(X_1)$. Starting from $A_0=\tilde{R}=\{R_1\}$, the closure construction proceeds as follows:
\[
A_1=\operatorname{cl}(A_0)=\{R_1,R_4\},\qquad
A_2=\operatorname{cl}(A_1)=\{R_1,R_4,R_5\},
\]
and incorporating $\mathcal S_5=\{R_2\}$ yields
\[
A_3=\operatorname{cl}(A_2)=\{R_1,R_2,R_4,R_5\}.
\]
Since $\mathcal S_2=\{R_5\}\subseteq A_3$, the recursion stabilizes and $\mathcal R=\{R_1,R_2,R_4,R_5\}$.

The estimating function for $\theta$ is therefore
\begin{align}
\Psi(X^*,R;\theta,\theta_{\mathcal R})
=
\frac{\I(\mathcal R=1)}{\prod_{R_i\in\mathcal R}\pi_i(\pa_\G(R_i);\theta_i)}
\,(X_1-\theta), 
\label{eq:mean-X1-extree3-ext}
\end{align}
with estimating functions for the propensity scores of $R_i\in\mathcal R$, defined by the intervention trees in Appendix Figure~\ref{fig:sims_tree}(f), given by
{\small\begin{align}
    &\Psi_1(X^*,R;\theta_1)  = \I(R_4=1)\ (R_1-\pi_1( X_4,R_2;\theta_1)).\label{eq:propensity1-extree3-ext}
    \\
    &\Psi_2(X^*,R;\theta_2)  = \I(R_5=1)\ (R_2-\pi_2( X_5,R_3;\theta_2)).\label{eq:propensity2-extree3-ext}
    \\
    &\Psi_4(X^*,R;\theta_4,\theta_2,\theta_3)  = \frac{\I(R_1=1,R_5=1) \ \I(R_2=1,R_3=1)}{\pi_2( X_5,R_3;\theta_2)\ \pi_3( R_4,R_5;\theta_3)}\ (R_4-\pi_4( X_1,R_5;\theta_4)).\label{eq:propensity4-extree3-ext}
    \\
    &\Psi_5(X^*,R;\theta_5,\theta_1,\theta_3,\theta_4)  = \frac{\I(R_2=1)\ \I(R_1=1,R_3=1,R_4=1)}{\pi_1( X_4,R_2;\theta_1)\ \pi_3( R_5;\theta_3)\ \pi_4( X_1,R_5;\theta_4)}\ (R_5-\pi_5( X_2;\theta_5)).\label{eq:propensity5-extree3-ext}
\end{align}}

The asymptotic variance of $\hat\theta$ is obtained from the stacked estimating equations
$\boldsymbol{\Psi} = (\boldsymbol{\Psi}_1, \boldsymbol{\Psi}_2, \boldsymbol{\Psi}_4, \boldsymbol{\Psi}_5, \Psi)'$,
where $\boldsymbol{\Psi}_k$ denotes the stacked estimating functions associated with $\Psi_k$.
The propensity scores $\pi_1$ and $\pi_2$ are identified via associational irrelevancy, so that $\boldsymbol{\Psi}_1=\{\Psi_1\}$ and $\boldsymbol{\Psi}_2=\{\Psi_2\}$.
In contrast, $\pi_4$ and $\pi_5$ are identified via causal irrelevancy, and their stacked systems additionally include estimating functions for intervened indicators and their descendants. Specifically, $\boldsymbol{\Psi}_4=\{\Psi_2,\Psi_3,\Psi_4\}$ and $\boldsymbol{\Psi}_5=\{\Psi_1,\Psi_3,\Psi_4,\Psi_5\}$, where $\Psi_3$ is given by \eqref{eq:propensity3-extree3}.
The $\Psi_4$ appearing in $\boldsymbol{\Psi}_4$ is defined in \eqref{eq:propensity4-extree3-ext}. In contrast, the $\Psi_4$ included in $\boldsymbol{\Psi}_5$, given by \eqref{eq:propensity4-extree3-ext-for-pi5}, is different and is specifically constructed for estimating $\pi_5$. In particular, $\pi_2$ is dropped from the inverse weight, as dictated by the intervention tree in Appendix Figure~\ref{fig:sims_tree}(f), where $R_2$ is pruned from the subtree of $R_4$ before it is appended to the tree of $R_5$.
{\small\begin{flalign}
    && \Psi_4(X,R;\theta_4,\theta_3)  &= \frac{\I(R_1=1) \ \I(R_3=1)}{\pi_3( R_4,R_5;\theta_3)}\ (R_4-\pi_4( X_1,R_5;\theta_4)). &\text{(tailored for estimating $\pi_5$)}
    \label{eq:propensity4-extree3-ext-for-pi5}
\end{flalign}}

\subsubsection{Estimation task: Parametric regression with missing covariates}
\label{app:est_ex_regpar}

Using the mDAG in Appendix Figure~\ref{fig:sims_tree}(e), consider estimation of the regression of $X_3$ on $(X_1,X_2)$ under the linear mean model
\[
\E(X_3\mid X_1,X_2)=\beta_0+\beta_1 X_1+\beta_2 X_2,
\]
with target parameter $\theta=(\beta_0,\beta_1,\beta_2)$. The corresponding moment condition is
\begin{align}
M(\tilde{X};\theta)
=
f(X_1,X_2)\,
\{X_3-(\beta_0+\beta_1 X_1+\beta_2 X_2)\},
\qquad
f(X_1,X_2)=(1,X_1,X_2)'.
\label{eq:M-regression-extree3-ext}
\end{align}

Evaluating $M$ requires observing $(X_1,X_2,X_3)$, so the initial indicator set is $\tilde{R}=\{R_1,R_2,R_3\}$. Applying the closure construction yields $\mathcal R=\{R_1,R_2,R_3,R_4,R_5\}$. The estimating function $\Psi$ follows from \eqref{eq:Psi_t_def}, and the asymptotic variance of $\hat\theta$ is obtained by stacking the estimating equations for all propensity scores in $\mathcal R$ together with $\Psi$. That is, $\boldsymbol{\Psi} = (\boldsymbol{\Psi}_1,\boldsymbol{\Psi}_2, \boldsymbol{\Psi}_3, \boldsymbol{\Psi}_4, \boldsymbol{\Psi}_5,\Psi)'$, where specifications of $\boldsymbol{\Psi}_1$ through $\boldsymbol{\Psi}_5$ are given in Subsections~\ref{app:est_ex_mean_v1} and ~\ref{app:est_ex_mean_v2}.

\subsubsection{Estimation task; Average causal effect}
\label{app:est_ex_causal}

Finally, we consider estimation of the average causal effect of a binary treatment $X_1$ on an outcome $X_3$. For clarity, we focus on the counterfactual mean $\theta=\E(X_3^{x_1})$ for a fixed treatment level $x_1\in\{0,1\}$. In the absence of missing data, $\theta$ is identified under standard causal assumptions using adjustment for the confounder $X_2$.

We assume that $\theta$, together with nuisance parameters $\theta_{trt}$ and $\theta_{or}$ indexing a treatment model and an outcome regression, respectively, is characterized by the moment condition
\begin{align}
M(\tilde{X};\theta,\theta_{trt},\theta_{or})
=
\begin{pmatrix}
f_{trt}(X_2)\,\{X_1-p(X_1\mid X_2;\theta_{trt})\} \\
f_{or}(X_1,X_2)\,\{X_3-\E(X_3\mid X_1,X_2;\theta_{or})\} \\
\frac{\I(X_1=x_1)}{p(X_1\mid X_2;\theta_{trt})}
\{X_3-\E(X_3\mid x_1,X_2;\theta_{or})\}
+\E(X_3\mid x_1,X_2;\theta_{or})
-\theta
\end{pmatrix}.
\label{eq:M-causal-extree3-ext}
\end{align}
In the function $M$, $f_{trt}(X_2)$ and $f_{or}(X_1,X_2)$ have dimensions matching $\theta_{trt}$ and $\theta_{or}$, respectively. We leave their specific forms unspecified, assuming only that they follow some parametric structure. For example, if the relationship between $X_3$ and the covariates is fully captured by a linear regression with main terms only, the second row of $M$ follows the form in \eqref{eq:M-regression-extree3-ext}, and $f_{or}$ can follow the form of $f$ in Subsection~\ref{app:est_ex_regpar} or any three-dimensional function of $X_1$ and $X_2$ that is feasible.

The missingness-adjusted estimating function $\Psi$ is obtained by substituting \eqref{eq:M-causal-extree3-ext} into \eqref{eq:Psi_t_def}, with $\mathcal R$ determined by the closure construction for the indicators required to evaluate $M$. Estimation and inference then follow from the general theory in Section~\ref{subsec:est-target}. 

\subsection{Estimation algorithm} 
\label{app:est_alg} 

See Algorithm~\ref{alg:propensity} for a summary description of our proposed \textit{recursive inverse probability weighted} estimators for the identified propensity scores $\{\pi_k\vert_{\mathcal S_k^r=1}\}_{k=1}^K$. 

See Algorithm~\ref{alg:est} for a summary description of estimation of generic parameters defined through moment conditions under the target law. 


\begin{algorithm}[H]
\setstretch{1.3}
	\caption{{\small  \textproc{Recursive IPW Estimation of Propensity Scores}$(\G(X, R, X^*), {\mathbb F},{\cal D})$}}
	\label{alg:propensity}
	\begin{algorithmic}[1]
		
		\vspace{0.25cm}		
		\State {\bf Estimation Procedure:} 
		\begin{itemize}
                \item Let $\mathcal{E}$ and $\mathcal{P}$ be the lists that collect the estimating equations and fitted models of propensity scores for all $R_k \notin \mathcal{D}$, respectively; initialize them to empty lists
                \item For $R_k\in R\setminus \mathcal{D}$ {\scriptsize (following a valid reverse topological order $\tau$)}:
                \begin{itemize}
                \item $\hat{\pi}_k\coloneqq \pi_k( \pa_\G(R_k);\hat{\theta}_k)|_{\calS^r_k=1}\leftarrow$ \blue{{\bf estimate-propensity}($R_k, \mathcal{P},\mathbb{F}$)}
                \item Add $\hat{\pi}_k$ to $\mathcal{P}$, and $\Psi_k$ to $\mathcal E$
                \end{itemize}
                \item RETURN $\mathcal{E},\mathcal{P}$
		\end{itemize}
        \par\vspace{0.1cm}
		\State \blue{{\bf estimate-propensity}($R_k,\mathcal{P},\mathbb{F}$)}
        \par\vspace{0.1cm}
            \begin{itemize}
                \item Assume the parameter $\theta_k$ indexing $\pi_k( \pa_\G(R_k);\theta_k)|_{\calS^r_k=1}$ is the unique solution to $\E(\Psi_k(X^*,R; \theta_k, \theta_{\mathcal T_k})=0$, where
                \par\vspace{-1cm}
                {\small\begin{align*}
                \Psi_k(X^*,R;\theta_k,\theta_{\mathcal T_k})
                \;\coloneqq\;
                \I(\widetilde{\mathcal S}_k=1)\,
                W_k(\theta_{\mathcal T_k})\,
                f_k(\pa_\G(R_k))\,
                \big\{R_k-\pi_k(\pa_\G(R_k);\theta_k)\big\}.
                \end{align*}}
                \par\vspace{-0.5cm}
                \begin{itemize}
                    \item $f_k(\pa_\G(R_k))$ is a function with the same dimension as $\theta_k$
                    \item {\small$W_k(\theta_{\mathcal T_k})=\prod_{R_i\in\mathcal T_k}(\I(R_i=1)/\pi_i(\pa_\G(R_i);\theta_i))$} is the inverse propensity weight. By convention, $W_k(\theta_{\mathcal T_k})=1$ when $\mathcal T_k=\emptyset$.
                \end{itemize}
                \item The estimator $\hat{\theta}_k$ is the solution to $P_n\ \Psi_k(X^*,R; \theta_k, \hat{\theta}_{\mathcal T_k})=0$
                
                {\small  \begin{itemize}
                    \item $\hat{\theta}_i\in\hat\theta_{\mathcal T_k}$ index the estimated propensity score for $R_i\in\ch_{\T_k}(R_k)$
                    \begin{itemize}
                        \item $\hat{\theta}_i$ is retrieved from $\mathcal{P}$ if $\T_i$ under $\T_k$ is not pruned, thus matches that in $\mathbb{F}$
                        \item Re-estimate $\hat{\theta}_i$ with $\Psi_i$ that respect the structure of $\T_i$ if it is pruned
                        \item $\mathcal{E}_k$ and $\mathcal{P}_k$ collect updated estimating function and the re-fitted models
                    \end{itemize}
                \end{itemize}}
                \item The asymptotic variance of {\small $\hat{\boldsymbol{\theta}}_{k}$ is $V_{k}=A^{-1}_{k}\ B_{k}\ (A^{-1}_{k})'$} and an estimator of it is {\small $\hat{V}_{k}=\hat{A}^{-1}_{k}\ \hat{B}_{k}\ (\hat{A}^{-1}_{k})'$}, where
                {\small \begin{itemize}
                    \item $\boldsymbol{\theta}_k$ is the stacked parameter vector consisting of $\theta_k$ and the parameters associated with all indicators appearing in tree $\mathbb T_k$, ordered so that $\theta_k$ is last.
                    \item $\boldsymbol{\Psi}_t$ is the corresponding stacked estimating function, collecting $\Psi_i$ for all indicator $R_i$ appear in $\mathbb T_k$. $\Psi_i$ is retrieved from $\mathcal{E}_k$ if $\Psi_i\in\mathcal{E}_k$, otherwise retrieve from $\mathcal{E}$.
                    \item {$A_{k}\coloneqq\E(\partial\boldsymbol{\Psi}_{k}/\partial \boldsymbol{\theta}_{k})$, $B_{k}\coloneqq \E(\boldsymbol{\Psi}_{k}\ \boldsymbol{\Psi}'_{k})$, $\hat{A}_{k}\coloneqq P_n(\partial\boldsymbol{\Psi}_{k}/\partial \boldsymbol{\theta}_{k}|_{\boldsymbol{\theta}_{k}=\hat{\boldsymbol{\theta}}_{k}})$, and $\hat{B}_{k}\coloneqq P_n(\boldsymbol{\Psi}_{k}\ \boldsymbol{\Psi}'_{k}|_{\boldsymbol{\theta}_{k}=\hat{\boldsymbol{\theta}}_{k}})$}
                \end{itemize}}
                \item The asymptotic variance of $\hat{\theta}_k$ is the bottom-right block of $V_k$, with its estimator given by the corresponding entry of $\hat{V}_k$
                \par\vspace{0.2cm}
                \item Alternatively, $\pi_k$ can be estimated by fitting $R_k$ on its parents $\pa_\G(R_k)$ with weight equals to $\I(\calS_k=1)\,W_k(\hat{\theta}_{\mathcal T_k})$, using flexible machine learning methods.
                \item RETURN $\pi_k(\pa_\G(R_k);\hat{\theta}_k),\Psi_k$.
            \end{itemize}
             \par\vspace{0.1cm}
	\end{algorithmic}
\end{algorithm} 

\begin{algorithm}[H]
\setstretch{1.2}
	\caption{{\small  \textproc{IPW Estimation of target law functionals}$(\G(X, R, X^*), {\mathbb F},{\cal D},M(\tilde{X};\theta))$}}
	\label{alg:est}
	\begin{algorithmic}[1]
		\vspace{0.25cm}		
		\State {\bf Estimation Procedure:} 
            \begin{itemize}
            \setlength{\itemsep}{0.3cm}
                \item $M(\tilde{X}; \theta)$ is a user-specified function of the target law with the same dimension as the target parameter $\theta$, satisfying $\theta$ as the unique solution of $\E(M(\tilde{X}; \theta)) = 0$, where $\tilde{X}\subseteq X$ are the variables involved in its construction
                \item Let $\mathcal{E},\mathcal{P}$ be the output of Algorithm~\ref{alg:propensity}
                \item Let $\tilde{R} = \{R_i \in R : X_i \in \tilde{X}\}$ collect the missingness indicators associated with $\tilde{X}$
                \item Let $\mathcal R$ denote the set of indicators containing $\tilde{R}$ that is closed under inclusion of the selection sets $\{\mathcal S_i\}$. Define $\mathcal R$ as follows:
                \begin{itemize}
                    \item Starting from $A_0=\tilde{R}$
                    \item Define the recursion $A_{\ell+1}\coloneqq \operatorname{cl}(A_\ell)$, where $\operatorname{cl}(A)\;\coloneqq\; A \cup \bigcup_{R_i\in A} \mathcal S_i$
                    \item For $\ell^\ast$ such that $A_{\ell^\ast+1}=A_{\ell^\ast}$, define $\mathcal R \;\coloneqq\; A_{\ell^\ast}$
                \end{itemize}
                \item \textbf{If} $\mathcal R \cap \mathcal{D}\neq \emptyset$: \textit{Fail to identify the target parameter}
                \item \textbf{Else:} define estimating function $\Psi(X^*,R;\theta, \theta_{\mathcal R})$ as in \eqref{eq:algo_Psi_M}, where $\theta_{\mathcal R}\coloneqq (\theta_i)_{R_i\in \mathcal R}$:
                {\small\begin{align}\label{eq:algo_Psi_M}
                    \Psi(X^*,R;\theta, \theta_{\mathcal R})
                    \;\coloneqq\;
                    \I(\mathcal R=1)\,
                    \Big\{
                    \prod_{R_i\in \mathcal R}
                    \pi_i(\pa_\G(R_i);\theta_i)\vert_{\mathcal S_i^r=1}
                    \Big\}^{-1}
                    M(\tilde{X};\theta).
                \end{align}}
                \noindent An estimator $\hat{\theta}$ is obtained by solving $P_n  \Psi(X^*,R;\theta, \hat{\theta}_{\mathcal R})=0$, where $\hat{\theta}_{\mathcal R}$ is retrieved from $\mathcal{P}$
                \item Define $\boldsymbol{\Psi}\coloneqq \{\boldsymbol{\Psi}_i : R_i\in\mathcal R\}$, retrieved from $\mathcal{E}$. Append $\Psi$ as its last element: $\boldsymbol{\Psi} =\{\boldsymbol{\Psi},\Psi\}$
                \item Let $\boldsymbol{\theta}$ collect the parameters in one-to-one correspondence with $\boldsymbol{\Psi}$, with its estimator denoted by $\hat{\boldsymbol{\theta}}_k$
                \item The asymptotic variance of $\hat{\boldsymbol{\theta}}$ is given by $V= A^{-1} B  (A^{-1})'$, with an estimator $\hat{V}=\hat{A}^{-1}\ \hat{B} \ (\hat{A}^{-1})'$, where
                \begin{itemize}
                    \item $A\coloneqq \E(\partial \boldsymbol{\Psi}/\partial\boldsymbol{\theta})$, $B\coloneqq \E(\boldsymbol{\Psi}\ \boldsymbol{\Psi}')$, $\hat{A}\coloneqq P_n(\partial \boldsymbol{\Psi}/\partial\boldsymbol{\theta} |_{\boldsymbol{\theta}=\hat{\boldsymbol{\theta}}})$, and $\hat{B}\coloneqq P_n(\boldsymbol{\Psi}\ \boldsymbol{\Psi}'|_{\boldsymbol{\theta}=\hat{\boldsymbol{\theta}}})$
                \end{itemize}
                \item The asymptotic variance of $\hat{\theta}$ is the bottom right element of $V$, with its estimator at the same entry of $\hat{V}$
                \item RETURN $\hat{\theta}, \hat{V}$
            \end{itemize}
	\end{algorithmic}
\end{algorithm}

\clearpage
\section{Proofs}
\label{app:proofs}

\begin{proof}[Proof of Theorem~\ref{thm:pi_id_functional}]
Let $\mathcal T_k$ be the set of indicators intervened on in $\mathbb T_k$, and let
$\sigma_k=(s_1,\ldots,s_m)$ be an ordering of $\mathcal T_k$ consistent with the reverse topological order used by Algorithm~\ref{alg:ID}. Write $\phi^p_{\sigma_k}\{p\}
\;=\; \phi^p_{s_m}\circ\cdots\circ \phi^p_{s_1}\{p\}$. We emphasize that $\phi^p_{\sigma_k}\{p\}$ need not be identifiable as a full law. The algorithm is only required to identify the conditional kernel 

\vspace{-1.5cm} 
\begin{align*}
\phi^p_{\sigma_k}\{p\}(R_k=1 \mid \pa_\G(R_k))
\quad\text{evaluated at}\quad \mathcal S_k^r=1.
\end{align*}%
\vspace{-1.5cm}

The tree induces a well-defined post-intervention kernel for $R_k$. Since $\mathcal T_k \subseteq R\setminus\{R_k\}$, the intervention sequence encoded by $\mathbb T_k$ does not intervene on $R_k$. By invariance of the missingness mechanism to interventions on indicators other than the target indicator, the propensity score of $R_k$ is unchanged by intervening on $\mathcal T_k$. In particular, as kernels,

\vspace{-1.5cm}
\begin{align}
\phi^p_{\sigma_k}\{p\}(R_k=1 \mid \pa_\G(R_k))
=
p(R_k=1 \mid \pa_\G(R_k))
=
\pi_k(\pa_\G(R_k)),
\label{eq:inv_kernel}
\end{align}%
\vspace{-1.5cm}

on the subset of the sample space where both sides are well-defined.

By assumption, the identification criterion \eqref{eq:id_criteria} holds \emph{for the post-intervention distribution induced by $\mathbb T_k$}, meaning that the conditional kernel
$\phi^p_{\sigma_k}\{p\}(R_k=1 \mid \pa_\G(R_k))$
is identifiable from the observed data law when evaluated at $\mathcal S_k^r=1$. That is,

\vspace{-1.5cm}
\begin{align}
\phi^p_{\sigma_k}\{p\}(R_k=1 \mid \pa_\G(R_k))\big\vert_{\mathcal S_k^r=1}
\end{align}%
\vspace{-1.5cm}

admits an observed-data functional (possibly involving previously identified propensity scores along $\mathbb T_k$) under the model $\mathcal M$. Combining the statements yields $\pi_k(\pa_\G(R_k))\big\vert_{\mathcal S_k^r=1}=\phi^p_{\sigma_k}\{p\}(R_k=1 \mid \pa_\G(R_k))\big\vert_{\mathcal S_k^r=1},$
which is precisely \eqref{eq:pi_id_functional}.
\end{proof}

\newpage
\begin{proof}[Proof of Corollary~\ref{cor:pi_id_ratio}]
Let $\mathcal T_k$ be the children of $R_k$ in $\mathbb T_k$, and let $p_{\mathbb T_k}=\phi^p_{\sigma_k}\{p\}$ be the induced post-intervention law after intervening on all indicators in $\mathcal T_k$. Write $\mathcal T_k=1$ for the event that all indicators in $\mathcal T_k$ equal one. The same convention applies to $\widetilde{\mathcal S}_k$. Repeated application of the fixing formula \eqref{eq:fixing_single} implies that

\vspace{-1.75cm}
\begin{align}
p_{\mathbb T_k}(x,r)
\;\propto\;
\mathbb I(\mathcal T_k=1, \widetilde{\mathcal S}_k=1)\, W_k \, p(x,r),
\label{eq:RN}
\end{align}%
\vspace{-1.5cm} 

where $W_k=\prod_{R_i\in \mathcal T_k}\pi_i(\pa_\G(R_i))^{-1}$ and each $\pi_i$ is evaluated at its admissible evaluation set returned by Algorithm~\ref{alg:ID}, which is ensured by $\mathbb{I}(\widetilde{\mathcal S}_k = 1)$. 
Expectations under $p_{\pi_k}$ can be written as weighted expectations under $p$. Consequently, for any measurable function $f$,

\vspace{-1.5cm}
\begin{align}
\mathbb E_{p_{\mathbb T_k}}[f(X,R)]
=
\frac{\mathbb E\!\left[f(X,R)\,\mathbb I(\mathcal T_k=1,\widetilde{\mathcal S}_k = 1)\,W_k\right]}
{\mathbb E\!\left[\mathbb I(\mathcal T_k=1,\widetilde{\mathcal S}_k = 1)\,W_k\right]},
\end{align}%
\vspace{-1.5cm} 

whenever the expectations exist, and conditioning preserves this weighted representation. Applying Theorem~\ref{thm:pi_id_functional}, we can  write the conditional probability under $p_{\mathbb T_k}$ as 

\vspace{-1.5cm}
\begin{align*}
\pi_k(\pa_\G(R_k))\big\vert_{\mathcal S_k^r=1}
&= p_{\mathbb T_k}(R_k=1 \mid \pa_\G(R_k))\big\vert_{\mathcal S_k^r=1} \\
&= p_{\mathbb T_k}(R_k=1 \mid \pa_\G(R_k), \mathcal T_k=1,\widetilde{\mathcal S}_k=1)\big\vert_{\mathcal S_k^r=1}  \\ 
&= \E_{\mathbb T_k}( \I(R_k=1) \mid \pa_\G(R_k), \mathcal T_k=1 ,\widetilde{\mathcal S}_k=1)\big\vert_{\mathcal S_k^r=1}. 
\end{align*}%
\vspace{-1.75cm}

Substituting the weighted representation \eqref{eq:RN} yields

\vspace{-1.75cm}
\begin{align}
\pi_k(\pa_\G(R_k))\big\vert_{\mathcal S_k^r=1}
=
\frac{
\mathbb E\!\left[
\mathbb I(R_k=1)\,W_k
\;\middle|\;
\pa_\G(R_k),\;\mathcal T_k=1,\; \widetilde{\mathcal S}_k=1
\right]
}{
\mathbb E\!\left[
W_k
\;\middle|\;
\pa_\G(R_k),\;\mathcal T_k=1,\; \widetilde{\mathcal S}_k=1
\right]
}\Big\vert_{\mathcal S^r_k=1},
\end{align}%
\vspace{-1.5cm}

which is \eqref{eq:pi_id_ratio}. If $\mathcal T_k=\emptyset$, then $W_k\equiv 1$ and the expression reduces to the associational identification formula, namely 

\vspace{-1.75cm}
\begin{align*}
\pi_k(\pa_\G(R_k))\big\vert_{\mathcal S_k^r=1}
= p(R_k=1 \mid \pa_\G(R_k)\backslash \mathcal S^x_k,\, \mathcal S^x_k=1). 
\end{align*}%
\end{proof}

\newpage
\begin{proof}[Proof of Theorem~\ref{thm:ps-asymp} (estimating equations for propensity scores)]
The estimating equation for unknown parameter $\theta_k$ indexing $\pi_k$ is formulated as $P_n \Psi_k(X^*,R; \theta_k,\hat{\theta}_{\mathcal T_k})$, with 
\begin{align*}
    \Psi_k(X^*,R;\theta_k,\theta_{\mathcal T_k})
    \;\coloneqq\;
    \I(\widetilde{\mathcal S}_k=1)\,
    W_k(\hat{\theta}_{\mathcal T_k})\,
    f_k(\pa_\G(R_k))\,
    \big\{R_k-\pi_k(\pa_\G(R_k);\theta_k)\big\}
\end{align*}
We establish the validity of this estimating equation by showing that $\E(\Psi_k(X^*,R; \theta_k,\theta_{\mathcal T_k})) = 0$. Let $\sigma_k=(s_1,\ldots,s_m)$ be an ordering of $\mathcal T_k$ consistent with $\tau$.
{\small\begin{align*}
    &\E(\Psi_k(X^*,R; \theta_k,\theta_{\mathcal T_k}))
    \\
    &=\E\Big(\I(\widetilde{\mathcal S}_k=1)\,
    \prod_{R_i\in \mathcal T_k}
    \frac{\I(R_i=1)}{\pi_i(\pa_\G(R_i);\theta_i)\vert_{\mathcal S_i^r=1}}\,
    f_k(\pa_\G(R_k))\,
    \big\{R_k-\pi_k(\pa_\G(R_k);\theta_k)\big\}\Big)
    \\
    &=\E\Big(\frac{\E(\I(s_1=1)\mid O,R\backslash s_1) }{\pi_{s_1}(\pa_\G(s_1);\theta_{s_1})} \I(\widetilde{\mathcal S}_k=1)\!\!\!\!\!\!
    \prod_{R_i\in \mathcal T_k\backslash s_1}
    \frac{\I(R_i=1)}{\pi_i(\pa_\G(R_i);\theta_i)\vert_{\mathcal S_i^r=1}}\,
    f_k(\pa_\G(R_k))\,
    \big\{R_k-\pi_k(\pa_\G(R_k);\theta_k)\big\}\Big)
    \\
    & = \E\Big(\cancelto{1}{\frac{\E(\I(s_1=1)\mid \pa_\G(s_1;\theta_{s_1})) }{\pi_{s_1}(\pa_\G(s_1);\theta_{s_1})}} \I(\widetilde{\mathcal S}_k=1)\!\!\!\!\!\!
    \prod_{R_i\in \mathcal T_k\backslash s_1}
    \frac{\I(R_i=1)}{\pi_i(\pa_\G(R_i);\theta_i)\vert_{\mathcal S_i^r=1}}\,
    f_k(\pa_\G(R_k))\,
    \big\{R_k-\pi_k(\pa_\G(R_k);\theta_k)\big\}\Big)
    \\
    & \vdots
    \\
    & = \E(f_k(\pa_\G(R_k))\,
    \big\{R_k-\pi_k(\pa_\G(R_k);\theta_k)\big\})=0,
\end{align*}}
where $\pi_{s_1}$ and $\theta_{s_1}$ denote the propensity score of $s_1$ and its indexing parameter.

The second equality follows by applying the tower rule, conditioning on $(X, R)$ with $s_1$ excluded in the inner expectation, and the third equality follows from the local Markov property. The omitted steps apply the same argument to the indicators ${s_2}$ through ${s_m}$.
\end{proof}

\bigskip
\begin{proof}[Proof of Theorem~\ref{thm:target-asymp} (estimating equations for functionals of the target law)]

The estimating equation for the target parameter $\theta$ is constructed as $P_n\Psi(X^*, R; \, \theta,\hat\theta_{\mathcal R})=0$, where
\begin{align*}
\Psi(X^*,R;\theta, \theta_{\mathcal R})
\;\coloneqq\;
\I(\mathcal R=1)\,
\Big\{
\prod_{R_i\in \mathcal R}
\pi_i(\pa_\G(R_i);\theta_i)\vert_{\mathcal S_i^r=1}
\Big\}^{-1}
M(\tilde{X};\theta),
\end{align*}
We establish the validity of this estimating equation by showing that $\E(\Psi(X^*, R; \, \theta,\theta_{\mathcal R}))=0$. Let $\sigma=(s_1,\cdots,s_l)$ be an ordering of $\mathcal R$ consistent with $\tau$.
{\small\begin{align*}
    &\E(\Psi(X^*, R; \, \theta,\theta_{\mathcal R}))
    \\
    &=\E\big(
    \I(\mathcal R=1)\,
    \Big\{
    \prod_{R_i\in \mathcal R}
    \pi_i(\pa_\G(R_i);\theta_i)\vert_{\mathcal S_i^r=1}
    \Big\}^{-1}
    M(\tilde{X};\theta)\big)
    \\
    &=\E\big(\frac{\E(\I({s_1}=1)\mid X,R\backslash {s_1})}{\pi_{s_1}(\pa_\G({s_1});\theta_{s_1})\vert_{\mathcal S^r_{s_1}=1}}
    \I(\mathcal R\backslash {s_1}=1)\,
    \Big\{
    \prod_{R_i\in \mathcal R\backslash {s_1}}
    \pi_i(\pa_\G(R_i);\theta_i)\vert_{\mathcal S_i^r=1}
    \Big\}^{-1}
    M(\tilde{X};\theta)\big)
    \\
    &=\E\big(\cancelto{1}{\frac{\E(\I({s_1}=1)\mid \pa_\G({s_1});\theta_{s_1})}{\pi_{s_1}(\pa_\G({s_1});\theta_{s_1})\vert_{\mathcal S^r_{s_1}=1}}}
    \I(\mathcal R\backslash {s_1}=1)\,
    \Big\{
    \prod_{R_i\in \mathcal R\backslash {s_1}}
    \pi_i(\pa_\G(R_i);\theta_i)\vert_{\mathcal S_i^r=1}
    \Big\}^{-1}
    M(\tilde{X};\theta)\big)
    \\
    & \vdots
    \\
    & = \E(M(\tilde{X};\theta))=0,
\end{align*}}
where $\mathcal R^r_{s_1}$ denotes the indicator-induced selection set of $s_1$.
The set $\mathcal R$ includes $\mathcal S^r_{s_1}$ by construction. Consequently, in the third equality, the numerator of the ratio functional, $\E(\I({s_1} = 1) \mid \pa_\mathcal{G}({s_1}); \theta_{s_1})$, is evaluated at $\mathcal{S}^r_{s_1}$ and therefore cancels with the denominator. The omitted steps apply the same argument to the indicators ${s_2}$ to ${s_l}$.
\end{proof}

\clearpage
\section{Simulation details} 
\label{app:sims} 

\subsection{Key identification concepts}\label{appsub:sim-id}
The mDAG in Figure~\ref{fig:sims_tree}(a) represents a MAR model, for which the target law is easily identified. For the remaining three mDAGs, we summarize the key definitions underlying the identification procedure in Appendix Tables~\ref{apptab:sim-tree2},~\ref{apptab:sim-tree3}, and~\ref{apptab:sim-tree4}.
\begin{table}[t]
\centering
\renewcommand{\arraystretch}{1.3}  
\caption{Key definitions used in the tree-based identification algorithm, illustrated using the mDAG in Fig.~\ref{fig:sims_tree}(b).}
\label{apptab:sim-tree2}
\begin{tabular}{lcccccccc}
\toprule
$R_k$ & \textbf{Propensity scores} & $\mathcal{S}^x_k$ & $\R^p_k$ & $\C^{\text{dir}}_{k,k}$ & $\ch_{\T_k}(R_k)$ & $\tilde{\mathcal{S}}_k$ & $\mathcal{S}^r_k$ & $\mathcal{S}_k$ \\
\midrule
$R_1$ & $p( R_1|R_2,R_3)$ & $\emptyset$ & $\emptyset$ & $\emptyset$ & $\emptyset$ & $\emptyset$ & $\emptyset$ & $\emptyset$
\\
$R_2$ & $p( R_2|X_1,X_3)$ & $\{R_1,R_3\}$ & $\{R_1\}$ & $\emptyset$ & $\{R_1\}$ & $\{R_1,R_3\}$ & $\emptyset$ & $\{R_1,R_3\}$
\\
$R_3$ & $p(R_3 |X_1,X_2 )$ & $\{R_1,R_2\}$ & $\{R_1\}$ & $\emptyset$ & $\{R_1\}$ & $\{R_1,R_2\}$ & $\emptyset$ & $\{R_1,R_2\}$
\\
\bottomrule
\end{tabular}
\end{table}
\begin{table}[t]
\centering
\renewcommand{\arraystretch}{1.3}  
\caption{Key definitions used in the tree-based identification algorithm, illustrated using the mDAG in Fig.~\ref{fig:sims_tree}(e).}
\label{apptab:sim-tree3}
\resizebox{\textwidth}{!}{\begin{tabular}{lcccccccc}
\toprule
$R_k$ & \textbf{Propensity scores} & $\mathcal{S}^x_k$ & $\R^p_k$ & $\C^{\text{dir}}_{k,k}$ & $\ch_{\T_k}(R_k)$ & $\tilde{\mathcal{S}}_k$ & $\mathcal{S}^r_k$ & $\mathcal{S}_k$ \\
\midrule
$R_1$ & $p(R_1 |X_4,R_2 )$ & $\{R_4\}$ & $\emptyset$ & $\emptyset$ & $\emptyset$ & $\{R_4\}$ & $\emptyset$ & $\{R_4\}$
\\
$R_2$ & $p( R_2|X_5,R_3)$ & $\{R_5\}$ & $\emptyset$ & $\emptyset$ & $\emptyset$ & $\{R_5\}$ & $\emptyset$ & $\{R_5\}$
\\
$R_3$ & $p(R_3 |R_4,R_5 )$ & $\emptyset$ & $\emptyset$ & $\emptyset$ & $\emptyset$ & $\emptyset$ & $\emptyset$ & $\emptyset$
\\
$R_4$ & $p(R_4 |X_1,R_5 )$ & $\{R_1\}$ & $\{R_1\}$ & $\{R_1\}$ & $\{R_2,R_3\}$ & $\{R_1,R_5\}$ & $\{R_5\}$ & $\{R_1,R_5\}$
\\
$R_4$ & ($R_4\rightarrow R_2$ pruned) & $\{R_1\}$ & $\{R_1\}$ & $\{R_1\}$ & $\{R_3\}$ & $\{R_1,R_5\}$ & $\emptyset$ & $\{R_1\}$
\\
$R_5$ & $p( R_5|X_2 )$ & $\{R_2\}$ & $\{R_2\}$ & $\{R_2\}$ & $\{R_1,R_3,R_4\}$ & $\{R_1,R_2,R_4\}$ & $\emptyset$ & $\{R_2\}$ 
\\
\bottomrule
\end{tabular}
}
\end{table}
\begin{table}[t]
\centering
\renewcommand{\arraystretch}{1.3}  
\caption{Key definitions used in the tree-based identification algorithm, illustrated using the mDAG in Fig.~\ref{fig:sims_tree}(g).}
\label{apptab:sim-tree4}
\resizebox{\textwidth}{!}{
\begin{tabular}{lcccccccc}
\toprule
$R_k$ & \textbf{Propensity scores} & $\mathcal{S}^x_k$ & $\R^p_k$ & $\C^{\text{dir}}_{k,k}$ & $\ch_{\T_k}(R_k)$ & $\tilde{\mathcal{S}}_k$ & $\mathcal{S}^r_k$ & $\mathcal{S}_k$ \\
\midrule
$R_1$ & $p(R_1 | R_2, R_3, R_4, R_5, R_6, R_7, R_8, R_9, R_{10})$ & $\emptyset$ & $\emptyset$ & $\emptyset$ & $\emptyset$ & $\emptyset$ & $\emptyset$ & $\emptyset$
\\
$R_2$ & $p(R_2 | X_1, R_3, R_4, R_5, R_6, R_7, R_8, R_9, R_{10})$ & $\{R_1\}$ & $\{R_1\}$ & $\emptyset$ & $\{R_1\}$ & $\{R_1\}$ & $\emptyset$ & $\{R_1\}$
\\
$R_3$ & $p(R_3 | X_1, X_2, R_4, R_5, R_6, R_7, R_8, R_9, R_{10})$ & $\{R_1,R_2\}$ & $\{R_1,R_2\}$ & $\emptyset$ & $\{R_1,R_2\}$ & $\{R_1,R_2\}$ & $\emptyset$ & $\{R_1,R_2\}$
\\
$R_4$ & $p(R_4 | X_2, X_3, R_5, R_6, R_7, R_8, R_9, R_{10})$ & $\{R_2,R_3\}$ & $\{R_2,R_3\}$ & $\emptyset$ & $\{R_1,R_2,R_3\}$ & $\{R_1,R_2,R_3\}$ & $\emptyset$ & $\{R_2,R_3\}$
\\
$\vdots$ &  &  &  &  &  &  &  & 
\\
$R_{10}$ & $p(R_{10} | X_8, X_9)$ & $\{R_8,R_9\}$ & $\{R_8,R_9\}$ & $\emptyset$ & $\{R_1,\cdots,R_9\}$ & $\{R_1,\cdots,R_9\}$ & $\emptyset$ & $\{R_8,R_9\}$ 
\\
\bottomrule
\end{tabular}
}
\end{table}

\subsection{Data generating processes for simulation study}\label{appsub:dgp}
\subsubsection{Task~1: mean estimation}\label{appsubsub:dgp-sim1}
\vspace{-1cm}

\begin{align*}
&(\G_1 \text{ shown in Appendix Figure~\ref{fig:sims_tree}(a)})  \\
& X_1 \sim \N(0,1), \ X_2 \sim \N(1 - X_1, 1), \ X_3  \sim \N(1 - 2X_2 + 3X_1, 1), \\
& R_2 \sim \mathrm{Binomial}(\expit(2 + X_1)), \ R_3  \sim \mathrm{Binomial}(\expit(1 + 0.5X_1)).
\end{align*}
\par\vspace{-1cm}
\noindent\rule{\linewidth}{0.6pt}
\par\vspace{-1cm}
\begin{align*}
&(\G_2 \text{ shown in Appendix Figure~\ref{fig:sims_tree}(c)})  \\
& X_1 \sim \N(1, 1), \ X_2 \sim \N(3 - 0.6 \lvert X_1 \rvert, 1), \\
& X_3 \sim \N(2 - X_2^2 + 4X_2 + 2X_1X_2, 1.5), \\
& R_1 \sim \mathrm{Binomial}(\expit(1 + R_2 + R_3)), \\
& R_2 \sim \mathrm{Binomial}(\expit(-0.5 X_1 + 0.15 X_3)), \\
& R_3 \sim \mathrm{Binomial}(\expit(3 + 0.5 X_1 - X_2)).
\end{align*}

\par\vspace{-1cm}
\noindent\rule{\linewidth}{0.6pt}
\par\vspace{-1cm}

\begin{align*}
&(\G_3 \text{ shown in Appendix Figure~\ref{fig:sims_tree}(e)})  \\
& X_1 \sim \N(1, 1), \\
& X_2 \sim \N(3 - 0.6 \lvert X_1 \rvert, 1), \\
& X_3 \sim \N(2 - X_2^2 + 4X_2 + 2X_1X_2, 1.5), \\
& X_5 \sim \N(2, 1), \\
& X_4 \sim \N(5X_5^3 - 5 \lvert X_3 \rvert X_5, 1), \\
& R_1 \sim \mathrm{Binomial}(\expit(1.2 + 0.01X_4 + 1.5R_2)), \\
& R_2 \sim \mathrm{Binomial}(\expit(-4 + X_5 + R_3)), \\
& R_3 \sim \mathrm{Binomial}(\expit(-0.8 + 2R_4 + 1.8R_5)), \\
& R_4 \sim \mathrm{Binomial}(\expit(0.3 + 1.5X_1 + 2R_5)), \\
& R_5 \sim \mathrm{Binomial}(\expit(0.8 + 1.5X_2)).
\end{align*}

\par\vspace{-1cm}
\noindent\rule{\linewidth}{0.6pt}
\par\vspace{-1cm}

{\small\begin{align*}
&(\G_4 \text{ shown in Appendix Figure~\ref{fig:sims_tree}(g)})  \\
& X_1 \sim \N(1, 1), \ X_2 \sim \N(3 - 0.6 \lvert X_1 \rvert, 1), \ X_3 \sim \N(2 - X_2^2 + 4X_2 + 2X_1X_2, 1.5), \\
& X_4 \sim \N(1 + X_3 - 0.5X_2, 1), \ X_5 \sim \N(1 + 0.9X_4 - 0.4X_3, 1), \ X_6 \sim \N(1 + 0.8X_5 - 0.3X_4, 1), \\
& X_7 \sim \N(1 + 0.7X_6 - 0.2X_5, 1), \ X_8 \sim \N(1 + 0.7X_7 - 0.2X_6, 1), \\
& X_9 \sim \N(1 + 0.7X_8 - 0.2X_7, 1), \ X_{10} \sim \N(1 + 0.7X_9 - 0.2X_8, 1), \\
& R_1 \sim \mathrm{Binomial}(\expit(-0.1 + 0.1R_2 - 0.1R_3 + 0.1R_4 - 0.1R_5 + 0.1R_6 - 0.1R_7 + 0.1R_8 - 0.1R_9 + 0.1R_{10})), \\
& R_2 \sim \mathrm{Binomial}(\expit(0.1 + 0.3X_1 - 0.1R_3 + 0.1R_4 - 0.1R_5 + 0.1R_6 - 0.1R_7 + 0.1R_8 - 0.1R_9 + 0.1R_{10})), \\
& R_3 \sim \mathrm{Binomial}(\expit(0.1 - 0.3X_2 + 0.3X_1 - 0.1R_4 + 0.1R_5 - 0.1R_6 + 0.1R_7 - 0.1R_8 + 0.1R_9 - 0.1R_{10})), \\
& R_4 \sim \mathrm{Binomial}(\expit(10 - X_3 + 0.2X_2 - 0.1R_5 + 0.1R_6 - 0.1R_7 + 0.1R_8 - 0.1R_9 + 0.1R_{10})), \\
& R_5 \sim \mathrm{Binomial}(\expit(10 - X_4 + 0.2X_3 - 0.1R_6 + 0.1R_7 - 0.1R_8 + 0.1R_9 - 0.1R_{10})), \\
& R_6 \sim \mathrm{Binomial}(\expit(2 - X_5 + X_4 - 0.1R_7 + 0.1R_8 - 0.1R_9 + 0.1R_{10})), \\
& R_7 \sim \mathrm{Binomial}(\expit(2 - X_6 + X_5 - 0.1R_8 + 0.1R_9 - 0.1R_{10})), \\
& R_8 \sim \mathrm{Binomial}(\expit(2 - 2X_7 + 2X_6 - 0.1R_9 + 0.1R_{10})), \\
& R_9 \sim \mathrm{Binomial}(\expit(2 - 2X_8 + 2X_7 - 0.1R_{10})), \\
& R_{10} \sim \mathrm{Binomial}(\expit(2 - X_9 + X_8)).
\end{align*}}

\subsubsection{Task~2: parametric regression}\label{appsubsub:dgp-sim2}
For the current task, the DGP from Subsection~\ref{appsubsub:dgp-sim1} is modified by setting the coefficients on $X_1$ to zero in the conditional distribution of $X_3$ given $X_1$ and $X_2$. Specifically, we have the following.
\begin{align*}
    & X_3  \sim \N(1 - 2X_2 + 0X_1, 1), \quad (\G_1 \text{ shown in Appendix Figure~\ref{fig:sims_tree}(a)}) \\
    & X_3 \sim \N(2 - X_2^2 + 0X_1X_2, 1.5), \quad (\G_2-\G_4 \text{ shown in Appendix Figure~\ref{fig:sims_tree}(c,e,g)})
\end{align*}

\subsubsection{Task~3: causal effect estimation}\label{appsubsub:dgp-sim3}
For the current task, the DGP from Subsection~\ref{appsubsub:dgp-sim1} is modified by generating $X_2$ as a binary variable, with details specified below.
\begin{align*}
& X_2 \sim \mathrm{Binomial}(\expit(1 - X_1)), \quad (\G_1 \text{ shown in Appendix Figure~\ref{fig:sims_tree}(a)}) \\
& X_2 \sim \mathrm{Binomial}(\expit(3 - 0.6 \lvert X_1 \rvert)), \quad (\G_2-\G_4 \text{ shown in Appendix Figure~\ref{fig:sims_tree}(c,e,g)}).
\end{align*}

\begin{figure}[!t]
	\begin{center}
		\scalebox{0.75}{
			\begin{tikzpicture}[>=stealth, node distance=1.7cm, 
				decoration = {snake, pre length=3pt,post length=7pt,},
				every text node part/.style={align=center}]
				\tikzstyle{format} = [thick, circle, minimum size=1.0mm, inner sep=0pt]
				\begin{scope}[xshift=-10cm]
					\path[->, thick]
					node[format] (x11) {$X_1$}
					node[format, right of=x11, xshift=0.5cm] (x21) {$X_2$}
					node[format, right of=x21, xshift=0.5cm] (x31) {$X_3$}
					
					node[format, below of=x21] (r2) {$R_2$}
					node[format, below of=x31] (r3) {$R_3$}

					node[format, below of=r2, yshift=0.25cm] (x2) {{$X^*_2$}}
					node[format, below of=r3, yshift=0.25cm] (x3) {{$X^*_3$}}
					
					(x11) edge[blue, bend left=0] (x21)
					(x21) edge[blue, bend left=0] (x31)
					(x11) edge[blue, bend left=20] (x31)

					(x11) edge[blue] (r2)
					(x11) edge[blue] (r3)
					
					(x21) edge[gray, gray, bend right=20] (x2)
					(x31) edge[gray, gray, bend left=20] (x3)
					
					(r2) edge[gray, gray] (x2)
					(r3) edge[gray, gray] (x3)
					node[format, below of=x2, yshift=0.8cm, xshift=1cm] (a) {(a) $\G_1$} ;
				\end{scope}
				\begin{scope}[xshift=3.5cm, yshift=-1cm]
					\path[->, thick]
					node[format] (id) {$R$}
					node[format, below right of=id, xshift=-0.75cm] (r3) {$R_3$ }
					node[format, below left of=id, xshift=0.75cm] (r2) {$R_2$}
					
					(id) edge[black, -] (r2)
					(id) edge[black, -] (r3)

					node[format, below of=id, yshift=-1.5cm, xshift=0.9cm] (b) {(b) ${\cal F}_1$ corresponding to $\G_1$} ;
					
					;
				\end{scope}

				\begin{scope}[xshift=-10cm, yshift=-6.5cm]
					\path[->, thick]
					node[format] (x11) {$X_1$}
					node[format, right of=x11, xshift=0.5cm] (x21) {$X_2$}
					node[format, right of=x21, xshift=0.5cm] (x31) {$X_3$}

                    node[format, below of=x11] (r1) {$R_1$}
					node[format, below of=x21] (r2) {$R_2$}
					node[format, below of=x31] (r3) {$R_3$}
					
					node[format, below of=r1, yshift=0.25cm] (x1) {{$X^*_1$}}
					node[format, below of=r2, yshift=0.25cm] (x2) {{$X^*_2$}}
					node[format, below of=r3, yshift=0.25cm] (x3) {{$X^*_3$}}
					
					(x11) edge[blue, bend left=0] (x21)
					(x21) edge[blue, bend left=0] (x31)
					(x11) edge[blue, bend left=20] (x31)

					(x11) edge[blue] (r2)
					(x31) edge[blue] (r2)
                    (x11) edge[blue] (r3)
					(x21) edge[blue] (r3)
                    (r2) edge[blue] (r1)
                    (r3) edge[blue, bend left=20] (r1)
					
					(x21) edge[gray, gray, bend right=20] (x2)
					(x31) edge[gray, gray, bend left=20] (x3)

                    (r1) edge[gray, gray] (x1)
					(r2) edge[gray, gray] (x2)
					(r3) edge[gray, gray] (x3)
					node[format, below of=x2, yshift=0.8cm, xshift=1cm] (a) {(c) $\G_2$} ;
				\end{scope}
				\begin{scope}[xshift=3.5cm, yshift=-6.5cm]
					\path[->, thick]
					node[format] (id) {$R$}
					node[format, below left of=id, xshift=0.75cm] (r1) {$R_1$}
					node[format, below right of=id, xshift=-0.75cm] (r2) {$R_2$}
					node[format, below right of=id, xshift=0.25cm] (r3) {$R_3$}

                    node[format, below left of=r2, xshift=1.25cm] (r2-r1) {$R_1$}
					node[format, below left of=r3, xshift=1.25cm] (r3-r1) {$R_1$}

                    (id) edge[black, -] (r1)
					(id) edge[black, -] (r2)
					(id) edge[black, -] (r3)

                    (r2) edge[blue] (r2-r1)
                    (r3) edge[blue] (r3-r1)

					node[format, below of=id, yshift=-2.3cm, xshift=0.9cm] (b) {(d) ${\cal F}_2$ corresponding to $\G_2$} ;
					
					;
				\end{scope}


					node[format, belo\begin{scope}[xshift=-10cm, yshift=-13cm]
					\path[->, thick]
					node[format] (x11) {$X_1$}
					node[format, right of=x11] (x21) {$X_2$}
					node[format, right of=x21] (x31) {$X_3$}
					node[format, right of=x31] (x41) {$X_4$}
					node[format, right of=x41] (x51) {$X_5$}
					
					node[format, below of=x11] (r1) {$R_1$}
					node[format, below of=x21] (r2) {$R_2$}
					node[format, below of=x31] (r3) {$R_3$}
					node[format, below of=x41] (r4) {$R_4$}
					node[format, below of=x51] (r5) {$R_5$}
					
					node[format, below of=r1] (x1s) {$X^*_1$}
					node[format, below of=r2] (x2s) {$X^*_2$}
					node[format, below of=r3] (x3s) {$X^*_3$}
					node[format, below of=r4] (x4s) {$X^*_4$}
					node[format, below of=r5] (x5s) {$X^*_5$}
					
					(x11) edge[blue] (x21)
					(x21) edge[blue] (x31)
					(x11) edge[blue, bend left] (x31)
                    (x31) edge[blue] (x41)
                    (x51) edge[blue] (x41)
					
					(x21) edge[blue] (r5)
					(x11) edge[blue] (r4)
					(x51) edge[blue] (r2)
					(x41) edge[blue] (r1)
					
					(r5) edge[blue] (r4)
					(r5) edge[blue, bend left] (r3)
                    (r4) edge[blue] (r3)
					(r3) edge[blue] (r2)
					(r2) edge[blue] (r1)
					
					(r1) edge[gray] (x1s)
					(x11) edge[gray, bend right=25] (x1s)
					(r2) edge[gray] (x2s)
					(x21) edge[gray, bend right=25] (x2s)
					(r3) edge[gray] (x3s)
					(x31) edge[gray, bend right=25] (x3s)
					
					(r4) edge[gray] (x4s)
					(x41) edge[gray, bend left=25] (x4s)
					(r5) edge[gray] (x5s)
					(x51) edge[gray, bend left=25] (x5s)
					
					node[format, below of=x3s, yshift=0.8cm, xshift=-0.1cm] (e) {(e) $\G_3$} ;
					;
				\end{scope}
				\begin{scope}[xshift=4.5cm, yshift=-12cm]
					\path[->, thick]
					node[format] (id) {$R$}
					node[format, below left of=id] (r3) {$R_3$ }
					node[format, left of=r3, xshift=0.5cm] (r2) {$R_2$}
					node[format, left of=r2, xshift=0.5cm] (r1) {$R_1$}
					node[format, right of=r3, xshift=-0.5cm] (r4) {$R_4$}
					node[format, right of=r4, xshift=0.cm] (r5) {$R_5$}
					
					node[format, below left of=r4, xshift=0.75cm] (r4-r2) {$R_2$}
                    node[format, below right of=r4, xshift=-0.75cm] (r4-r3) {$R_3$}
					
					node[format, below left of=r5, xshift=0.75cm] (r5-r1) {$R_1$}
					node[format, below right of=r5, xshift=-0.75cm] (r5-r3) {$R_3$}
					node[format, below right of=r5, xshift=0.25cm] (r5-r4) {$R_4$}
					
					node[format, below left of=r5-r4, xshift=0.75cm] (r5-r4-r2) {$R_2$}
                    node[format, below right of=r5-r4, xshift=-0.75cm] (r5-r4-r3) {$R_3$}

					(id) edge[black, -] (r1)
					(id) edge[black, -] (r2)
					(id) edge[black, -] (r3)
					(id) edge[black, -] (r4)
					(id) edge[black, -] (r5)
					
					(r4) edge[blue] (r4-r2)
                    (r4) edge[blue] (r4-r3)
					
					(r5) edge[blue] (r5-r4)
					(r5) edge[blue] (r5-r3)
					(r5) edge[blue] (r5-r1)
					
					(r5-r4) edge[red, decorate] (r5-r4-r2)
                    (r5-r4) edge[blue] (r5-r4-r3)
					
					node[format, below of=r5-r4, yshift=-1.2cm, xshift=-3cm] (f) {(f)  ${{\cal F}_3}$ corresponding to $\G_3$} ; 
					
					;
				\end{scope}
			

				\begin{scope}[xshift=-10cm, yshift=-20cm]
                    \path[->, thick, node distance=1.5cm]
                    node[format] (x11) {$X_1$}
                    node[format, right of=x11] (x21) {$X_2$}
                    node[format, right of=x21] (x31) {$X_3$}
                    node[format, right of=x31] (x41) {$X_4$}
                    node[red, right of=x41, xshift=-0.3cm] (xmid1) {$\cdots$}
                    node[format, right of=xmid1, xshift=-0.3cm] (x91) {$X_9$}
                    node[format, right of=x91] (x101) {$X_{10}$}
                    
                    node[format, below of=x11] (r1) {$R_1$}
                    node[format, below of=x21] (r2) {$R_2$}
                    node[format, below of=x31] (r3) {$R_3$}
                    node[format, below of=x41] (r4) {$R_4$}
                    node[red, below of=xmid1] (rmid) {$\cdots$}
                    node[format, below of=x91] (r9) {$R_9$}
                    node[format, below of=x101] (r10) {$R_{10}$}
                    
                    node[format, below of=r1] (x1s) {$X^*_1$}
                    node[format, below of=r2] (x2s) {$X^*_2$}
                    node[format, below of=r3] (x3s) {$X^*_3$}
                    node[format, below of=r4] (x4s) {$X^*_4$}
                    node[red, below of=rmid] (xmids) {$\cdots$}
                    node[format, below of=r9] (x9s) {$X^*_9$}
                    node[format, below of=r10] (x10s) {$X^*_{10}$}
                    
                    (x11) edge[blue] (x21)
                    (x11) edge[blue, bend left] (x31)
                    (x21) edge[blue] (x31)
                    (x21) edge[blue, bend left] (x41)
                    (x31) edge[blue] (x41)
                    (x31) edge[blue, bend left] (xmid1)
                    (x41) edge[blue] (xmid1)
                    (xmid1) edge[blue] (x91)
                    (x91) edge[blue] (x101)
                    
                    (x11) edge[blue] (r2)
                    (x11) edge[blue] (r3)
                    (x21) edge[blue] (r3)
                    (x21) edge[blue] (r4)
                    (x31) edge[blue] (r4)
                    (x31) edge[blue] (rmid)
                    (x91) edge[blue] (r10)
                    
                    (r2) edge[blue] (r1)
                    (r3) edge[blue, bend left=15] (r1)
                    (r4) edge[blue, bend left=30] (r1)
                    (rmid) edge[blue, bend left=30] (r1)

                    (r3) edge[blue] (r2)
                    (r4) edge[blue, bend left=15] (r2)
                    
                    (r4) edge[blue] (r3)
                    (rmid) edge[blue] (r4)
                    
                    (r10) edge[blue, bend left=15] (rmid)
                    (r10) edge[blue] (r9)
                    (r9) edge[blue] (rmid)
                    
                    (r1) edge[gray] (x1s)
                    (x11) edge[gray, bend right=25] (x1s)
                    (r2) edge[gray] (x2s)
                    (x21) edge[gray, bend right=25] (x2s)
                    (r3) edge[gray] (x3s)
                    (x31) edge[gray, bend right=25] (x3s)
                    (r4) edge[gray] (x4s)
                    (x41) edge[gray, bend right=25] (x4s)
                    (rmid) edge[gray] (xmids)
                    (xmid1) edge[gray, bend left=25] (xmids)
                    (r9) edge[gray] (x9s)
                    (x91) edge[gray, bend left=25] (x9s)
                    (r10) edge[gray] (x10s)
                    (x101) edge[gray, bend left=25] (x10s)
                    
                    node[format, below of=x2s, yshift=0.15cm, xshift=1.7cm] (g) {(g) $\G_4$} ;
                    ;
                \end{scope}
				\begin{scope}[xshift=4.5cm, yshift=-19cm]
					\path[->, thick]
					node[format] (id) {$R$}
					node[format, below left of=id] (r3) {$R_3$ }
					node[format, left of=r3, xshift=0.5cm] (r2) {$R_2$}
					node[format, left of=r2, xshift=0.5cm] (r1) {$R_1$}
					node[format, right of=r3, xshift=0.2cm] (r4) {$R_4$}
					node[red, right of=r4, xshift=0.2cm] (rmid) {$...$}
                    node[format, right of=rmid, xshift=0.2cm] (r10) {$R_{10}$}
					
					node[format, below left of=r2, xshift=1.25cm] (r2-r1) {$R_1$}
                    
                    node[format, below left of=r3, xshift=0.75cm] (r3-r1) {$R_1$}
                    node[format, below right of=r3, xshift=-0.75cm] (r3-r2) {$R_2$}
                    node[format, below left of=r3-r2, xshift=1.25cm] (r3-r2-r1) {$R_1$}

                    node[format, below left of=r4, xshift=0.5cm] (r4-r1) {$R_1$}
                    node[format, below left of=r4, xshift=1.25cm] (r4-r2) {$R_2$}
                    node[format, below right of=r4, xshift=-0.2cm] (r4-r3) {$R_3$}
                    node[format, below left of=r4-r2, xshift=1.25cm] (r4-r2-r1) {$R_1$}
                    node[format, below left of=r4-r3, xshift=0.9cm] (r4-r3-r1) {$R_1$}
                    node[format, below right of=r4-r3, xshift=-0.75cm] (r4-r3-r2) {$R_2$}
                    node[format, below left of=r4-r3-r2, xshift=1.25cm] (r4-r3-r2-r1) {$R_1$}

                    node[red, below left of=rmid, xshift=1.25cm] (rmid-rmid) {$\cdots$}

					node[format, below left of=r10, xshift=0.5cm] (r10-r1) {$R_1$}
                    node[red, below left of=r10, xshift=1.25cm] (r10-rmid) {$\cdots$}
                    node[format, below right of=r10, xshift=-0.5cm] (r10-r9) {$R_9$}
                    node[red, below left of=r10-r1, xshift=1.25cm] (r10-r1-rmid) {$\cdots$}
                    node[red, below left of=r10-r9, xshift=1.25cm] (r10-r9-rmid) {$\cdots$}
					
					(id) edge[black, -] (r1)
					(id) edge[black, -] (r2)
					(id) edge[black, -] (r3)
					(id) edge[black, -] (r4)
					(id) edge[black, -] (rmid)
                    (id) edge[black, -] (r10)
					
					(r2) edge[blue] (r2-r1)
                    (r3) edge[blue] (r3-r1)
                    (r3) edge[blue] (r3-r2)
                    (r3-r2) edge[blue] (r3-r2-r1)
                    (r4) edge[blue] (r4-r1)
                    (r4) edge[blue] (r4-r2)
                    (r4) edge[blue] (r4-r3)
                    (r4-r2) edge[blue] (r4-r2-r1)
                    (r4-r3) edge[blue] (r4-r3-r1)
                    (r4-r3) edge[blue] (r4-r3-r2)
                    (r4-r3-r2) edge[blue] (r4-r3-r2-r1)
                    (rmid) edge[blue] (rmid-rmid)
                    (r10) edge[blue] (r10-r1)
                    (r10) edge[blue] (r10-rmid)
                    (r10) edge[blue] (r10-r9)
                    (r10-r1) edge[blue] (r10-r1-rmid)
                    (r10-r9) edge[blue] (r10-r9-rmid)
					
					node[format, below of=r4-r2-r1, yshift=0cm, xshift=-0.45cm] (h) {(h)  ${\cal F}_4$ corresponding to $\G_4$} ; 
					
					;
				\end{scope}
			\end{tikzpicture}
		}
	\end{center}
	\vspace{-1.25cm}
	\caption{(a), (c), (d), and (e) show the mDAGs used in the simulation study in Section~\ref{sec:sims}. (a) corresponds to a MAR model, while (c), (e), and (g) correspond to MNAR models; the mDAG in (g) is a submodel of the permutation model in \citep{robins97non-a}. To save space, some variables and edges are omitted in (e). Specifically, each $X_i$ receives edges from $X_{i-1}$ and $X_{i-2}$, and each $R_i$ receives edges from $X_{i-1}$, $X_{i-2}$, and $R_j$ for all $j>i$. (b), (d), (f), and (h) are the corresponding visualizations of the constructed trees. In (h), the subtree corresponding to all $R_j$ with $j<i$ is attached under $R_i$, and no pruning is performed.}
	\label{fig:sims_tree}
\end{figure}

\begin{figure}[!t]
	\begin{center}
		\scalebox{0.72}{
			\begin{tikzpicture}[>=stealth, node distance=1.3cm, 
				decoration = {snake, pre length=3pt,post length=7pt,},
				every text node part/.style={align=center}]
				\tikzstyle{format} = [thick, circle, minimum size=1.0mm, inner sep=0pt]
				\begin{scope}[xshift=0cm]
					\path[->, thick]
					node[format] (x11) {$X_1$}
					node[format, right of=x11, xshift=0.5cm] (x21) {$X_2$}
					node[format, right of=x21, xshift=0.5cm] (x31) {$X_3$}
					node[format, below of=x11] (r1) {$R_1$}
					node[format, below of=x21] (r2) {$R_2$}
					node[format, below of=x31] (r3) {$R_3$}
					
					(x11) edge[blue] (r3)
					(x11) edge[blue] (r2)
                    (r3)  edge[blue] (r2)
                    (x31) edge[blue] (r1)
                    (r2)  edge[blue] (r1)
					(x11) edge[blue, dashed, -] (x21)
					(x31) edge[blue, dashed, -] (x21)
                    (x31) edge[blue, dashed, -, bend right=25] (x11)
					
					node[format, below of=r2, yshift=0.4cm] (a) {{\small $(a)$}}  ;
				\end{scope}

                \begin{scope}[xshift=6.cm, yshift=0cm]
					\path[->, thick]
					node[format] (x11) {$X_1$}
					node[format, right of=x11, xshift=0.5cm] (x21) {$X_2$}
					node[format, below of=x11] (r1) {$R_1$}
					node[format, below of=x21] (r2) {$R_2$}
					
					(r1) edge[blue] (r2)
					(x11) edge[blue] (r2)
					(x11) edge[blue, dashed, -] (x21)
					
					
					node[format, below of=r1, xshift=1.2cm, yshift=0.4cm] (b) {{\small $(b)$}}  ;
				\end{scope}

                \begin{scope}[xshift=10cm, yshift=0cm]
					\path[->, thick]
					node[format] (x11) {$X_1$}
					node[format, right of=x11, xshift=0.5cm] (x21) {$X_2$}
					node[format, right of=x21, xshift=0.5cm] (x31) {$X_3$}
                    node[format, right of=x31, xshift=0.5cm] (x41) {$X_4$}
                    node[format, right of=x41, xshift=0.5cm] (x51) {$X_5$}
					node[format, below of=x11] (r1) {$R_1$}
					node[format, below of=x21] (r2) {$R_2$}
					node[format, below of=x31] (r3) {$R_3$}
                    node[format, below of=x41] (r4) {$R_4$}
                    node[format, below of=x51] (r5) {$R_5$}
					
					(x11) edge[blue] (r3)
					(x31) edge[blue] (r4)
                    (x31) edge[blue] (r5)
                    (x51) edge[blue] (r2)
                    (r5)  edge[blue] (r4)
                    (r4)  edge[blue] (r3)
                    (r3)  edge[blue] (r2)
                    (r2)  edge[blue] (r1)
                    (r5)  edge[blue, bend left=25] (r3)
					(x11) edge[blue, dashed, -] (x21)
                    (x21) edge[blue, dashed, -] (x31)
                    (x31) edge[blue, dashed, -] (x41)
                    (x41) edge[blue, dashed, -] (x51)
                    (x31) edge[blue, dashed, -, bend right=25] (x11)
                    (x41) edge[blue, dashed, -, bend right=25] (x11)
                    (x51) edge[blue, dashed, -, bend right=25] (x11)
                    (x41) edge[blue, dashed, -, bend right=25] (x21)
                    (x51) edge[blue, dashed, -, bend right=25] (x21)
                    (x51) edge[blue, dashed, -, bend right=25] (x31)
					
					node[format, below of=r3, yshift=0.4cm] (c) {{\small $(c)$}}  ;
				\end{scope}

                \begin{scope}[xshift=5cm, yshift=-4cm, node distance=1.6cm]
					\path[->, thick]
					node[format] (id) {$R$}
					node[format, below of=id, yshift=0.5cm] (r3) {$R_3$ }
					node[format, left of=r3, xshift=0.5cm] (r2) {$R_2$}
					node[format, left of=r2, xshift=0.5cm] (r1) {$R_1$}
					node[format, right of=r3, xshift=0.5cm] (r4) {$R_4$}
					node[format, right of=r4, xshift=1cm] (r5) {$R_5$}

                    node[format, below left of=r3, xshift=0.6cm] (r3-r1) {$R_1$}
					node[format, below right of=r3, xshift=-0.6cm] (r3-r2) {$R_2$}
                    
					node[format, below left of=r4, xshift=0.5cm] (r4-r1) {$R_1$}
					node[format, below left of=r4, xshift=1.2cm] (r4-r2) {$R_2$}
					node[format, below right of=r4, xshift=-0.3cm] (r4-r3) {$R_3$}
                    node[format, below left of=r4-r3, xshift=0.6cm] (r4-r3-r1) {$R_1$}
                    node[format, below right of=r4-r3, xshift=-0.6cm] (r4-r3-r2) {$R_2$}
					
					node[format, below left of=r5, xshift=0.5cm] (r5-r1) {$R_1$}
					node[format, below right of=r5, xshift=-0.75cm] (r5-r3) {$R_3$}
					node[format, below right of=r5, xshift=0.85cm] (r5-r4) {$R_4$}
					
                    node[format, below left of=r5-r3, xshift=0.6cm] (r5-r3-r1) {$R_1$}
                    node[format, below right of=r5-r3, xshift=-0.6cm] (r5-r3-r2) {$R_2$}
                    
                    node[format, below left of=r5-r4, xshift=0.75cm] (r5-r4-r1) {$R_1$}
					node[format, below right of=r5-r4, xshift=-0.6cm] (r5-r4-r2) {$R_2$}
                    node[format, below right of=r5-r4, xshift=0.2cm] (r5-r4-r3) {$R_3$}
                    node[format, below left of=r5-r4-r3, xshift=0.6cm] (r5-r4-r3-r1) {$R_1$}
                    node[format, below right of=r5-r4-r3, xshift=-0.6cm] (r5-r4-r3-r2) {$R_2$}

					(id) edge[black, -] (r1)
					(id) edge[black, -] (r2)
					(id) edge[black, -] (r3)
					(id) edge[black, -] (r4)
					(id) edge[black, -] (r5)

                    (r3) edge[blue] (r3-r1)
                    (r3) edge[blue] (r3-r2)
					
					(r4) edge[blue] (r4-r1)
                    (r4) edge[blue] (r4-r2)
                    (r4) edge[blue] (r4-r3)
                    (r4-r3) edge[blue] (r4-r3-r1)
                    (r4-r3) edge[blue] (r4-r3-r2)

					(r5) edge[blue] (r5-r1)
                    (r5) edge[blue] (r5-r3)
                    (r5-r3) edge[blue] (r5-r3-r1)
                    (r5-r3) edge[red, decorate] (r5-r3-r2)
                    (r5) edge[blue] (r5-r4)
                    (r5-r4) edge[blue] (r5-r4-r1)
                    (r5-r4) edge[red, decorate] (r5-r4-r2)
                    (r5-r4) edge[blue] (r5-r4-r3)
                    (r5-r4-r3) edge[blue] (r5-r4-r3-r1)
                    (r5-r4-r3) edge[red, decorate] (r5-r4-r3-r2)
					
					node[format, below of=r4-r3-r1, yshift=-0.5cm, xshift=0.5cm] (d) {(d) Constructed forest corresponding to the mDAG in (c)} ; 
					
					;
				\end{scope}
			
			\end{tikzpicture}
		}
	\end{center}
	\vspace{0.5cm}
	\caption{Additional figures referenced in the main manuscript. }
	\label{fig:supp_figs}
\end{figure}
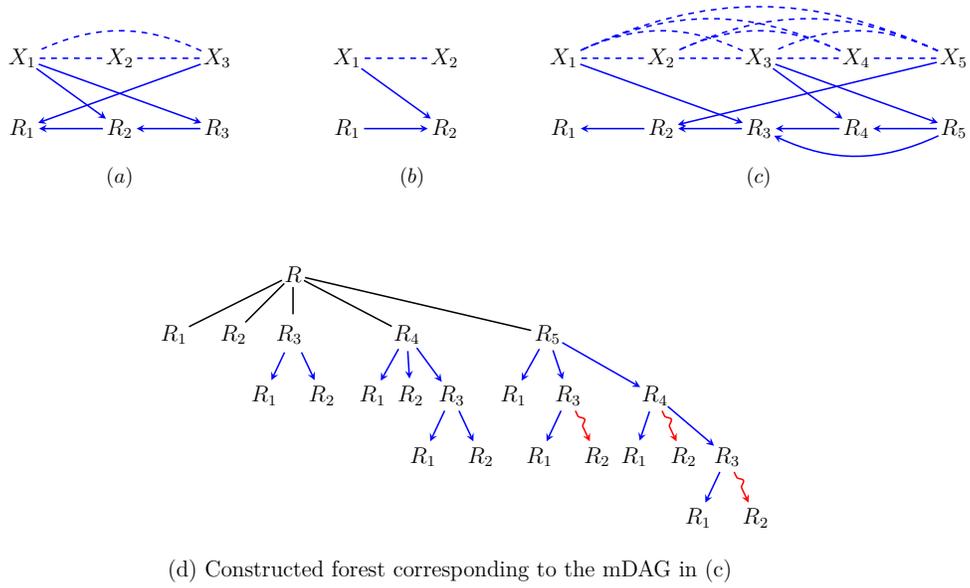

\subsection{Details of the estimating procedure}
\label{appsub:sim-estimation}

\subsubsection{Task 1: mean estimation.}
\label{appsub:sim-estimation-mean}
Estimation under the proposed method begins with estimating the propensity scores, with the fitted propensity score models collected in $\mathcal P$. The mean of $X_3$ is then computed as the empirical inverse-propensity-weighted mean of $X_3^*$, as shown in Equation~\eqref{appeq:sim-mean-est}, with $\theta_i$ retrieved from $\mathcal P$: 
\begin{align}\label{appeq:sim-mean-est}
    &P_n \big(\frac{\I(\mathcal R=1)}{\prod_{R_i\in\mathcal R}\pi_i(\pa_\G(R_i);\hat{\theta}_i)\vert_{\mathcal S_i^r=1}}\ X_3\big).
\end{align}
 For $\mathcal G_1$ through $\mathcal G_4$, the corresponding sets $\mathcal R$ are $\{R_3\}$, $\{R_1,R_2,R_3\}$, $\{R_3\}$, and $\{R_1,R_2,R_3\}$, respectively.

\subsubsection{Task 2: parametric regression.}
\label{appsub:sim-estimation-regression}
For each mDAG, regression coefficients are estimated under a correctly specified model. For $\G_1$, we consider $\E(X_3\mid X_1,X_2)=\beta_0+\beta_1 X_1+\beta_2 X_2$. For $\mathcal G_2$ through $\mathcal G_4$, we consider $\E(X_3\mid X_1,X_2)=\beta_0+\beta_1 X_1 X_2+\beta_2 X_2 + \beta_3 X_2^2$. The regression coefficients can be estimated either via estimating equations or via weighted regressions, with inverse propensity weights defined by the set $\mathcal R$. To be consistent with our implementation with the other three missing-data methods, where we performed regressions using complete or imputed data, we adopt the latter approach for the propsed method. The weighted regressions are fitted using observations with $\mathcal R=1$ and weight $\{
\prod_{R_i \in \mathcal R}
\pi_i\!\left(\pa_{\mathcal G}(R_i);\theta_i\right)
\big|_{\mathcal S_i^r = 1}
\}^{-1}$. For $\G_1$ through $\G_4$, the corresponding sets $\mathcal R$ are $\{R_2,R_3\},\,\{R_1,R_2,R_3\},\,\{R_1,R_2,R_3,R_4,R_5\}$, and $\{R_1,R_2,R_3\}$, respectively.

\textbf{On the unbiasedness of complete-case analysis for mDAGs $\G_3$ and $\G_4$.} 
For $\G_3$, the d-separation criterion implies that $X_3 \perp X_1 \mid X_2, R_1, R_2, R_3$. As a result, complete-case analysis yields unbiased estimation of the regression coefficients.

For $\G_4$, valid regression analysis is conducted under the post-intervention distribution $p_{\mathbb T_4}$, obtained by interventions $R_1=R_2=R_3=1$. Since the regression involves only $X_1$ through $X_3$, we further marginalize $p_{\mathbb T_4}$ over the irrelevant variables $X_4$ through $X_{10}$ and $R_4$ through $R_{10}$. We now show step by step how $p_{\mathbb T_4}$ is derived. First, intervening on $R_1$ yields $p_{\mathbb T_2}$.
\begin{align*}
    p_{\T_2}(X,R) \propto \I(R_1=1) \ \pi^{-1}_1 p(X_1^*,X_2,\cdots,X_{10},R).
\end{align*}
Under $p_{\mathbb T_2}$, $\pi_2$ is identified and can therefore be divided out from $p_{\mathbb T_2}$ to obtain $p_{\mathbb T_3}$:
\begin{align*}
    p_{\T_3}(X,R) \propto \I(R_1=1,R_2=1) \ \pi^{-1}_1 \ \pi^{-1}_2 \ p(X_1^*,X^*_2,\cdots,X_{10},R). 
\end{align*}
A similar argument applies to $R_3$, yielding $p_{\mathbb T_4}$, under which valid regression analysis is performed:
\begin{align*}
    p_{\T_4}(X,R) \propto \I(R_1=1,R_2=1,R_3=1) \ \pi^{-1}_1 \ \pi^{-1}_2 \ \pi^{-1}_3 \ p(X_1^*,X^*_2,X^*_3\cdots,X_{10},R). 
\end{align*}
The distribution $p_{\mathbb T_4}$, marginalized over irrelevant variables, is proportional to the complete-data distribution $\I(R_1=1,R_2=1,R_3=1)\,p(X_1^*,X_2^*,X_3^*)$. As a result, for this mDAG, complete-case analysis yields unbiased estimates of the regression coefficients.

The same argument fails for $\G_2$ because $\pi_3$ is not identified in the post-intervention distribution obtained by intervening on $R_1$ and $R_2$.

\subsubsection{Task 3: causal effect estimation.}
\label{appsub:sim-estimation-causal}
With $\mathcal G_1$, we illustrate the estimation of $\E(X_3^{x_2})$ for $x_2\in\{0,1\}$. The same procedure applies to the other mDAGs. We first fit a weighted regression as described in Subsection~\ref{appsub:sim-estimation-regression}. We then form predictions $\hat{\E}(X_3\mid X_1,x_2)=\hat{\beta}_0+\hat{\beta}_1 X_1+\hat{\beta}_2 x_2$ on observations with $\mathcal R=1$. The target parameter $\E(X_3^{x_2})$ is estimated as the empirical inverse-propensity-weighted mean of $\hat{\E}(X_3\mid X_1,x_2)$, using the same weights as those used to fit the regression.

\end{document}